\newcommand{\R}{\mathfrak{R}}
\newcommand{\D}{\mathfrak{D}}
\renewcommand{\L}{\mathfrak{L}}
 \journalname{ }
\begin{document}

\title{Gene tree species tree reconciliation with gene conversion \thanks{This work is funded by the Agence Nationale pour la Recherche, Ancestrome project ANR-10-BINF-01-01.}
}


\author{Damir Hasi\'{c}         \and
        Eric Tannier 
}


\institute{Damir Hasi\'{c} \at
              Department of Mathematics, Faculty of Science, University of Sarajevo, 71000 Sarajevo, Bosnia and Herzegovina \\
              \email{damir.hasic@gmail.com, d.hasic@pmf.unsa.ba}           
           \and
           Eric Tannier \at
              Inria Grenoble Rh\^one-Alpes, F-38334 Montbonnot, France\\
              Univ Lyon, Universit\'e Lyon 1, CNRS, Laboratoire de Biom\'etrie et Biologie \'Evolutive UMR5558, F-69622 Villeurbanne, France             
}

\date{ }

\maketitle
    \begin{abstract}
    	Gene tree/species tree reconciliation is a recent decisive progress in phylogenetic methods, accounting for the possible differences between gene histories and species histories. Reconciliation consists in explaining these differences by gene-scale events such as duplication, loss, transfer, which translates mathematically into a mapping between gene tree nodes and species tree nodes or branches.
Gene conversion is a frequent and important biological event, which results in the replacement of a gene by a copy of another from the same species and in the same gene tree. Including this event in reconciliations has never been attempted because this changes as well the solutions as the methods to construct reconciliations. Standard algorithms based on dynamic programming become ineffective. 
We propose here a novel mathematical framework including gene conversion as an evolutionary event in gene tree/species tree reconciliation. We describe a randomized algorithm giving in polynomial running time a reconciliation minimizing the number of duplications, losses and conversions. We show that the space of reconciliations includes an analog of the Last Common Ancestor reconciliation, but is not limited to it. Our algorithm outputs any optimal reconciliation with non null probability.
We argue that this study opens a research avenue on including gene conversion in reconciliation, which can be important for biology.

    	\keywords{phylogenetic reconciliation \and
    		gene conversion \and 
    		gene duplication \and 
    		gene loss\and 
    		random algorithms \and 
    		all optimal solutions   		
    	}
    	 \subclass{92D15 \and 05C90 \and 92-08 \and 68W40}
    \end{abstract}

	\section{Introduction}

\subsection{Biological motivation}

Due to various evolutionary events on a gene level, gene trees (trees used to describe the evolution of genes) and species trees (trees used to describe the evolution of species) are often not identical. Identifying these evolutionary events, such as speciation, duplication, transfer, conversion, transfer with replacement, and their positioning inside species tree is called {\em phylogenetic reconciliation}.

	Tree reconciliation techniques become widely used in biology. For example they are used in testing hypotheses of horizontal transfer in some Bacterial and Archaeal species \citep{Planet2003}; studying parasites infecting tropheine cichlids \citep{Vanhove2015}; finding horizontal gene transfers of RH50 among prokaryotes \citep{pmid28049420}. Reconciliation tools \citep{Szllsi23102012, doi:10.1093/sysbio/syt054, doi:10.1093/sysbio/syt003} are also used to explore the process of shaping gut microbiomes \citep{Groussin2017}. 
	In \cite{pmid15713731, pmid11836216, pmid17346331} reconciliations are used "for inferring orthology relationships" \citep{Doyon2011}, and in \cite{pmid15306658, Searls2003} "for identifying orthologs for use in function prediction, gene annotation, planning experiments in model organisms, and identifying drug targets" \citep{Vernot2008}. 
	From \cite{pmid21238340, JBI:JBI1315} we can see that "reconciliation can also be used to study co-evolution between parasites and their hosts (parasitology), and between organisms and their living areas (biogeography)" \citep{Doyon2011}.

An evolutionary event of particular interest in this paper is {\em gene conversion}. It is a highly important genomic event for evolution and health \citep{Chen2007}. It results in the replacement of a gene in a genome by another homologous gene from the same genome, where homologous means that they have a common ancestor. It has largely contributed to shaping extant eukaryotic genomes and is involved in several known human genetic diseases \citep{Ko2011}.

However, gene conversion is nearly absent from the mathematical framework for phylogeny. Phylogenetic methods can handle base substitutions, indels \citep{Felsenstein2004}, genome rearrangements \citep{Hu2014}, duplications, transfers and losses of genes \citep{Szllosi2015} or population scale events as incomplete lineage sorting \citep{Mirarab2014}. But the detection of gene conversion is still done with empirical examinations of gene trees combined with various genomic features \citep{Hsu2010,Mansai2010}.

This absence of gene conversion can strongly bias evolutionary studies. Indeed, it introduces a discordance between the history of a gene and the history of a locus \citep{Rasmussen2012} which stays unresolved. It makes the confusion between duplications and conversions \citep{Boussau2013a}, whereas conversions are probably more frequent \citep{Kejnovsky2007}.

\subsection{Mathematical and computational aspects of the problem}

With $V(T)$ we denote the set of all nodes, and $L(T)$ is the set of all leaves of a tree $T$. We assume that a gene tree $G$ and a species tree $S$ are given, as well as a mapping $\phi:L(G)\rightarrow L(S)$ that places extant genes into extant species. 

The problem is to find a mapping $\rho:V(G)\rightarrow V(T)$ that optimizes some objective function. How to determine $\rho$ depends on a model that describes a problem of reconciliation. The model includes the set of allowed evolutionary events (speciation is usually always included) and the objective function, which is usually the likelihood of a reconciliation (maximization problem) or the weight of a reconciliation (minimization problem). The weight of a reconciliation, which is the sum of costs of all evolutionary events in a reconciliation, is a sort of measure of dissimilarity between $G$ and $S$.         

In this paper, the objective function is the weight of a reconciliation. Conversions are modeled as a pair of duplication and loss. Since we are pairing gene losses with gene duplications, there is a need to introduce {\em lost subtrees}, {\em i.e.} subtrees of the gene tree that were not given in the input. This means that, in order to obtain an optimal solution, we need to extend given gene tree $G$, and this extension we denote by $G'$. Because of pairing losses with duplications, we obtain that disjoint subtrees of $G$ are not independent anymore. The loss of independence and the need to extend the given gene tree are things that make the problem harder than the usual duplication/loss reconciliation.

\subsection{A review of some previous results}

The first model of reconciliation to mention is the one with duplications, speciations and losses. A natural way to form a reconciliation, in this model, is to position every node from the gene tree as low as possible inside the species tree. This type of reconciliation is called the {\em Last Common Ancestor} (LCA). LCA minimizes the number of duplications and losses \citep{Gorecki2006}, the number of duplications \citep{Gorecki2006}, and the number of losses \citep{Chauve2009, Chauve2008}. LCA is the only reconciliation that minimizes duplications and losses \citep{Gorecki2006}. These reconciliations can be found in linear time. There is a polynomial algorithm in \cite{Vernot2008} that finds the minimum number of duplications even when $S$ is polytomous. 
  The problem of reconciliation between a polytomous gene tree and a binary species tree minimizing the number of mutations (duplications + losses) is polynomial \citep{Chang2006, Lafond2012}. In \cite{Zheng2017}, $O(|G|+|S|)$ algorithms for reconciling a nonbinary gene tree and a binary species tree in the duplication, loss, mutation, and deep coalescence models are given.

A biologically important and mathematically much studied evolutionary event is {\em gene transfer}. Models that include duplications, losses, and transfer are called {\em DTL models}.
When the transfers are included, then time constraints are introduced, because direct gene transfer can happen only between species that exist in the same moment. There are two ways of considering time constraints in reconciliations. One is to use an undated species tree but imposing a consistency between found transfers. This variant has been proved to be NP-hard in \cite{Hallett2004} (while without time consistency it is solvable in time $O(m^2n)$, where $m$ is the number of extant species and $n$ is the number of extant genes). Another is to use a fully dated species tree as an input, that is, there is a total order on the internal nodes. In that case a reconciliation algorithm with duplications, transfers and losses is given in \cite{Doyon2010} with time complexity $\Theta(m^2n)$. In \cite{Chan2015} the space of all reconciliations is explored and formula for its size is given. Discrete and continuous cases for DTL model are equivalent \citep{Ranwez2016}. In \cite{CHAN20171}, duplications, transfers, losses, and incomplete lineage sorting are included in the model and the FPT (fixed-parameter-tractable) algorithm for the most parsimonious reconciliation is given. If a gene that is transfered replaces another gene, then we have {\em transfer with replacement}, which is to transfer what conversion is to duplication (see \cite{TRonly2017} for NP-hardness proof, and FPT algorithm)
For a more detailed review on reconciliations see \cite{Szllosi2015}, \cite{Nakhleh2012}, and \cite{Doyon2011}.

\subsection{The contribution of this paper}

Gene conversion can be modeled in the gene tree/species tree reconciliation framework. It consists in coupling a duplication (the donor sequence) and a loss (the receiver sequence). It is usually not included in reconciliation models because the usual algorithmic toolbox of gene tree/species tree reconciliation, based on dynamic programming assuming a statistical independence between lineages, does not allow to couple events from different lineages. 

Our contribution is to explore the algorithmic possibilities of introducing conversion in reconciliations. We formally define a reconciliation with duplications, losses and conversions. We define the algorithmic problem of computing, given a gene tree and a species tree, a reconciliation minimizing a linear combination of the number of events of each type. We fully solve the problem in the particular case when all events are equally weighted. More precisely, we construct an algorithm which gives, in polynomial running time, an optimal solution, and we prove that any optimal solution can be output by the algorithm with a non null probability. The algorithm can be used as a polynomial delay enumeration of the whole space of solutions.

The space of solutions is non trivial. In contrast with the duplication and loss only reconciliations, solutions are not unique, they are not all given by the standard Last Common Ancestor (LCA) technique. Moreover, easy examples show that the LCA technique does not give the optimal solution if events are weighted differently. This opens a wide range of new open algorithmic problems related to gene tree/species tree reconciliations.

The paper is organized as follows. Section \ref{sec:definitions} introduces a gene tree/species tree reconciliation including gene conversion events, and states the relations with the classical duplication loss reconciliation. Section \ref{sec:LCA} is devoted to the presentation of an algorithm to find one optimal solution, which is called an LCA completion. In Section \ref{sec:all}, we give an algorithm to find all optimal solutions, by the definition of a class of optimal solutions called zero-flow, containing but not limited to LCA completions. We prove that an algorithm finding all zero-flow reconciliations is sufficient to access the whole solution space, and we write such an algorithm. In Section \ref{sec:algorithm} we complete the proof that the presented algorithm always gives an optimal solution, and that every optimal solution can be output with a non null probability.


\section{Reconciliations with Duplication, Loss, Conversion}\label{sec:definitions}

In this section we define the mathematical problem modeling the presence of gene conversion in gene tree species tree reconciliations. We start with the definition of the standard duplication and loss model, and then add the possibility of conversions.


\subsection{Duplication-Loss reconciliations}

Let us begin with some generalities about phylogenetic trees.
All phylogenetic trees are binary rooted trees where the root node has degree 1, and its incident edge is called the \emph{root edge}. The root edge of $T$ is denoted by $root_E(T)$, and the root node by $root(T)$. If $x$ is a node in a tree, then $L(x)$ denotes the set of leaves of the maximal subtree rooted at $x$. If $x\in V(T)\backslash L(T)$ then $x_r,x_l$ denote the two children of $x$. Similarly, we can define the children $e_r,e_l$ of an edge $e$. If $x$ is a leaf or an edge incident to a leaf, then their children are NULL and $f(NULL)=0$ for any function/procedure which returns some value.
If $x$ is a node/edge in a rooted tree $T$, then $p_T(x)=p(x)$ denotes its parent. Let $e=(x,p(x))$ be an edge, then $T(e)$ denotes the maximal rooted subtree with root edge $e$. If $x$ is on the path from $y$ to $root(T)$ then we say that $x$ is an {\em ancestor} of $y$, or that $y$ is a {\em descendant} of $x$, and we write $y\le_T x$ or $y\le x$, defining a partial order on the nodes. If $x$ is neither ancestor nor descendant of $y$, we say that $x$ and $y$ are \emph{incomparable}.
Let $x$ and $y$ be comparable nodes in a rooted tree $T$, then with $d_T(x,y)$ or $d(x,y)$ we denote the distance, {\em i.e.} the number of edges in the path between $x$ and $y$.
For a partially ordered set $A$, we use {\em minimal} to denote an element $m$ such that $x\le m \implies x=m$, $\forall x\in A$. 
We use this terminology for the partial order defined by rooted trees. For example,  if $V'$ is a subset of nodes of a tree, their {\em Last Common Ancestor} (LCA) is the minimal node which is an ancestor of all nodes in $V'$. We also use it for partial orders defined by inclusion on sets or by subtrees in trees. In particular we can use it for the partial order defined by the {\em extension} relation.

\begin{definition}[Extension]
	A tree $G'$ is said to be an {\em extension} of a gene tree $G$ if $G$ can be obtained from $G'$ by pruning some subtrees and suppressing nodes of degree 2.    
\end{definition}

We define the gene tree species tree duplication loss (DL) reconciliation.
We suppose we have two trees $G$ and $S$, respectively called the {\em gene tree} and the {\em species tree}. Nodes of $G$ ($S$) are called {\em genes} ({\em species}). A mapping $\phi:L(G) \rightarrow L(S)$ indicates the species in which genes are found in the data. Without loss of generality we suppose that $\phi$ verifies that the last common ancestor of all the leaves of $S$ that are in the image of $\phi$ is the node adjacent to the root node (recall the root node has degree 1).
The reconciliation is based on a function $\rho$, which is an extension of $\phi$ to all genes and species, including internal nodes.

\begin{definition}[Consistency]\label{def:consis}
	A function $\rho:V(G')\rightarrow V(S)$ on the nodes of a tree $G'$ is said to be {\em consistent} with a species tree S if $\rho(root(G'))=root(S)$ and for every $x\in V(G')\backslash L(G')$ one of the conditions holds (D) $\rho(x)=\rho(x_l)=\rho(x_r)$ or (S) $\rho(x)_l=\rho(x_l)$ and $\rho(x)_r=\rho(x_r)$. We also say that $G'$ is $\rho$-consistent with $S$.   
\end{definition}

Obviously, both conditions $(D)$ and $(S)$ cannot hold for a single node. 

\begin{definition}[DL reconciliation]
	Let $G$ and $S$ be a gene and a species trees and $\phi:L(G) \rightarrow L(S)$. A {\em DL reconciliation} between $G$ and $S$ is a 5-tuple $(G,G',S, \phi, \rho)$ such that $G'$ is an extension of $G$, $G'$ is $\rho$-consistent with $S$, and $\rho/L(G)=\phi$. 
\end{definition}
Note that we allow some extant species not to have genes. The definition is equivalent to the standard ones \cite{Arvestad2004, Gorecki2006, Chauve2009}, although they can present some variations between them. For example we do no impose that losses are  represented by subtrees extended to the leaves of $S$ (which is the case for example in \cite{Chauve2009}), because of the particular use we make of loss subtrees in the sequel. An example of DL reconciliation is given in Figure \ref{fig:reconciliation_examples} (a).

\begin{figure}[H]
	\centering
	\includegraphics{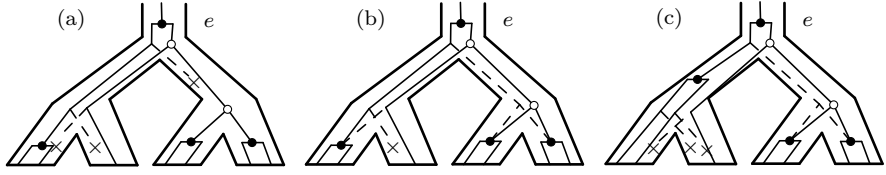}
	\caption{Examples of reconciliations. The gene tree is depicted inside the species tree to signify the mapping $\rho$. Duplication nodes are black circles, speciation nodes are white circles, losses are leaves with crosses and conversions are duplication nodes which are also a leaf of the lost subtrees, which are dashed. (a) An LCA reconciliation. Total cost: $3l+4d=7$. (b) An LCA completion, obtained from LCA by extending losses and assigning them to duplications. Total cost: $l+d+3c=5$. (c) A non-optimal reconciliation. Total cost: $3l+2d+2c=7$}
    \label{fig:reconciliation_examples}
\end{figure}

\begin{definition}[Duplication]
	Let $\R=(G,G',S,\phi,\rho)$ be a DL reconciliation and $x\in V(G')\backslash L(G')$ satisfies condition (D). Then $x$ is called a {\em duplication}. The set of all duplications is denoted by $\Delta=\Delta(\R)$.   
\end{definition}

\begin{definition}[Speciation]
	Let $\R=(G,G',S,\phi,\rho)$ be a DL reconciliation and $x\in V(G')\backslash L(G')$ satisfies condition (S). Then $x$ is called a {\em speciation}. The set of all speciations is denoted by $\Sigma=\Sigma(\R)$.   
\end{definition}

\begin{definition}[Loss]
	Let $\R=(G,G',S,\phi,\rho)$ be a DL reconciliation and $x\in L(G')\backslash L(G)$. Then $x$ is called a {\em loss}. The set of all losses is denoted by $\Lambda=\Lambda(\R)$.    
\end{definition}

We say that a duplication, loss or speciation $x$ is {\em assigned} to $s$ if $\rho(x)=s$. Let $\L(s,\R)=\L(s)=|\rho^{-1}(s)\cap \Lambda(\R)|$ and $\D(s,\R)=\D(s)=|\rho^{-1}(s)\cap \Delta(\R)|$ be the number of losses and the number of duplications assigned to $s\in V(S)$ in the reconciliation $\R$. If $e=(s,p(s))\in E(S)$, then $\L(e,\R)=\L(e)=\L(s,\R)$ and $\D(e,\R)=\D(e)=\D(s,\R)$.

The next definition extends the notion of loss. 
\begin{definition}[Lost subtree]
	Let $\R=(G,G',S,\phi,\rho)$ be a DL reconciliation. A maximal subtree $T$ of $G'$ such that $V(T)\cap V(G) =\emptyset$ is called a {\em lost subtree}.
\end{definition}

The next lemma introduces the standard Last Common Ancestor reconciliation, and its proof can be found in \cite{Chauve2009} or \cite{Chauve2008}.

\begin{lemma}\label{lem:LCA_def}
	Let $G$ and $S$ be a gene and a species tree, and $\phi:L(G)\rightarrow L(S)$. There exists a DL reconciliation $\R=(G,G',S, \phi, \rho)$ such that $\rho(x)$ is the root of the minimal subtree of $S$ containing $L(\phi(x))$, $\forall x\in V(G)$.  
\end{lemma}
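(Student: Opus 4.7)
The plan is to construct $(G',\rho)$ explicitly. First, I would set $\rho(x)=\mathrm{LCA}_S(\phi(L(x)))$ for every $x\in V(G)$; this agrees with $\phi$ on leaves. The key recursive identity is $\rho(x)=\mathrm{LCA}_S(\rho(x_l),\rho(x_r))$ for each internal $x\in V(G)$, from which minimality of the LCA forces exactly one of two configurations at $x$: either $\rho(x_l)$ and $\rho(x_r)$ lie in the two distinct subtrees rooted at the children of $\rho(x)$ (a \emph{speciation configuration}), or at least one of $\rho(x_l),\rho(x_r)$ equals $\rho(x)$ (a \emph{duplication configuration}).

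Next, I would build $G'$ by splicing intermediate nodes and loss subtrees along each edge of $G$ so that every internal node of $G'$ satisfies condition (D) or (S) of Definition~\ref{def:consis}. Processing top-down, on the edge from a parent $y$ to a child $x$ in $G$: if $y$ is in a speciation configuration on $x$'s side and $\rho(x)$ is a strict descendant of the corresponding child $s^*$ of $\rho(y)$, I subdivide the edge into a chain $v_0=y,v_1,\dots,v_k=x$ whose species values trace the path in $S$ from $s^*$ down to $\rho(x)$, with each new $v_i$ receiving a loss leaf as its other child mapped to the off-path sibling in $S$, so that each $v_i$ satisfies (S). If $y$ is in a duplication configuration with $\rho(y)=\rho(y_l)$ but $\rho(y_r)<\rho(y)$ strictly, I first insert a node $w$ immediately below $y$ with $\rho(w)=\rho(y)$ and a loss-leaf sibling, making $y$ satisfy (D); I then continue from $w$ toward $y_r$ using the speciation-chain construction above. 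The root edge is handled analogously: by the standing assumption on $\phi$, $\rho(root(G))$ is the node of $S$ adjacent to $root(S)$, so inserting a single speciation node above $root(G)$ with a loss sibling makes $\rho(root(G'))=root(S)$.

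Finally, I would verify the three defining conditions of a DL reconciliation. Extension holds because pruning all inserted loss subtrees from $G'$ and suppressing the resulting degree-$2$ nodes recovers $G$; $\rho$-consistency holds by construction of the chains; and $\rho|_{L(G)}=\phi$ is immediate. The main obstacle is the bookkeeping in the duplication configuration, where the duplication at $y$ must first be realized by inserting $w$ before descending via speciation nodes toward $y_r$; once this case split is handled carefully, the rest is a straightforward induction on the height of $G$.
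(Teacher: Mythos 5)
Your construction is correct in substance, and it is worth noting that the paper itself does not prove this lemma at all: it simply points to \cite{Chauve2009} and \cite{Chauve2008}. Your argument is essentially the standard one from that literature, and it also mirrors the paper's own Lemma~\ref{lem:positions}: setting $\rho(x)=\mathrm{LCA}_S(\phi(L(x)))$, using $\rho(x)=\mathrm{LCA}_S(\rho(x_l),\rho(x_r))$ to split each internal node into a speciation or duplication configuration, and then splicing speciation chains with one loss leaf per skipped species node (plus the extra node $w$ mapped to $\rho(y)$ when exactly one child stays at $\rho(y)$) is precisely the case analysis of that lemma specialized to $\rho'=\rho_{lca}$. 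So what your proof buys is a self-contained verification that the paper outsources; it does not diverge from the intended route.

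Two small points need repair, though neither threatens the approach. First, in the duplication case the loss leaf cannot be a \emph{sibling} of $w$: the node $y$ already has $y_l$ (with $\rho(y_l)=\rho(y)$) and $w$ as its two children in $G'$, so the loss leaf must be the second child of $w$, mapped to the child of $\rho(y)$ off the path to $\rho(y_r)$; then $w$ satisfies (S) and $y$ satisfies (D). Your phrasing suggests this is what you mean, but as written it would give $y$ three children. Second, the root: under the paper's conventions $root(S)$ has degree one, so the node you insert above $root(G)$ and map to $root(S)$ has no second species to which a loss sibling could be assigned --- and none is needed, since $\rho(root(G))$ is the unique child of $root(S)$ by the standing assumption on $\phi$; the inserted node simply has $root(G)$ as its only child (this is also where Definition~\ref{def:consis} must be read with the degree-one-root convention). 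With these two adjustments the extension, $\rho$-consistency and $\rho|_{L(G)}=\phi$ checks go through as you describe.
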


\begin{definition}[LCA reconciliation]
	The DL reconciliation from Lemma $\ref{lem:LCA_def}$ that minimizes $|\Lambda(\R)|$ is called the {\em Last Common Ancestor (LCA)} reconciliation
    and is noted $\R_{lca}=(G,G'_{lca},S, \phi, \rho_{lca})$.
\end{definition}

Note that the LCA reconciliation is the unique reconciliation minimizing the number of duplications, or the number of losses, or any linear combination of these two numbers \cite{Chauve2009}. In Section \ref{sec:LCA} we will construct equivalents of the LCA reconciliation including conversions, called LCA completions, which will have the property of minimizing the sum of the number of duplications, losses and conversions. However in contrast it is not unique, it does not contain all optimal solutions (as we show it in Section \ref{sec:all}) and does not optimize over any linear combinations of these numbers (see the conclusion for such an example).


\subsection{Duplication-Loss-Conversion reconciliations}

In the next definition we introduce an additional event, called {\em gene conversion}, which is a function $\delta$ pairing some losses and duplications. This models the replacement of a gene by a copy of another one from the same family.
\begin{definition}[Conversion]
	Let $(G,G',S,\phi,\rho)$ be a DL reconciliation. Let $\delta: \Delta \rightarrow \Lambda $ be an injective partial function such that $\rho(x)=\rho(\delta(x))$ for all $x\in \delta^{-1}(\Lambda)$. If $x\in \delta^{-1}(\Lambda)$, then $x$ is called a {\em conversion}, and $\delta(x)$ is its associate loss. The set of all conversions is denoted by $\Delta'$ and the set of associate losses by $\Lambda'$. The 6-tuple $(G,G',S,\phi,\rho, \delta)$ is called a {\em DLC reconciliation}. 
\end{definition}
We see that every DL reconciliation is also a DLC reconciliation with $\Delta' = \emptyset$.
From now on, \emph{reconciliation} stands for \emph{DLC reconciliation}. Examples of DLC reconciliations are drawn on Figure \ref{fig:reconciliation_examples}.

The following properties are equivalents of standard properties of DL reconciliations \cite{Chauve2008}, which have to be checked in the DLC case.

\begin{lemma}
	Let $\R=(G,G',S,\phi,\rho,\delta)$ be a reconciliation, $x,y\in V(G')$ and $x < y$. Then $\rho(x) \le \rho(y)$.
\end{lemma}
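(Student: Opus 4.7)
The plan is a straightforward induction on the distance $d_{G'}(x,y)$. A first observation is that the conversion function $\delta$ does not appear in the statement, so the claim is really about the underlying DL reconciliation; the only tool I need is the $\rho$-consistency of $G'$ with $S$ from Definition \ref{def:consis}.

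For the base case $d_{G'}(x,y)=1$, the relation $x<y$ means $x$ is a child of $y$. Since $y$ has a child, $y\in V(G')\setminus L(G')$, so exactly one of the consistency conditions (D) or (S) holds at $y$. If (D), then $\rho(x)=\rho(y)$, and in particular $\rho(x)\le\rho(y)$. If (S), then $\rho(x)\in\{\rho(y)_l,\rho(y)_r\}$, so $\rho(x)$ is a child of $\rho(y)$ in $S$, and again $\rho(x)\le\rho(y)$.

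For the inductive step, I would take the unique ancestor chain in $G'$ from $x$ up to $y$, namely $x=z_0<z_1<\cdots<z_k=y$ where each $z_i$ is the parent of $z_{i-1}$. Applying the base case to each consecutive pair yields $\rho(z_{i-1})\le\rho(z_i)$, and since the ancestor order on $S$ is transitive, chaining these inequalities gives $\rho(x)\le\rho(y)$.

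I do not expect a real obstacle. The only subtlety is that (D) and (S) are mutually exclusive (as remarked just after Definition \ref{def:consis}) and jointly exhaustive on internal nodes of $G'$, which is what makes the base-case analysis complete; the fact that $\delta$ plays no role in the argument explains why this lemma transfers verbatim from the classical DL setting.
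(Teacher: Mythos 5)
Your proof is correct and follows essentially the same route as the paper: both take the ancestor chain $x=z_0<z_1<\cdots<z_k=y$ in $G'$, use $\rho$-consistency (conditions (D) or (S)) at each step to get $\rho(z_{i-1})\le\rho(z_i)$, and conclude by transitivity. The explicit induction framing and the remark that $\delta$ plays no role are harmless elaborations, not a different argument.
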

\begin{proof}
	If $x<y$, then we have $x_1,...,x_k \in V(G')$ so that $x=x_0<x_1<x_2<...<x_k<x_{k+1}=y$, and $x_i$ is a child of $x_{i+1}$. From  Definition \ref{def:consis}, we have that $(D)$ or $(S)$ holds, {\em i.e.} $\rho(x) \le \rho(p(x))$, therefore $\rho(x)\le \rho(x_1)\le \rho(x_2) \le ...\le \rho(x_k) \le \rho(y)$.  \qed
\end{proof}
 
\begin{lemma}\label{lem:lowest_is_spec}
	Let $\R=(G,G',S,\phi,\rho, \delta)$ be a reconciliation, $s\in V(S)\backslash L(S)$, $x\in V(G')\backslash L(G')$ such that $\rho(x)=s$. Then $x\in \Sigma(\R)$ if and only if $x$ is a minimal element of $\rho^{-1}(s)$.
\end{lemma}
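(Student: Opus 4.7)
The plan is to prove both directions of the biconditional by combining the local dichotomy (D) vs (S) supplied by consistency (Definition \ref{def:consis}) with the monotonicity of $\rho$ along descent proved in the previous lemma. The hypotheses $s \notin L(S)$ and $x \notin L(G')$ ensure that both $s_l, s_r$ and $x_l, x_r$ are well-defined, which is what makes the (S) condition meaningful.

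For the forward direction, I assume $x \in \Sigma(\R)$, so condition (S) gives $\rho(x_l) = s_l$ and $\rho(x_r) = s_r$. Since $s$ is internal in $S$, its children are strict descendants, so $\rho(x_l) < s$ and $\rho(x_r) < s$. For any $y \in V(G')$ with $y < x$, the node $y$ is a descendant of $x_l$ or of $x_r$; the monotonicity lemma then yields $\rho(y) \le \rho(x_l) < s$ or $\rho(y) \le \rho(x_r) < s$, so $\rho(y) \ne s$. Hence no proper descendant of $x$ belongs to $\rho^{-1}(s)$, which is exactly what minimality of $x$ in $\rho^{-1}(s)$ means under the partial order used in the paper (where $y \le x$ means $y$ is a descendant of $x$, so minimal elements are the lowest ones).

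For the reverse direction, I assume $x$ is minimal in $\rho^{-1}(s)$. Since $x \notin L(G')$, the children $x_l, x_r$ exist and are proper descendants of $x$, so by minimality $\rho(x_l) \ne s$ and $\rho(x_r) \ne s$. This immediately excludes condition (D), which would force $\rho(x_l) = \rho(x_r) = \rho(x) = s$. Because consistency requires exactly one of (D) or (S) to hold at an internal node of $G'$, condition (S) must hold, i.e.\ $x \in \Sigma(\R)$.

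No genuine obstacle arises: the proof is essentially a two-line check in each direction. The one thing to be careful about is the sign convention of \emph{minimal} in the paper's ordering (leaves are the smallest), so that "minimal in $\rho^{-1}(s)$" really means "no proper descendant of $x$ maps to $s$" rather than the opposite; this is what matches the (S) picture in which the image strictly decreases when passing from $x$ to its children.
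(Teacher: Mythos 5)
Your proof is correct and follows essentially the same route as the paper's: both directions rest on the (D)/(S) dichotomy from consistency together with the monotonicity of $\rho$ along descent, with your version merely phrased directly where the paper argues by contradiction. Your explicit remark about the orientation of the partial order (minimal = lowest) matches the paper's convention, so no gap remains.
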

\begin{proof}
	Let $x$ be a minimal element of $\rho^ {-1}(s)$. Assume the opposite, then $x\in \Delta(\R)$. Let $x_l,x_r$ be the children of $x$ in $G'$, hence $x_l<x$, $x_r<x$ and $\rho(x)= \rho(x_l) = \rho(x_l) = s$, which contradicts the minimality of $x$. 
	
	Let $x\in \Sigma(\R)$. Assume the opposite, that $x$ is not a minimal element of $\rho^{-1}(s)$. Let $x'<x$, $\rho(x')=s$. Then $x'\le x_l$ or $x'\le x_r$. Let $x'\le x_l$, hence $\rho(x')\le \rho(x_l)\le \rho(x)$. Therefore $\rho(x)=\rho(x_l)$, which  contradicts  $x\in \Sigma(\R)$. \qed
\end{proof}

Next lemma states that we cannot have two comparable speciations assigned to the same node from $V(S)$. 
\begin{lemma}\label{lem:no_two_speciat}
	Let $\R=(G,G',S,\phi,\rho, \delta)$ be a reconciliation and $x,y\in V(G')$, $x<y$, $\rho(x)=\rho(y)$. Then  $y\in \Delta(\R)$.
\end{lemma}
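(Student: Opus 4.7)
The plan is a short proof by contradiction built on the previous monotonicity lemma and Lemma \ref{lem:lowest_is_spec}. Since $y$ has a strict descendant $x$, it is an internal node of $G'$, so Definition \ref{def:consis} forces either condition (D) or condition (S) at $y$. The goal is to eliminate (S), which will leave (D), i.e. $y \in \Delta(\R)$.

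First I would observe the set-up: assume for contradiction that $y \in \Sigma(\R)$. Then by (S) we have $\rho(y_l) = \rho(y)_l$ and $\rho(y_r) = \rho(y)_r$, which are two distinct strict descendants of $\rho(y)$ in $S$. Because $x < y$, the node $x$ lies in the subtree rooted at $y_l$ or at $y_r$; without loss of generality $x \le y_l$. Apply the preceding monotonicity lemma (if $u \le v$ in $G'$ then $\rho(u) \le \rho(v)$ in $S$) to get
\[
\rho(x) \;\le\; \rho(y_l) \;=\; \rho(y)_l \;<\; \rho(y).
\]
This contradicts the hypothesis $\rho(x) = \rho(y)$. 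Hence (S) cannot hold at $y$, and therefore $y \in \Delta(\R)$.

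The only mildly delicate step is handling the boundary case $x = y_l$ (or $x = y_r$): here $\rho(x) = \rho(y_l) = \rho(y)_l$ is still a strict child of $\rho(y)$, so the contradiction with $\rho(x) = \rho(y)$ is immediate. No real obstacle is expected; the argument is essentially the contrapositive of Lemma \ref{lem:lowest_is_spec} applied to the chain of ancestors from $x$ up to $y$, and fits in a few lines.
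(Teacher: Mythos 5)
Your proof is correct and takes essentially the same route as the paper: the paper disposes of the lemma in one line by citing Lemma~\ref{lem:lowest_is_spec} (a speciation is a minimal element of $\rho^{-1}(s)$), and your argument simply inlines the proof of the relevant direction of that lemma, descending from $y$ to a child $y_l$ or $y_r$ above $x$ and using the monotonicity of $\rho$ to contradict $\rho(x)=\rho(y)$. Nothing is missing; the boundary case you mention is handled exactly as you say.
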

\begin{proof}
	Follows directly from Lemma \ref{lem:lowest_is_spec}.  \qed
\end{proof}

\begin{lemma}\label{lem:comparable}
	Let $\R_1=(G,G'_1,S,\phi,\rho_1, \delta_1)$ and $\R_2=(G,G'_2,S,\phi,\rho_2, \delta_2)$ be reconciliations, and $x\in V(G)$. Then $\rho_1(x)$ and $\rho_2(x)$ are comparable.
\end{lemma}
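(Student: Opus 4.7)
The plan is to establish the stronger statement that for every $x\in V(G)$, both $\rho_1(x)$ and $\rho_2(x)$ are ancestors in $S$ of $\rho_{lca}(x)$. The lemma then follows from the elementary observation that any two ancestors of a common node in a rooted tree lie on the (totally ordered) path from that node to $root(S)$, and are therefore comparable.

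First I would observe that the extension relation preserves the ancestor order: since each $G'_i$ reduces to $G$ by pruning subtrees and suppressing degree-2 nodes, and neither operation affects the ancestor--descendant relation among the retained vertices, every leaf $\ell\in L(x)\subseteq L(G)$ still satisfies $\ell\le_{G'_i} x$ when viewed inside $G'_i$. Combined with $\rho_i(\ell)=\phi(\ell)$ (because $\rho_i$ restricted to $L(G)$ equals $\phi$) and with the monotonicity statement proved just above (if $u<v$ in $V(G'_i)$ then $\rho_i(u)\le_S \rho_i(v)$), this gives $\phi(\ell)=\rho_i(\ell)\le_S \rho_i(x)$ for every such $\ell$.

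Consequently $\rho_i(x)$ is a common ancestor in $S$ of every image $\phi(\ell)$ with $\ell\in L(x)$, so it is an ancestor of the root of the minimal subtree of $S$ containing $\phi(L(x))$; by Lemma \ref{lem:LCA_def} this root is exactly $\rho_{lca}(x)$. Applying this for both $i=1$ and $i=2$ places $\rho_1(x)$ and $\rho_2(x)$ on the path from $\rho_{lca}(x)$ to $root(S)$, yielding the desired comparability.

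The only point that warrants care is that the argument must apply to DLC reconciliations rather than only DL ones, but this is immediate: the conversion pairing $\delta$ is extra data laid on top of $(G,G'_i,S,\phi,\rho_i)$ and does not modify conditions (D) or (S), so the monotonicity lemma applies verbatim and the LCA lower bound still holds. No induction on the height of $x$ is needed; the argument is direct, and the main obstacle is really just the bookkeeping of checking that leaves of $G$ keep the correct ancestor relation inside each extended tree $G'_i$.
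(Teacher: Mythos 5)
Your proof is correct, and its engine is the same as the paper's: pick a leaf $\ell\in L(x)$, note $\ell\le x$ still holds in each extension $G'_i$, and use the monotonicity lemma together with $\rho_i/L(G)=\phi$ to get $\phi(\ell)\le_S\rho_i(x)$ for $i=1,2$. The difference is in the packaging. The paper argues by contradiction and never mentions the LCA reconciliation: if $\rho_1(x)$ and $\rho_2(x)$ were incomparable, the subtrees $T(\rho_1(x))$ and $T(\rho_2(x))$ would have disjoint leaf sets, yet $\phi(\ell)$ lies in both. You instead argue directly that each $\rho_i(x)$, being a common ancestor of $\phi(L(x))$, is an ancestor of $\rho_{lca}(x)$ (using Lemma \ref{lem:LCA_def} and reading $\rho_{lca}(x)$ as the LCA of $\phi(L(x))$), so both images lie on the chain from $\rho_{lca}(x)$ to $root(S)$. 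This buys you slightly more: your intermediate claim is essentially Lemmas \ref{lem:notunder} and \ref{lem:lca->root(S)}, which the paper only derives afterwards (from this lemma), and there is no circularity since you rely only on material preceding Lemma \ref{lem:comparable}. On the other hand the detour through $\rho_{lca}$ is not needed for comparability itself: once $\phi(\ell)\le\rho_1(x)$ and $\phi(\ell)\le\rho_2(x)$ for a single leaf $\ell$, the two images are ancestors of a common node and hence comparable, which is exactly the content of the paper's shorter contradiction argument.
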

\begin{proof}
	Assume the opposite, {\em i.e.}  $\rho_1(x)$ and $\rho_2(x)$ are incomparable. Then $T(\rho_1(x))$ and $T(\rho_2(x))$ are disjoint, and in particular $L(\rho_1(x))\cap L(\rho_2(x))=\emptyset$. Let $l\in L(x)$. Then $l\le x$, therefore $\phi(l)=\rho_1(l) \le \rho_1(x)$ and $\phi(l)=\rho_2(l) \le \rho_2(x)$, hence $\phi(l)\in L(\rho_1(x))$ and $\phi(l)\in L(\rho_2(x))$, a contradiction.  \qed  
\end{proof}

\begin{definition}[The cost/weight of a reconciliation]
	Let $\R=(G,G',S,\phi,\rho, \delta)$ be a reconciliation, $d,l,c\in \mathbb{N}$ weights associated with duplication, loss and conversion. The cost (or weight) of $\R$ is given by $$\omega(\R)=l\cdot|\Lambda\backslash\Lambda'|+d\cdot|\Delta \backslash \Delta'|+ c\cdot|\Delta'|.$$   
\end{definition}

Examples of computations of this cost are given on Figure \ref{fig:reconciliation_examples}.
As we can see, losses from $\Lambda'$ are not counted as losses in the formula, so we call them \emph{free losses}. If a lost subtree has only free losses then it is called a {\em free subtree}.

\begin{definition}[Minimum/optimal reconciliation]
	Let $\R=(G,G',S,\phi,\rho, \delta)$ be a reconciliation that minimizes $\omega(\R)$, for given $G$, $S$, and $\phi$. Then it is called {\em minimum} (or {\em optimal}) reconciliation.
\end{definition}

In the sequel we give an algorithm that is able to output all optimal reconciliations for $d=l=c$, so unless specified, we assume from now, and without loss of generality, that they are all equal to 1. We come back to the general case in the conclusion, stating open problems.

\subsection{Completions and minimizations of reconciliations}

Recall that any DL reconciliation is a DLC reconciliation by definition. However an optimal DL reconciliation is not an optimal DLC reconciliation. Completions and minimizations are operations on reconciliations that help constructing nonetheless a relation between optimal DL and DLC reconciliations.

\begin{definition}[Loss extension]
Let $\R=(G,G',S,\phi,\rho, \delta)$ be a reconciliation. The reconciliation $\R'=(G,G'',S,\phi,\rho',\delta')$ is said to be obtained from $\R$ by {\em loss extension} if $G''$ is an extension of $G'$, $\rho=\rho'/V(G')$, $\R$ and $\R'$ have the same number of lost subtrees.
\end{definition}

\begin{definition}[Completion]
Let $\R$ be a reconciliation, and $\R'$ is a reconciliation with minimum weight among all reconciliations obtained from $\R$ by extending some losses. Then $\R'$ is called a \emph{completion} of $\R$.   
\end{definition}

It is obvious, by definition, that an optimal reconciliation is a completion, {\em i.e} a completion of a reconciliation $\R$ has always a lower or equal cost than $\R$ itself.
The set of all completions of $\R$ is denoted by $c(\R)$. When useful, $c(\R)$ can also be used to denote one arbitrary completion if it is clear that any completion works. For example the cost of a completion can be written $\omega(c(\R))$ since by definition they all have the same cost.

The converse of a completion is a {\em minimization}. It is based on the following definition and lemma.

\begin{definition}[Minimal reconciliation]  
	A reconciliation $\R=(G,G',S,\phi,\rho, \delta)$ is called {\em minimal} if there does not exist $G''$ such that $G'$ is a proper extension of $G''$, $G''$ is an extension of $G$, and $G''$ is $\rho''$-consistent, where $\rho''=\rho/V(G'')$.      
\end{definition}

An example of minimal reconciliation is the LCA reconciliation. The next lemma shows how to construct a minimal reconciliation from any reconciliation.

\begin{lemma}\label{lem:positions}
	Let $G$ and $S$ be a gene and a species tree,  and  $\rho':V(G)\rightarrow V(S)$ such that 
	\begin{itemize}
		\item $\rho'(x)=\phi(x)$, $\forall x\in L(G)$,
		\item $x<y \implies \rho'(y)\le \rho'(y)$,
		\item $\rho'(x)$ belongs to the path from $\rho_{lca}(x)$ to $root(S)$.
	\end{itemize}
	Then there exists a unique (up to $\delta$) minimal reconciliation $\R=(G,G',S,\phi,\rho, \delta)$ such that $\rho/V(G)=\rho'$.  
\end{lemma}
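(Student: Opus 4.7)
The plan is to construct $\R$ explicitly and then separately verify $\rho$-consistency, minimality, and uniqueness up to $\delta$. For the construction I would process each edge $(x,p(x))$ of $G$: by the monotonicity hypothesis there is a descending path $s=t_0\ge t_1\ge\ldots\ge t_k=s'$ in $S$ with $s=\rho'(p(x))$ and $s'=\rho'(x)$, and along this edge I insert a chain of new nodes in bijection with the intermediate $t_i$'s, declaring each to be a speciation whose ``main'' child continues downward toward $x$ and whose other child is a single loss leaf mapped to the sibling species branch. The type of $p(x)$ itself is forced by its children's images: if $\rho'(x_l)$ and $\rho'(x_r)$ lie in two different child-subtrees of $\rho'(p(x))$ in $S$, then $p(x)$ can and must be a speciation in $G'$; otherwise $p(x)$ is a duplication, and the inserted speciation chains start one level below on each of the two outgoing edges. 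This yields $G'$ as an extension of $G$ with $\rho$ defined on all of $V(G')$, and the local check of Definition \ref{def:consis} at every internal node is then immediate from the construction.

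For minimality I would argue by contradiction. A proper intermediate extension $G''$ with $G'$ a proper extension of $G''$ and $\rho''$ still consistent would arise from pruning some lost subtree of $G'$ and suppressing the resulting degree-two nodes. Tracking the $\rho$-labels at the surviving nodes shows that the parent of the suppressed speciation in $G''$ then fails both condition (S) and condition (D), because its surviving descendant inherits a $\rho$-value strictly below what either condition would require. For uniqueness, let $\R_2=(G,G'_2,S,\phi,\rho_2,\delta_2)$ be any other minimal reconciliation with $\rho_2/V(G)=\rho'$. Walking down each edge of $G$ inside $G'_2$, the $\rho_2$-values of the inserted nodes must form a weakly decreasing chain from $s$ to $s'$, and $\rho_2$-consistency forces at each inserted node either the speciation-with-single-loss pattern of my construction or an extra duplication; the latter is excluded by minimality. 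The types of the nodes of $V(G)$ themselves are pinned down by the same dichotomy at the children's images, so $G'_2$ coincides with $G'$ as a $\rho$-labelled tree, and the only residual freedom lies in which losses are paired with which duplications, i.e.\ in the partial injection $\delta$.

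I expect the main difficulty to be the case where $\rho'(x_l)$ and $\rho'(x_r)$ lie in the same child-subtree of $\rho'(p(x))$: this forces $p(x)$ to be a duplication and pushes the speciation transition onto the two outgoing edges, and it requires careful counting to confirm that this configuration really is the minimum rather than, say, making $p(x)$ a speciation with a larger lost subtree hanging off the ``empty'' side. The same case is also the most delicate one in the uniqueness argument, since it is there that the apparent local freedom between speciation and duplication at $p(x)$ must be shown to collapse to a single choice under minimality.
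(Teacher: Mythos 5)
Your proposal is correct and takes essentially the same route as the paper: the paper's three cases are exactly your dichotomy at each internal node of $G$ (duplication forced when the children's images do not separate into the two child subtrees of $\rho'(p(x))$, speciation otherwise) together with the forced chain of speciation-plus-single-loss insertions along each gene edge, followed by a verification that these insertions alone are $\rho$-consistent, from which minimality and uniqueness follow. Two small points for the write-up: in the uniqueness step your dichotomy should also exclude an inserted speciation whose loss-side child roots a lost subtree with more than one node (minimality rules this out, since pruning both children of that subtree's root preserves consistency), and the ``same child-subtree'' case you flag as delicate needs no counting at all, because the $G'$-child of $p(x)$ whose subtree contains $x_r$ cannot map into the empty side, so consistency alone forces the duplication.
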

\begin{proof}
	Assume that there exists a reconciliation $\R_1=(G,G'_1,S,\phi,\rho_1, \delta_1)$ such that $\rho_1/V(G)=\rho'$. Let $x\in V(G)$ with children $x_l,x_r$ (in $G$). In the next three cases we show how to construct $G'$.

Case 1, $\rho_1(x_l)=\rho_1(x)$ and $\rho_1(x_r) < \rho_1(x)$. In that case $x\notin \Sigma(\R_1)$, hence $x\in \Delta(\R_1)$. Therefore $\exists x'\in V(G'_1)$ such that $x'$ is the right child of $x$ and $\rho_1(x')=\rho_1(x)$. Since $x_r<x'<x$, $x'$ is not a leaf and it has the left subtree. Therefore $\exists x''\in V(G'_1)$ such that $x''$ is a descendant of $x'$ and $\rho_1(x'')=\rho_1(x')_l$. We have a similar situation for the case $\rho_1(x_r)=\rho_1(x)$ and $\rho_1(x_l)<\rho_1(x)$.

Case 2,  $e=(s,p(s))\in E(S)$, $s\in V(S)$ and $\rho_1(p_G(x))>s$ and $\rho_1(x)<s$. We will prove that there exists a node $x_1\in V(G'_1)$ such that $\rho(x_1)=s$ and $x<x_1<p_G(x)$. Let $x'$ be a minimal node of $V(G'_1)$ such that $x'\le p_G(x)$ and $\rho(x')>s$. From Lemma \ref{lem:lowest_is_spec}, we have $x'\in \Sigma(\R_1)$. Therefore it has children $x'_l$,$x'_r$ (in $G'_1$) such that $\rho_1(x'_l)<\rho_1(x')$ and $\rho_1(x'_r)<\rho_1(x')$. From the properties of $x'$, we get that one of the children maps to $s$. Let $\rho(x'_r)=s$, and we need to insert an additional child for $x'_r$, since $x'_r$ cannot be a leaf.  

Case 3, $\rho_1(x_l)\le \rho_1(x)_l$ and $\rho_1(x_r)\le \rho_1(x)_l$. Let $x'$ be a child of $x$ in $G'_1$. Therefore $x'$ is comparable to $x_l$ or $x_r$, and $\rho_1(x')$ is comparable to $\rho_1(x_l)$ or $\rho_1(x_r)$, hence $\rho_1(x')$ is comparable to $\rho_1(x)_l$. Next, $\rho_1(x')$ is incomparable to $\rho_1(x)_r$, hence $x\notin \Sigma(\R_1)$ and $x\in \Delta(\R_1)$. If $x'_l,x'_r$ are the children of $x$ in $G'_1$, then $\rho_1(x'_l)=\rho_1(x'_r)=\rho_1(x)$. This means that we need to insert $x'_l,x'_r$ and additional children for $x'_l,x'_r$.   

Insertions, described in the previous three cases, are for any reconciliation $\R_1$. Let us prove that they are enough to form a reconciliation. From this will follow minimization and uniqueness. 

Let us form $G'$ and $\rho$ in a way described in the previous three cases. We need to prove that $G'$ is $\rho$-consistent. Let $x\in V(G')\backslash L(G')$ and $x_l,x_r$ are the children of $x$ in $G'$. We will prove that $x$ satisfies condition $(D)$ or $(S)$ from Definition \ref{def:consis}. If $\rho(x)=\rho(x_l)=\rho(x_r)$, then condition $(D)$ is satisfied. Now assume that condition $(D)$ is not satisfied, i.e. $\rho(x)\ne \rho(x_l)$ or $\rho(x)\ne \rho(x_r)$. Take $\rho(x_r) < \rho(x)$. From the Case 2, we get $\rho(x_r) = \rho(x)_r$. We are left to prove $\rho(x_l) = \rho(x)_l$. Assume the opposite, let $\rho(x_l)= \rho(x)_r$ or $\rho(x_l)=\rho(x)$. From Case 3 and the definition of duplication, we get that $x$ is a duplication, this contradicts our assumption that $\rho(x_l) \ne \rho(x)_l$. \qed
\end{proof}

The unique minimal reconciliation obtained from a reconciliation is called its {\em minimization}. In the next section we prove that minimization and completion are complementary operations, that is, an optimal reconciliation is always the completion of its minimization. This will lead to the important result that completions of the LCA reconciliations are optimal.


\section{A family of optimal reconciliations: LCA reconciliations}\label{sec:LCA}

In this section we provide a polynomial running time algorithm which finds an LCA completion, and prove that it is an optimal reconciliation. We present a more general algorithm, which finds a completion of any reconciliation. To this aim we present the important notion of {\em flow}, constantly used all along the paper. This settles the complexity of the defined problem when the weights $d,l,c$ are all equal. However the algorithm described here does not find all LCA completions, and moreover the space of optimal reconciliations is not limited to LCA completions. Finding all solutions will be the subject of next section.
Here we begin by stating general properties of reconciliations and optimal reconciliations, showing that they all share some important properties with LCA reconciliations.


\subsection{Similarities of any reconciliation with the LCA reconciliation}

Some properties of the LCA reconciliation are shared by all reconciliations.

\begin{lemma}\label{lem:notunder}
	Let $\R=(G,G',S,\phi,\rho, \delta)$ be a reconciliation, and $x\in V(G)$. Then $\rho(x)$ is not lower than $\rho_{lca}(x)$.  
\end{lemma}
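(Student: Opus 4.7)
The plan is to reduce this to the monotonicity lemma already established (the lemma stating that if $x < y$ then $\rho(x) \le \rho(y)$), together with the characterization of $\rho_{lca}(x)$ as the Last Common Ancestor of $\{\phi(l) : l \in L(x)\}$ given in Lemma \ref{lem:LCA_def}.

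First I would fix $x \in V(G)$ and take an arbitrary leaf $l \in L(x)$. Since $l \le x$ in $G$, and both $l, x \in V(G) \subseteq V(G')$, the monotonicity lemma yields $\rho(l) \le \rho(x)$ in $S$. Because $l$ is a leaf of $G$, the reconciliation satisfies $\rho(l) = \phi(l)$, so $\phi(l) \le \rho(x)$. As this holds for every $l \in L(x)$, the node $\rho(x)$ is a common ancestor in $S$ of every element of $\{\phi(l) : l \in L(x)\}$.

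Next I would invoke the definition of $\rho_{lca}(x)$: by Lemma \ref{lem:LCA_def}, $\rho_{lca}(x)$ is the root of the minimal subtree of $S$ containing $\{\phi(l) : l \in L(x)\}$, i.e. the minimal (lowest) common ancestor in $S$ of these leaves. Since $\rho(x)$ is one such common ancestor, the minimality of $\rho_{lca}(x)$ gives $\rho_{lca}(x) \le \rho(x)$, which is precisely the statement that $\rho(x)$ is not lower than $\rho_{lca}(x)$. \qed

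There is no real obstacle here; the only subtlety is making sure the monotonicity lemma applies to nodes of $G$ (not only of $G'$), which is immediate because $V(G) \subseteq V(G')$ and the ancestor relation $\le$ is inherited from $G$ to $G'$ (the extension only adds subtrees and degree-2 suppressions, so leaf descendants of $x$ in $G$ remain descendants of $x$ in $G'$).
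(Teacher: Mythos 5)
Your proof is correct and is essentially the argument the paper intends: its proof is the one-line remark that the claim follows from the definition of the Last Common Ancestor, and your write-up simply fills in the details (monotonicity of $\rho$ along $G'$, $\rho(l)=\phi(l)$ on leaves, and minimality of $\rho_{lca}(x)$ among common ancestors of $\{\phi(l): l\in L(x)\}$), even establishing the slightly stronger fact $\rho_{lca}(x)\le\rho(x)$. No gap to report.
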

\begin{proof}
	Follows directly from the definition of Last Common Ancestor. \qed
\end{proof}

\begin{lemma}\label{lem:lca->root(S)}
	Let $\R=(G,G',S,\phi,\rho, \delta)$ be a reconciliation, and $x\in V(G)\backslash L(G)$. Then $\rho(x)$ is in the path in $S$ from $\rho_{lca}(x)$ to $root(S)$. 
\end{lemma}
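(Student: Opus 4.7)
The plan is to combine Lemma \ref{lem:notunder} with Lemma \ref{lem:comparable} applied to $\R$ and $\R_{lca}$. First, by Lemma \ref{lem:notunder}, $\rho(x)$ is not strictly lower than $\rho_{lca}(x)$, so either $\rho_{lca}(x) \le \rho(x)$ or the two nodes are incomparable in $S$. Second, taking $\R_1 = \R$ and $\R_2 = \R_{lca}$ in Lemma \ref{lem:comparable} (both are valid DLC reconciliations between $G$ and $S$ with the same $\phi$), we conclude that $\rho(x)$ and $\rho_{lca}(x)$ are comparable. These two facts together force $\rho_{lca}(x) \le \rho(x)$, i.e., $\rho(x)$ is an ancestor of $\rho_{lca}(x)$ in $S$.

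Once ancestry is established, the conclusion is immediate from the tree structure of $S$: the set of ancestors of any node $s \in V(S)$ is exactly the set of nodes on the unique path from $s$ to $root(S)$. Hence $\rho(x)$ lies on the path from $\rho_{lca}(x)$ to $root(S)$.

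I do not expect any substantive obstacle here; the lemma is essentially a direct corollary of the two previously proved lemmas, and the only thing to be a bit careful about is verifying the hypotheses of Lemma \ref{lem:comparable} (namely, that it applies to arbitrary DLC reconciliations, not only to $\R_{lca}$ and not only to leaves), which the statement of that lemma indeed guarantees for any $x \in V(G)$.
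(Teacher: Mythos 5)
Your proposal is correct and matches the paper's own argument, which also derives the lemma directly from Lemma \ref{lem:comparable} and Lemma \ref{lem:notunder}; you simply spell out the combination (comparability plus "not lower" forces $\rho_{lca}(x)\le\rho(x)$) that the paper leaves implicit.
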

\begin{proof}
	Follows directly from Lemmas \ref{lem:comparable} and \ref{lem:notunder}. \qed
\end{proof}

The next lemma states that if a node is a speciation in an arbitrary reconciliation then it is also a speciation in the LCA.

\begin{lemma}\label{lem:spec}
	Let $\R=(G,G',S,\phi,\rho, \delta)$ be a reconciliation, and $x \in V(G)$. If $x\in \Sigma(\R)$, then $x\in \Sigma(\R_{lca})$, and $\rho(x)=\rho_{lca}(x)$.        
\end{lemma}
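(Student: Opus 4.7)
The plan is to unpack what it means for $x \in V(G)$ to be a speciation in $\R$, locate the two children of $x$ in $G$ inside the two $G'$-subtrees below $x$, and then read off both $\rho(x) = \rho_{lca}(x)$ and the speciation structure at $x$ in $\R_{lca}$.

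First I would invoke the definition of speciation: since $x \in \Sigma(\R)$, $x$ has two children $u_l, u_r$ in $G'$ with $\rho(u_l) = \rho(x)_l$ and $\rho(u_r) = \rho(x)_r$. Let $x_l, x_r$ denote the children of $x$ in $G$. These are descendants of $x$ in $G'$ (extension preserves ancestor--descendant relations among $V(G)$), so each of $x_l, x_r$ sits in exactly one of the subtrees of $G'$ rooted at $u_l$ or $u_r$. The key subclaim is that they lie in different subtrees. Indeed, if both were in, say, the $u_l$-subtree, then the $u_r$-subtree of $G'$ would contain no node of $G$; pruning that subtree and then suppressing degree-$2$ nodes (as required by the definition of extension) would leave $x$ with a unique child and force $x$ itself to be suppressed, contradicting $x \in V(G)$.

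Next, WLOG $x_l$ lies in the subtree of $u_l$ and $x_r$ in the subtree of $u_r$, whence $\rho(x_l) \le \rho(u_l) = \rho(x)_l$ and $\rho(x_r) \le \rho(x)_r$. By Lemma \ref{lem:notunder} applied in $\R_{lca}$ itself, $\rho_{lca}(x_l) \le \rho(x_l) \le \rho(x)_l$ and $\rho_{lca}(x_r) \le \rho(x)_r$ (this step is literally the statement that $\rho_{lca}$ is the lowest possible image of $x_l$, which follows from the LCA definition). Since $\rho(x)_l$ and $\rho(x)_r$ are incomparable siblings in $S$, the node $\rho_{lca}(x_l)$ lies strictly in the left subtree of $\rho(x)$ and $\rho_{lca}(x_r)$ strictly in the right. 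Because $\rho_{lca}(x)$ is the LCA in $S$ of $\phi(L(x_l)) \cup \phi(L(x_r))$, which equals the LCA of $\rho_{lca}(x_l)$ and $\rho_{lca}(x_r)$, this LCA must be exactly $\rho(x)$, giving $\rho_{lca}(x) = \rho(x)$.

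Finally, the previous paragraph also shows that $\rho_{lca}(x_l) \le \rho_{lca}(x)_l$ and $\rho_{lca}(x_r) \le \rho_{lca}(x)_r$ with $\rho_{lca}(x_l), \rho_{lca}(x_r)$ in the two distinct subtrees below $\rho_{lca}(x)$; applying Lemma \ref{lem:positions} to the minimal LCA extension $G'_{lca}$ yields that the two children of $x$ in $G'_{lca}$ are mapped to $\rho_{lca}(x)_l$ and $\rho_{lca}(x)_r$ respectively, so $x$ satisfies condition (S) in $\R_{lca}$, i.e.\ $x \in \Sigma(\R_{lca})$.

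The only delicate step is the extension argument that places $x_l, x_r$ in distinct $G'$-subtrees below $x$; everything after that is a direct LCA computation using Lemma \ref{lem:notunder}. I expect that to be the main obstacle, because it is the only place where one must actually appeal to the pruning/suppression mechanics of the definition of extension rather than to formal properties of $\rho$.
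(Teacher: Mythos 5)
Your proof is correct, but it leans on different supporting facts than the paper's. The paper's proof takes for granted the placement step you prove explicitly (it simply writes $x_r\le x''_r<x$, i.e.\ that each $G$-child of $x$ sits below a distinct $G'$-child of the speciation node); your pruning/suppression argument for why $x_l$ and $x_r$ must lie in the two different $G'$-subtrees below $x$ is the honest justification of that assertion, and it is the genuinely delicate point, as you anticipated. For $\rho(x)=\rho_{lca}(x)$ the paper argues by contradiction: assuming $\rho_{lca}(x)<\rho(x)$ it makes $\rho(x_r)$ and $\rho_{lca}(x_r)$ incomparable and invokes Lemma \ref{lem:comparable}, whereas you compute directly that $\rho_{lca}(x)$ is the LCA of $\rho_{lca}(x_l)$ and $\rho_{lca}(x_r)$, which lie in the two disjoint subtrees under $\rho(x)$, forcing equality; your version is more direct and avoids the contradiction. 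For $x\in\Sigma(\R_{lca})$ the paper again argues by contradiction, using the (uncited) standard characterization that a duplication of the LCA reconciliation satisfies $\rho_{lca}(x)=\rho_{lca}(x_l)$ or $\rho_{lca}(x)=\rho_{lca}(x_r)$, together with Lemma \ref{lem:notunder}; you instead read the speciation status of $x$ off the construction underlying Lemma \ref{lem:positions}, using that $\R_{lca}$ is the unique minimal reconciliation extending $\rho_{lca}/V(G)$. That works, but note two small points: what you need is the construction in the \emph{proof} of Lemma \ref{lem:positions} (its statement only asserts existence and uniqueness), and the step is licensed by the paper's assertion that the LCA reconciliation is minimal; also, your citation of Lemma \ref{lem:notunder} to get $\rho_{lca}(x_l)\le\rho(x_l)$ formally needs comparability (Lemma \ref{lem:comparable}, i.e.\ Lemma \ref{lem:lca->root(S)}) or the one-line LCA argument you give in parentheses. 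Neither point is a gap; your route trades the paper's appeal to the classical LCA-duplication criterion for an appeal to the minimality machinery already set up in Section \ref{sec:definitions}, and in addition closes an implicit step in the published argument.
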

\begin{proof}	
	Let $x\in V(G)\cap \Sigma(\R)$. Let $x''_l,x''_r$ be the children of $x$ in $\R$, $x'_l,x'_r$ the children of $x$ in $\R_{lca}$, and $x_l,x_r$ be the children of $x$ in $G$. We have $\rho(x)_l=\rho(x''_l)$ and $\rho(x)_r=\rho(x''_r)$. From Lemma \ref{lem:lca->root(S)} we have $\rho_{lca}(x)\le \rho(x)$.
	
	Assume that $\rho_{lca}(x)<\rho(x)$. Hence $\rho(x)_l$ or $\rho(x)_r$ is incomparable to $\rho_{lca}(x)$. Assume that $\rho(x)_r=\rho(x''_r)$ is incomparable to $\rho_{lca}(x)$. Next, $x_r\le x'_r<x$, $x_r\le x''_r<x$, hence $\rho_{lca}(x_r)\le \rho_{lca}(x'_r)\le \rho_{lca}(x)$ and $\rho(x_r)\le \rho(x''_r)\le \rho(x)$. Therefore, $\rho(x_r)$ is incomparable to $\rho_{lca}(x)$, hence incomparable to $\rho_{lca}(x_r)$, which contradicts Lemma \ref{lem:comparable}. Therefore $\rho_{lca}(x)=\rho(x)$. 
	
	Let us prove that $x\in \Sigma(\R_{lca})$. Assume the opposite, $x\in \Delta(\R_{lca})$. Thus $\rho_{lca}(x)=\rho_{lca}(x'_l)=\rho_{lca}(x'_r)$, and from LCA reconciliation, we have $\rho_{lca}(x)=\rho_{lca}(x_r)$ or $\rho_{lca}(x)=\rho_{lca}(x_l)$. Next, $\rho_{lca}(x_r)=\rho_{lca}(x)=\rho(x)>\rho(x_r)$ or $\rho_{lca}(x_l)=\rho_{lca}(x)=\rho(x)>\rho(x_l)$, which contradicts Lemma $\ref{lem:notunder}$. \qed   
\end{proof}

Thanks to these properties we can define a distance from an arbitrary reconciliation to the LCA reconciliation. This distance will be used in the proofs of several properties, stating that there is always a way to lower the distance to the LCA without increasing the cost of a reconciliation.
\begin{definition}
Let $\R=(G,G',S,\phi,\rho, \delta)$ be any reconciliation.
Let $dist_{lca}(\R)=\sum_{d\in V(G)} d_S(\rho(d),\rho_{lca}(d))$ be the {\em distance} from $\R$ to the LCA reconciliation $\R_{lca} = (G,G'_{lca},S,\phi,\rho_{lca})$.
\end{definition}

\begin{lemma} \label{lem:duplic_lower}
  If for a reconciliation $\R$ $dist_{lca}(\R)>0$, there exists a reconciliation $\R'$ such that $dist_{lca}(\R')<dist_{lca}(\R)$ and $\omega(\R')\le \omega(\R)$. 
\end{lemma}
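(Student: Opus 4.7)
The plan is to construct $\R'$ by moving a single well-chosen node of $V(G)$ one step closer to its LCA-image in $S$. Since $dist_{lca}(\R) > 0$, there is some $x \in V(G)$ with $\rho(x) > \rho_{lca}(x)$. First I would pick such an $x$ that is minimal in the partial order of $G$. By Lemma \ref{lem:spec}, $x \notin \Sigma(\R)$ and hence $x \in \Delta(\R)$, and minimality yields the critical invariant $\rho(y) = \rho_{lca}(y)$ for every proper descendant $y < x$ of $x$ in $V(G)$.

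Next, let $s^*$ be the unique child of $\rho(x)$ on the path from $\rho_{lca}(x)$ to $\rho(x)$, and define $\rho' : V(G) \to V(S)$ by $\rho'(x) = s^*$ and $\rho'(y) = \rho(y)$ otherwise. I would verify the three hypotheses of Lemma \ref{lem:positions}: $\rho'$ still extends $\phi$; monotonicity holds because for descendants $y < x$ in $G$ we have $\rho'(y) = \rho_{lca}(y) \le \rho_{lca}(x) \le s^* = \rho'(x)$, and for the parent $z = p_G(x)$ we have $\rho'(x) = s^* < \rho(x) \le \rho(z) = \rho'(z)$; and $\rho'(x) = s^*$ lies on the path from $\rho_{lca}(x)$ to $root(S)$ by Lemma \ref{lem:lca->root(S)} together with $\rho_{lca}(x) \le s^* \le \rho(x)$. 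Lemma \ref{lem:positions} then produces a minimal reconciliation $\R'$ whose restriction to $V(G)$ is $\rho'$, and trivially $dist_{lca}(\R') = dist_{lca}(\R) - 1$.

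The main obstacle is showing $\omega(\R') \le \omega(\R)$. My plan is to express the cost as $\omega(\R) = \sum_{s \in V(S)} \max(\D(s,\R), \L(s,\R))$, which follows because the conversion map $\delta$ is only constrained by the same-species condition, so the number of conversions can be maximized greedily species-by-species at $\sum_s \min(\D(s,\R), \L(s,\R))$. The two minimal reconciliations differ only on the strip of $G'$-nodes between $\rho(p_G(x))$ and $s^*$: in $\R$, $x$ contributes one duplication at $s_1 := \rho(x)$ together with two cascading speciation-losses on the sister of $s^*$ at $s_1$-level; in $\R'$ the duplication at $s_1$ disappears, at most one new speciation-loss appears at that sister from the upper path down to $x$'s new position, and $x$ itself at $s^*$ is either a speciation (when $\rho_{lca}(x) = s^*$) or still a duplication (when $\rho_{lca}(x) < s^*$).

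A case analysis on whether $\rho_{lca}(x) = s^*$ then shows that $\max(\D(s,\R), \L(s,\R))$ weakly decreases at each affected $s$. The delicate point is the second case, where the new duplication at $s^*$ may potentially raise $\D(s^*,\R')$, and where one must also track the possibility that $p_G(x)$ flips from a duplication in $\R$ to a speciation in $\R'$ when its two children in $G$ now map to opposite children of $s_1$; this flip only decreases cost further and, combined with a careful account of how conversions at $s_1$, its sister, and $s^*$ redistribute, yields $\omega(\R') \le \omega(\R)$. Making this conversion-redistribution bookkeeping fully rigorous, rather than just counting duplications and losses locally, is the main technical hurdle of the proof.
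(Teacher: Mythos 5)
The decisive step of your plan fails: it is not true that the minimal reconciliation $\R'$ produced by Lemma \ref{lem:positions} after lowering $x$ by one step satisfies $\omega(\R')\le\omega(\R)$, no matter how cleverly $\delta'$ is chosen, and the quantity $\sum_{s}\max(\D(s),\L(s))$ can strictly increase under your move. Concretely, take $S$ with root child $r$, children $z$ (leaf) and $s_1$, where $s_1$ has children $s^*$ (with leaves $A,B$) and $\bar s$ (with leaves $C,D$). Build $G$ from: a cherry $x=(a_1,a_2)$ over $A$ whose parent also has a $z$-leaf child (so $p_G(x)$ maps to $r$); a duplicated $\{A,B\}$-cherry forcing one duplication at $s^*$; two duplicated $\{C,D\}$-cherries joined under a common parent, forcing three duplications at $\bar s$ and no losses there; a gadget $((z_1,V),z_2)$ with $V$ a loss-free subtree spanning $s_1$, forcing exactly one loss at $s_1$; join the pieces at nodes mapping to $r$ (this only adds duplications at $r$ and a loss at $z$, identical before and after the move). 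Let $\R$ be the minimal reconciliation with $\rho(x)=s_1$, all other nodes at LCA, and $\delta$ maximal; then $x$ is the unique node above its LCA, $(\D,\L)$ is $(1,1)$ at $s_1$, $(3,2)$ at $\bar s$, $(1,0)$ at $s^*$, and $\omega(\R)=\sum_s\max(\D(s),\L(s))=11$ (the loss at $s_1$ is converted by $x$, two $\bar s$-losses by $\bar s$-duplications). After your one-step lowering and re-minimalization the counts become $(0,1)$ at $s_1$, $(3,1)$ at $\bar s$, $(2,0)$ at $s^*$, everything else unchanged, so $\omega(\R')\ge 12$ for every $\delta'$: the vacated species keeps an orphaned loss, the sister species had a surplus of duplications so removing a loss there buys nothing, and the destination species gains a surplus duplication. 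So the ``conversion-redistribution bookkeeping'' you flag as the main hurdle is not merely delicate; it cannot be closed inside your framework.

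The reason is your insistence on minimal reconciliations. When $x$ is (or can be made) a conversion whose associated loss sits at $\rho(x)$, cost preservation requires \emph{extending} that loss so that one branch follows $x$ down to its new position (staying paired with it) while the other branch replaces one of the deleted lost branches on the sister edge; this leaves the class of minimal reconciliations, so Lemma \ref{lem:positions} can never produce the needed $\R'$. This is exactly the trick in the paper's proof, which is otherwise structurally different from yours: it prunes the lost subtrees hanging below the chosen duplication $d$ at the same species, drops $d$ directly to the level of its surviving children (possibly several steps at once), reinserts one pruned subtree above $d$, and in the conversion case extends the associated loss as described, so no pairing is destroyed and the weight never increases. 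Two smaller points: $\omega(\R)=\sum_s\max(\D(s),\L(s))$ is only an inequality ($\ge$) for a given $\delta$, and your comparison of ``the two minimal reconciliations'' tacitly assumes $\R$ itself is minimal (reducing to that case needs an argument that minimization does not increase cost); both are repairable, but the conversion case above is the genuine break.
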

\begin{proof}
	Take any $d'\in V(G)$ so that $\rho(d') > \rho_{lca}(d')$ and let $d$ be a minimal element of $V(G)$ such that $\rho(d)=\rho(d')$ and $d \le d'$. Since $d\le d'$, we have $\rho_{lca}(d) \le \rho_{lca}(d')<\rho(d')=\rho(d)$, therefore $\rho_{lca}(d)<\rho(d)$.
	By Lemma \ref{lem:spec} $d\notin \Sigma(\R)$, so $d\in \Delta(\R)$. 
	
	Let $d^1_l,d^1_r$ be the children of $d$ in $\R$. Since $d\in \Delta(\R)$, we have $\rho(d)=\rho(d^1_l)=\rho(d^1_r)$, and because of the minimality of $d$, we get $d^1_l,d^1_r \notin V(G)$. Similarly, all descendants of $d$ in $G'$, with the same $\rho$-value, are not in $V(G)$. 
	
	Let $d_1,...d_k$ be these descendants and let $T_1,...,T_k$ be lost subtrees such that $root(T_i)=d_i$, ($i=1,\ldots, k$). Prune all these subtrees, contract nodes of degree two ({\em i.e.} $d_1,\ldots, d_k$), and let $G''$ denotes the obtained extension of gene tree $G$. Let $d^2_l,d^2_r$ be the children of $d$ in $G''$. 
	
	If $\rho(d^2_l) \ne \rho(d^2_r)$, then $G''$ generates a new reconciliation $\R'$, where $d$ is a speciation, and $\rho'(d)=\rho(d)$. By Lemma $\ref{lem:spec}$, $\rho'(d)=\rho_{lca}(d)$, which contradicts $\rho(d)>\rho_{lca}(d)$.  
	
	Let $\rho(d^2_l) = \rho(d^2_r)$. Since $\rho(d^2_l) <\rho(d)$, we don't have consistency. Put $\rho'(d)=\rho(d^2_l)$ and insert $x_1$ into $G''$ so that $d<x_1<p_{G'}(d)$, $\rho'(x_1)=\rho(d)$, and $x_1$ is the root of some of the pruned subtrees $T_i$ (reinsert $T_i$). In this way we get a new reconciliation $\R''$, and $d$ is a duplication in $\R''$. Also $\omega(\R'')\le \omega(\R)$ and $dist_{lca}(\R'') < dist_{lca}(\R)$.
	
	If $d\in  \Delta'(\R)$ and corresponding loss is $l$, then extend $l$ so that one loss extensions follows $d$ and the other can be some of the pruned subtrees $T_i$ (reinsert $T_i$). \qed
\end{proof}

The next lemma states that with LCA we get the smallest set of duplications.
\begin{lemma}\label{lem:LCA_duplic}
	Let $\R_{lca}$ be the LCA reconciliation and $\R$ be any reconciliation. Then $\Delta(\R_{lca}) \subseteq \Delta(\R)\cap V(G)$.         
\end{lemma}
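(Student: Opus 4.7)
The plan is to reduce the claim to Lemma \ref{lem:spec} via its contrapositive, together with a structural observation about the LCA construction. By Definition \ref{def:consis}, every internal node of $G'$ is either a speciation or a duplication, and not both; so once we know $x \in \Delta(\R_{lca})$ is an internal node of $G$, it will suffice to rule out $x \in \Sigma(\R)$.

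First I would establish that $\Delta(\R_{lca}) \subseteq V(G)$. This is a structural fact about the LCA reconciliation: when $\rho_{lca}$ is extended from $V(G)$ to $G'_{lca}$ along the lines of Lemma \ref{lem:positions}, Case 3 never applies (the LCA property forces $\rho_{lca}(x_l)$ and $\rho_{lca}(x_r)$ to lie in different subtrees of $\rho_{lca}(x)$, unless one of them equals $\rho_{lca}(x)$ itself and triggers Case 1). The inserted nodes coming from Cases 1 and 2 are intermediate speciations or single loss leaves; minimality of $|\Lambda(\R_{lca})|$ additionally forces each lost subtree to be a chain of speciations terminating at a single loss. Hence no node of $G'_{lca} \setminus G$ is a duplication, which gives $\Delta(\R_{lca}) \subseteq V(G)$.

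Now I would take any $x \in \Delta(\R_{lca})$. By the previous step $x \in V(G)$, and since duplications are internal in $G'_{lca}$, also $x \in V(G) \setminus L(G)$. In the arbitrary reconciliation $\R = (G,G',S,\phi,\rho,\delta)$, this same $x$ is still an internal node of $G'$ (extensions of $G$ preserve $V(G)$ and its internal/leaf status). By Definition \ref{def:consis} applied to $\R$, either $x \in \Sigma(\R)$ or $x \in \Delta(\R)$. If $x \in \Sigma(\R)$, Lemma \ref{lem:spec} gives $x \in \Sigma(\R_{lca})$, contradicting $x \in \Delta(\R_{lca})$. Hence $x \in \Delta(\R) \cap V(G)$, proving the inclusion. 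The only delicate point is the structural claim $\Delta(\R_{lca}) \subseteq V(G)$, which I expect to be the main obstacle; once it is granted, the rest is a one-line contrapositive application of Lemma \ref{lem:spec}.
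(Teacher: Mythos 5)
Your proof is correct and follows essentially the same route as the paper: the core step is the contrapositive of Lemma \ref{lem:spec}, which rules out $x\in\Sigma(\R)$ for any $x\in\Delta(\R_{lca})$. The only difference is that you explicitly justify the structural fact $\Delta(\R_{lca})\subseteq V(G)$ (via the LCA/minimality construction), which the paper simply asserts in one clause; this is a welcome elaboration, not a divergence in approach.
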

\begin{proof}
	Let $x\in \Delta(\R_{lca})$, then $x \notin \Sigma(\R_{lca})$ and  $x\in V(G)$. Assume the opposite, that $x\notin \Delta(\R)\cap V(G)$, then $x\in \Sigma(\R)$. From Lemma $\ref{lem:spec}$ we get $x\in \Sigma(\R_{lca})$, a contradiction. Therefore $x\in \Delta(\R) \cap V(G)$.  \qed 
\end{proof}

\subsection{Properties of optimal reconciliations}

We examine some properties of optimal reconciliations. Note that optimal reconciliations are not necessarily minimal, but we will state the relation between the two classes (see Lemma \ref{lem:optimal->minimal}).
The next lemma states that optimal reconciliations never contain duplication nodes in lost subtrees.

\begin{lemma} \label{lem:duplic_in_G}
	Let $\R=(G,G',S,\phi,\rho, \delta)$ be an optimal reconciliation. Then $\Delta(\R) \subseteq V(G)$, {\em i.e.} all duplications nodes are in $G$. 	
\end{lemma}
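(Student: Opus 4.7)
The plan is to argue by contradiction. I would suppose that $\R$ is optimal and contains a duplication $x \in \Delta(\R) \setminus V(G)$, and then construct a reconciliation $\R'$ with $\omega(\R') < \omega(\R)$. Among all duplications of $\R$ outside $V(G)$, pick $x$ minimal for the partial order of $G'$. Then $x$ lies in a lost subtree $T$, all descendants of $x$ in $G'$ are outside $V(G)$, and by the minimality of $x$ none of them is a duplication; the two subtrees $T_{x_l}$ and $T_{x_r}$ are therefore speciation-chains ending at losses. To form $\R'$, I would prune the subtree $T_{x_r}$ and then suppress the now degree-two node $x$, so that $x_l$ becomes a child of the parent of $x$ in $G'$. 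Consistency at that parent is preserved because $\rho(x_l)=\rho(x)$, so the (D) or (S) condition formerly satisfied with $x$ continues to hold with $x_l$ in its place.

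The heart of the argument is a cost accounting that exploits $d=l=c=1$. Every pairing broken by the pruning contributes zero to $\omega(\R')-\omega(\R)$: a removed conversion whose associate loss survives replaces a cost $c$ by a cost $l$, and a removed free loss whose associate conversion survives turns the cost of that conversion from $c$ into $d$. What remains are strict $-1$ contributions coming from each removed non-conversion duplication, each removed conversion whose associate loss is also removed, and each removed non-free loss. By minimality $T_{x_r}$ contains no duplication, so the only possible sources of strict saving are $x$ itself and the leaves of $T_{x_r}$. This already yields $\omega(\R')<\omega(\R)$, contradicting optimality, whenever $x\notin\Delta'$, or $x\in\Delta'$ with $\delta(x)\in T_{x_r}$, or $T_{x_r}$ contains a non-free loss.

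The main obstacle is the residual case in which $x$ is a conversion with $\delta(x)$ external to $T_{x_r}$ and every leaf of $T_{x_r}$ is a free loss. I would resolve it by a preliminary cost-preserving swap of pairings: choose a leaf $\ell$ of $T_{x_r}$ with $\rho(\ell)=\rho(x)$ and redefine $\delta(x):=\ell$ while re-coupling the old $\delta(x)$ with the former partner of $\ell$; since all four nodes sit at the same species, this swap is valid and the previous accounting then produces the missing strict decrease. Such an $\ell$ is automatically available when $x_r$ is itself a loss-leaf at $\rho(x)$, and symmetrically when $x_l$ is. If both $x_l$ and $x_r$ are speciations, so that no leaf of $T_x$ lies at $\rho(x)$, I would instead prune the whole lost subtree $T_x$ and suppress its inserted parent in $G'$, whose consistency forces it to be either a duplication at $\rho(x)$ providing the needed $-1$, or a speciation one level higher from which the same argument iterates along the chain of inserted nodes. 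The technical subtlety — and what I expect to be the hardest bookkeeping — is keeping these nested re-pairings consistent and ensuring that the induction on depth always terminates with a strict saving before reaching a node of $V(G)$.
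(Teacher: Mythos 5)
Your cost-accounting framework (exploiting $d=l=c=1$ so that broken pairings contribute $0$) is sound, but the proof has two genuine gaps. First, the opening claim does not follow from minimality and is in general false: choosing $x$ minimal in $\Delta(\R)\setminus V(G)$ only guarantees that no duplication outside $V(G)$ lies strictly below $x$; it does not put $x$ inside a lost subtree, nor does it exclude descendants of $x$ belonging to $V(G)$. A duplication outside $V(G)$ can be a node inserted on an edge of $G$, one of whose children is the root of a lost subtree while the other subtree contains genes of $G$ (this is exactly the paper's Case 2, $x_l\in V(G)$, $x_r\notin V(G)$). In that configuration you may only prune the lost child, and your later fallback ("prune the whole lost subtree containing $x$") is not even applicable since $x$ is not in a lost subtree; nothing in your argument covers it.

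Second, the residual conversion case is precisely where the proof is missing, and your two devices do not close it. The pairing swap needs a leaf of the pruned subtree mapped to $\rho(x)$, which fails when both children of $x$ are speciations; and the fallback of pruning the whole lost subtree and suppressing its parent $y$ need not give a strict saving: if $y$ is itself a conversion whose associate loss lies outside the pruned tree, its removal turns that loss non-free (net $0$, not $-1$), and if $y$ is a speciation you cannot suppress it without breaking consistency at $p_{G'}(y)$ — you must leave a loss where the pruned root was, which can even cost $+1$. You flag the iteration/termination issue yourself, but that is the heart of the matter, not a technicality. The paper sidesteps all of this with a transplant move you do not use: when $x$ is a conversion with associate loss $\delta(x)$, since $\rho(\delta(x))=\rho(x)=\rho(x_l)=\rho(x_r)$ one deletes only the two nodes $x$ and $\delta(x)$, reattaching $x_l$ to $p_{G'}(\delta(x))$ and $x_r$ to $p_{G'}(x)$ (in the Case-2 configuration, reattaching the lost child at the position of $\delta(x)$). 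Every other node and every other $\delta$-pairing is left intact, so the cost drops by exactly $1$ with no case analysis on the leaves of the subtrees; that regrafting idea is the missing ingredient your swap-and-prune strategy is trying to substitute for.
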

\begin{proof}
	Assume the opposite. Let $\R=(G,G',S,\phi,\rho, \delta)$ be a reconciliation, and $x$ is a minimal node of $\Delta(\R) \backslash V(G)$. Let us prove that $\R$ cannot be optimal. Let $x_l,x_r\in V(G')$ be the children of $x$. Since $x$ is a duplication, we have $\rho(x)=\rho(x_l)=\rho(x_r)$. Observe two cases.
    
	Case 1, $x_l,x_r\notin V(G)$
    
	Case 1.1, $x$ is a conversion, and  $l$ is the corresponding loss. Remove $l$ and $x$, connect $x_l$ with $p_{G'}(l)$, and $x_r$ with $p_{G'}(x)$. In this way we get $G''$. Let $\rho'=\rho/G''$, and $\delta'=\delta/G''$. We get a reconciliation $\R'=(G,G'',S,\phi,\rho', \delta')$ which has one duplication less, {\em i.e.} $\omega(\R')=\omega(\R)-1$. Hence $\R$ cannot be an optimal reconciliation.
	
	Case 1.2, $x$ is not a conversion. Remove $T(x_l)$ and $x$, then connect $x_r$ with $p_{G'}(x)$. By a similar argument, we get a reconciliation with one duplication and all non-free losses from $T(x_l)$ less, {\em i.e.} we get a reconciliation with a strictly lower cost. Indeed, since $x$ is a minimal duplication, subtree $T(x_l)$ cannot have any duplications, {\em i.e.} by removing $T(x_l)$ we cannot get to the situation where some free loss becomes non-free.

Case 2, $x_l\in V(G), x_r\notin V(G)$. Similarly, if $x$ is not a conversion, remove $T(x_r)$ suppress $x$, and we get a reconciliation with strictly less cost. If $x$ is a conversion and $l$ is associate loss, then remove $l$, suppress $x$ and connect $x_r$ and $p_{G'}(l)$. We again obtain a cheaper reconciliation. \qed
\end{proof}

The next lemma is a version of  Lemma \ref{lem:duplic_lower} for an optimal reconciliation.  
\begin{lemma}\label{lem:lowering_duplic}
	Let $\R_{lca}$ be the LCA reconciliation, and let $\R$ be an optimal reconciliation. If $dist_{lca}(\R)>0$, there exists an optimal reconciliation $\R'$ such that $\Delta(\R')=\Delta(\R)$ and $dist_{lca}(\R')<dist_{lca}(\R)$. 
\end{lemma}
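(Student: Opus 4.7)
The plan is to bootstrap from Lemma \ref{lem:duplic_lower}. Applying it to $\R$ produces a reconciliation $\R'$ with $\omega(\R') \le \omega(\R)$ and $dist_{lca}(\R') < dist_{lca}(\R)$. Since $\R$ is already optimal, the inequality must be an equality, so $\R'$ is optimal as well, and Lemma \ref{lem:duplic_in_G} gives $\Delta(\R), \Delta(\R') \subseteq V(G)$. What remains is the equality $\Delta(\R') = \Delta(\R)$, which is not directly part of the conclusion of Lemma \ref{lem:duplic_lower}.

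To obtain that equality, I would inspect the explicit construction used in the proof of Lemma \ref{lem:duplic_lower}. The distinguished node $d$ lies in $V(G)$ with $\rho(d) > \rho_{lca}(d)$; the surgery lowers its image to $\rho'(d)=\rho(d_l^2)$ while explicitly keeping $d$ a duplication, since its two $G''$-children $d_l^2,d_r^2$ share the new value $\rho'(d)$. For every other $v \in V(G)$, the construction leaves $\rho'(v) = \rho(v)$ unchanged, and the only structural edits are the pruning of lost subtrees rooted at the descendants $d_1,\dots,d_k$ of $d$ in $G'$ (all with $\rho$-value $\rho(d)$) together with the insertion of a single node $x_1$ above $d$ carrying $\rho'(x_1) = \rho(d)$. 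Because this surgery is confined to material mapped to $\rho(d)$, the $\rho$-values of the $G''$-children of any such $v$ coincide with those of its $G'$-children. Consequently the speciation/duplication status of every $v \in V(G) \setminus \{d\}$ is preserved, giving $\Delta(\R') \cap V(G) = \Delta(\R) \cap V(G)$, and combined with the containments in $V(G)$ this yields $\Delta(\R') = \Delta(\R)$.

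The main obstacle is making this last step rigorous for $V(G)$-ancestors of $d$, since their child on the chain down to $d$ may be one of the pruned $d_i$ in $G'$ and must be compared with the corresponding child in $G''$. The key observation is that $x_1$ slots in as a replacement at $\rho$-value $\rho(d)$, matching all of the pruned $d_i$, so no ancestor's child-$\rho$ profile is altered; once this local check is carried out, optimality of $\R'$ and equality of the duplication sets follow from the previous lemmas as above.
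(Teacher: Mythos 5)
Your proposal is correct and follows essentially the same route as the paper: reuse the construction from Lemma \ref{lem:duplic_lower}, observe that the lowered node $d$ stays a duplication and the cost cannot drop below the optimum, and invoke Lemma \ref{lem:duplic_in_G} to confine duplications to $V(G)$ so that the pruning/insertion of lost-subtree material (with $x_1$ preserving the $\rho$-profile of $d$'s parent) leaves $\Delta$ unchanged. The only difference is that you spell out the preservation of the duplication/speciation status of the other $V(G)$ nodes, which the paper leaves implicit.
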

\begin{proof}
	Follows directly from the proof of Lemma $\ref{lem:duplic_lower}$. We constructed $\R'$ by pruning some of the lost subtrees and lowering duplication, which remained a duplication in $\R'$. By Lemma $\ref{lem:duplic_in_G}$ lost subtrees in optimal reconciliation cannot contain duplications, hence the set of duplications remained unchanged, {\em i.e.} $\Delta(\R')=\Delta(\R)$. \qed  
\end{proof}

Next theorem states that all optimal reconciliations have the same sets of duplications.  
\begin{theorem}\label{th:duplic_const}
	Let $\R_{lca}=(G,G_{lca}',S,\phi,\rho_{lca})$ be the LCA reconciliation and $\R=(G,G',S,\phi,\rho, \delta)$ be an optimal reconciliation. Then $\Delta(\R_{lca}) = \Delta(\R)$.      
\end{theorem}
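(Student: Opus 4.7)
The inclusion $\Delta(\R_{lca})\subseteq\Delta(\R)$ is immediate from Lemma \ref{lem:LCA_duplic}, so the whole work is in the reverse inclusion. The plan is to first reduce to the case where $\R$ looks like the LCA on $V(G)$. Applying Lemma \ref{lem:lowering_duplic} iteratively produces a chain of optimal reconciliations all with duplication set $\Delta(\R)$ and with strictly decreasing $dist_{lca}$; this terminates in an optimal $\R^*=(G,G'^*,S,\phi,\rho^*,\delta^*)$ with $\Delta(\R^*)=\Delta(\R)$ and $dist_{lca}(\R^*)=0$, so that $\rho^*$ coincides with $\rho_{lca}$ on $V(G)$. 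It is then enough to show $\Delta(\R^*)\subseteq\Delta(\R_{lca})$.

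Toward a contradiction, take $x\in\Delta(\R^*)$; by Lemma \ref{lem:duplic_in_G} we have $x\in V(G)$, and we assume $x\in\Sigma(\R_{lca})$. Writing $s=\rho^*(x)=\rho_{lca}(x)$, since $x$ is an LCA speciation the two children $x_l,x_r$ of $x$ in $G$ satisfy (up to swap) $\rho^*(x_l)\le s_l$ and $\rho^*(x_r)\le s_r$, i.e.\ they sit in different subtrees of $s$. But because $x$ is a duplication in $\R^*$, its children $y_l,y_r$ in $G'^*$ both satisfy $\rho^*(y_l)=\rho^*(y_r)=s$. By Lemma \ref{lem:duplic_in_G} neither $y_l$ nor $y_r$ belongs to $V(G)$, and by Lemma \ref{lem:lowest_is_spec} each is necessarily a speciation whose two children map to $s_l$ and $s_r$. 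Since $x_l$ descends from only one side of $y_l$ and $x_r$ only from one side of $y_r$, the other side of each of $y_l,y_r$ is the root of a complete lost subtree, rooted at a node with $\rho^*$-image $s_r$ under $y_l$ and $s_l$ under $y_r$.

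From this structural picture I build $\R'$ by deleting $y_l,y_r$ and their two associated lost subtrees and re-attaching the subtrees containing $x_l$ and $x_r$ as the two children of $x$; condition (S) then holds at $x$, so $\R'$ is a valid reconciliation. The events that disappear are: the duplication (or conversion) $x$ itself and the losses inside the two lost subtrees (at least one per subtree by Lemma \ref{lem:lowest_is_spec}). The hard part, and the main obstacle, is the bookkeeping of the pairing $\delta$: (i) if $x$ was a conversion, its associated loss either lies in a removed subtree and vanishes, or lies outside and becomes a regular loss; (ii) any conversion outside the removed region whose loss was inside loses its mate and reverts to an unpaired duplication. Using the identity $\omega(\R)=|\Lambda\setminus\Lambda'|+|\Delta|$ one checks that in the generic case the modification strictly lowers $\omega$, contradicting optimality of $\R^*$. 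The delicate sub-case is when $x$ was itself a conversion with paired loss outside the removed region \emph{and} every removed loss was free; the plan is to rule it out by exploiting that the conversions paired with the removed free losses have $\rho$-values strictly below $s$ and so cannot be re-paired with $x$'s loss $l^*$, combined with iterating the modification and invoking Lemma \ref{lem:LCA_duplic} as a floor on $|\Delta|$ to force a strict decrease somewhere along the iteration—contradicting optimality of $\R^*$ and thereby establishing $\Delta(\R^*)\subseteq\Delta(\R_{lca})$. \qed
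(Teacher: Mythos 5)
Your overall route is the paper's route: first use Lemma \ref{lem:LCA_duplic} for one inclusion, then iterate Lemma \ref{lem:lowering_duplic} to reach an optimal $\R^*$ with $\rho^*=\rho_{lca}$ on $V(G)$ and $\Delta(\R^*)=\Delta(\R)$, and finally, at an extra duplication $x$, delete the two added nodes $y_l,y_r$ and their pendant lost subtrees so that $x$ becomes a speciation and optimality is contradicted. Your structural picture and your accounting in the generic case are right (two citation slips aside: $y_l,y_r\notin V(G)$ follows from $dist_{lca}(\R^*)=0$ together with Lemma \ref{lem:no_two_speciat}, not from Lemma \ref{lem:duplic_in_G}; that lemma is what forces them to be speciations). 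The problem is that the step you yourself flag as ``the delicate sub-case'' is the actual crux, and the plan you give for it does not close. When $x$ is a conversion whose partner loss $l^*$ survives the deletion and every deleted loss is free, the deletion is exactly cost-neutral ($-1$ for the duplication $x$, $+1$ because $l^*$ becomes non-free), and iterating such cost-neutral deletions until $|\Delta|$ reaches the floor $\Delta(\R_{lca})$ contradicts nothing: it simply terminates in another reconciliation of the same weight whose duplication set is $\Delta(\R_{lca})$, which is entirely compatible with the optimality of $\R^*$. Nothing in the iteration forces a strictly decreasing step, so the reverse inclusion is not established in this sub-case. (Your observation that the orphaned conversions below $s$ cannot be re-paired with $l^*$ only confirms the neutrality; it does not turn it into a decrease.)

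The missing idea is to not leave $l^*$ orphaned. Write $s=\rho^*(x)$ with children $s_l,s_r$, and let $A$, $B$ be the deleted lost subtrees rooted at $s_r$ and $s_l$. After deleting $A,B$ and suppressing $y_l,y_r$ (so $x$ is a speciation), extend the loss $l^*$, which is assigned to $s$, over $s_l$ and $s_r$, and graft onto the two new losses copies of $B$ and $A$; the leaves of these copies sit at the same species nodes as the leaves of the deleted subtrees, so $\delta$ can pair them with exactly the conversions that have just lost their partners. Every loss that was free stays free, $l^*$ is no longer a leaf (hence no longer a loss), no new non-free loss appears, and $x$ has left $\Delta$, so $\omega$ drops by exactly $1$, contradicting the optimality of $\R^*$. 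This is the argument needed to make the strict-decrease claim true in your delicate sub-case (the paper's proof performs the same deletion and asserts strictness without spelling out the conversion bookkeeping; the re-extension and re-pairing above, in the spirit of Case 2 of Lemma \ref{lem:duplRaising_is_expensive} and of Lemma \ref{lem:duplRaising_F>0}, is how that assertion is justified). With that replacement for your iteration argument, the rest of your sketch goes through.
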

\begin{proof}
	Assume the opposite, there exist $G$, $S$ and $\R$ such that $\R$ is an optimal reconciliation and $\Delta(\R_{lca}) \ne \Delta(\R)$. By Lemma $\ref{lem:LCA_duplic}$ and Lemma $\ref{lem:duplic_in_G}$ we get $\Delta(\R_{lca}) \subset \Delta(\R)\cap V(G)=\Delta(\R)$. Assume that $\R$ is an optimal reconciliation with $\Delta(\R_{lca}) \subset \Delta(\R)$ and minimum $dist_{lca}(\R)$. We have $dist_{lca}(\R)=0$, otherwise we could get an optimal reconciliation $\R'$ with $dist_{lca}(\R') <dist_{lca}(\R)$ and $\Delta(\R')=\Delta(\R)$ (Lemma $\ref{lem:lowering_duplic}$). From $dist_{lca}(\R)=0$, we obtain $\rho(x)=\rho_{lca}(x)$, $\forall x\in V(G)$. 
	
	Let $x'\in \Delta(\R)\backslash \Delta(\R_{lca})$. By Lemma $\ref{lem:duplic_in_G}$, we have $x' \in V(G)$. From $x'\notin \Delta(\R_{lca})$ we get that $x'\in \Sigma(\R_{lca})$. We will continue in a similar way as in the proof of Lemma $\ref{lem:duplic_lower}$. Let $x_1,...,x_k$ be descendants of $x'$ in $V(G')$ with the same $\rho$-value as $x'$. 
	
	Assume $x_1\in V(G)$. Since $\rho(x)=\rho_{lca}(x)$, $\forall x\in V(G)$ and $\rho(x_1)=\rho(x')$ we get $\rho_{lca}(x_1)=\rho_{lca}(x')$, hence (Lemma \ref{lem:no_two_speciat}) $x'\in \Delta(\R_{lca})$, a contradiction. Therefore $x_1\notin V(G)$.

	By a similar argument, $x_1,\ldots,x_k\notin V(G)$. Let $T_i$ be the lost subtrees rooted at $x_i$ $(i=1,\ldots,k)$. By pruning $T_i$ and suppressing $x_i$ $(i=1,\ldots,k)$ we get $G''$, and a new reconciliation where node $x'$ is a speciation. Hence we get a reconciliation with strictly lower cost, which contradicts the optimality of $\R$.        \qed 		
\end{proof}

Next lemma states that, in an optimal reconciliation, we cannot have two comparable nodes $x,y\in V(G')\backslash V(G)$ such that $\rho(x)=\rho(y)$.
\begin{lemma} \label{lem:min_added_losses}
	Let $\R$ be an optimal reconciliation and $x,y\in V(G')$ such that $\rho(x)=\rho(y)$ and $x<y$. Then $y\in V(G)\cap \Delta(R_ {lca})=\Delta(\R_{lca})=\Delta(\R)$.   
\end{lemma}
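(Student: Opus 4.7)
My plan is that this lemma follows essentially by assembling three previously established results, with no new geometric argument required.

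First I would invoke Lemma \ref{lem:no_two_speciat}: since $x,y\in V(G')$ satisfy $x<y$ and $\rho(x)=\rho(y)$, we immediately get $y\in \Delta(\R)$. This handles the "duplication" content of the conclusion for the specific optimal reconciliation $\R$ at hand.

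Next I would transfer this membership from $\R$ to $\R_{lca}$ using Theorem \ref{th:duplic_const}, which says $\Delta(\R_{lca})=\Delta(\R)$ for any optimal $\R$. Hence $y\in \Delta(\R_{lca})$ as well. At this point the chain of equalities $\Delta(\R_{lca})=\Delta(\R)$ stated in the lemma is already justified.

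Finally, to place $y$ inside $V(G)$, I would apply Lemma \ref{lem:duplic_in_G}, which forces $\Delta(\R)\subseteq V(G)$ for any optimal $\R$. Combined with the previous step, this gives $\Delta(\R_{lca})\subseteq V(G)$, so $V(G)\cap\Delta(\R_{lca})=\Delta(\R_{lca})$, and in particular $y\in V(G)\cap\Delta(\R_{lca})$. The three claimed equalities then follow from Theorem \ref{th:duplic_const} and Lemma \ref{lem:duplic_in_G}, completing the argument. The only "obstacle" is really bookkeeping: being careful that $y$ need not a priori lie in $V(G)$ (since $y$ is only assumed to live in $V(G')$), which is precisely why Lemma \ref{lem:duplic_in_G} is the crucial ingredient rather than something we could derive directly from consistency.
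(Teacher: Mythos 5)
Your proposal is correct and follows essentially the same route as the paper's own proof: Lemma \ref{lem:no_two_speciat} gives $y\in\Delta(\R)$, Theorem \ref{th:duplic_const} transfers this to $\Delta(\R_{lca})$, and Lemma \ref{lem:duplic_in_G} supplies $\Delta(\R)\subseteq V(G)$, yielding $y\in V(G)\cap\Delta(\R_{lca})$ together with the stated equalities. Your closing remark about $y$ only being assumed in $V(G')$ correctly identifies why Lemma \ref{lem:duplic_in_G} is the essential ingredient, exactly as in the paper.
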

\begin{proof}
	From Lemma \ref{lem:no_two_speciat} we have $y\in \Delta(\R)$. From Theorem \ref{th:duplic_const}, we obtain $\Delta(\R)=\Delta(\R_{lca})$. From Lemma \ref{lem:duplic_in_G}, we have $y\in V(G)\supseteq \Delta(\R)=\Delta(\R_{lca})$. Therefore $y\in V(G)\cap \Delta(\R_{lca})$. \qed 	
\end{proof}

Next lemma states the relation between minimal and optimal reconciliations.

\begin{lemma}\label{lem:optimal->minimal}
	Let $\R$ be an optimal reconciliation. Then there exists $\R'$, a minimal reconciliation such that $\R$ is a completion of $\R'$.
\end{lemma}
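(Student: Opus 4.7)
The plan is to take the restriction of $\rho$ to $V(G)$, apply Lemma \ref{lem:positions} to produce the canonical minimal reconciliation that shares this restricted map, and then verify the two conditions in the definition of completion: that $\R$ can be obtained from this minimal reconciliation by loss extension, and that $\R$ has minimum weight among all such loss extensions.

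First I would set $\rho' = \rho/V(G)$ and check the three hypotheses of Lemma \ref{lem:positions}. Agreement with $\phi$ on the leaves is built into the definition of a DLC reconciliation; the monotonicity $x<y \Rightarrow \rho'(x) \le \rho'(y)$ was proved for every DLC reconciliation as the first lemma of this section; and the condition that $\rho'(x)$ lies on the path from $\rho_{lca}(x)$ to $root(S)$ follows from Lemmas \ref{lem:notunder} and \ref{lem:lca->root(S)} for internal nodes, and is trivial for leaves since $\rho'(x)=\phi(x)=\rho_{lca}(x)$. Lemma \ref{lem:positions} then produces a unique (up to $\delta$) minimal reconciliation $\R'=(G,G^*,S,\phi,\rho^*,\delta^*)$ with $\rho^*/V(G) = \rho/V(G)$.

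The main obstacle will be proving that $\R$ is actually a loss extension of $\R'$, i.e., that $G'$ extends $G^*$ and that the two reconciliations have the same number of lost subtrees. I plan to traverse each $x \in V(G)$ with $G$-children $x_l,x_r$ and compare the skeleton inserted in $G'$ between $x$ and each of $x_l,x_r$ with the minimum skeleton prescribed by the proof of Lemma \ref{lem:positions} in $G^*$. By Lemma \ref{lem:duplic_in_G} every node of $G'$ that sits strictly between $x$ and a $V(G)$-descendant is a speciation, and by Lemma \ref{lem:min_added_losses} two such intermediate nodes on a single line of descent cannot share a $\rho$-image; combined with the fact that speciations strictly decrease the species image, the chain of intermediate nodes in $G'$ has precisely the length forced by the gap between $\rho(x)$ and $\rho(x_l)$, respectively $\rho(x_r)$, in $S$, and fits exactly the three cases treated in the proof of Lemma \ref{lem:positions}. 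Consequently the roots of the lost subtrees of $\R$ and of $\R'$ are in bijection at identical positions and with identical $\rho$-images; the lost subtrees of $\R'$ are single leaves while those of $\R$ are extensions of them, which is precisely the data of a loss extension.

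The final step is immediate: any loss extension of $\R'$ is itself a valid DLC reconciliation, so one with weight strictly below $\omega(\R)$ would contradict the optimality of $\R$. Hence $\R$ has minimum weight among loss extensions of $\R'$ and is therefore by definition a completion of $\R'$.
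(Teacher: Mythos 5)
Your argument is correct, but it runs in the opposite direction from the paper's. The paper defines $\R'$ directly by cutting every lost subtree of $\R$ down to its root edge, so that $\R$ is immediately a loss extension of $\R'$ and of minimum weight among such extensions by optimality; the whole work then goes into proving that this truncated $\R'$ is minimal, by contradiction: if one more inserted edge could be deleted while keeping consistency, reinserting it would turn a speciation into a duplication (via Lemmas \ref{lem:lowest_is_spec} and \ref{lem:no_two_speciat}), yielding a reconciliation with fewer duplications than $\Delta(\R)=\Delta(\R_{lca})$, contradicting Lemma \ref{lem:LCA_duplic} and Theorem \ref{th:duplic_const}. You instead construct the canonical minimal reconciliation from $\rho/V(G)$ via Lemma \ref{lem:positions} (your verification of its three hypotheses is right) and then show that the skeleton of $G'$ between consecutive nodes of $V(G)$ is forced to coincide with that canonical one, letting Lemma \ref{lem:duplic_in_G} (no duplications outside $V(G)$) and Lemma \ref{lem:min_added_losses} (no repeated $\rho$-image among comparable inserted nodes) do the work that the paper's removal-and-reinsertion contradiction does. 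Both proofs rest on the same rigidity facts about optimal reconciliations; yours is longer but slightly more informative, since it identifies the minimization of $\R$ explicitly as the unique minimal reconciliation of Lemma \ref{lem:positions} attached to $\rho/V(G)$, while the paper's truncate-and-contradict version is shorter. One point you should spell out in the bijection step: every inserted node on a path between two nodes of $V(G)$ carries exactly one lost subtree, and no lost subtree can hang directly from a node of $V(G)$ (such a node would be suppressed when passing from $G'$ to $G$); this is what actually delivers the ``same number of lost subtrees'' clause in the definition of loss extension.
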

\begin{proof}
	Let $\R'$ be the reconciliation obtained from $\R$ by deleting all lost subtrees except their root edges. So $\R$ is a completion of $\R'$. We prove that $\R'$ is minimal. Suppose the opposite.
    There is $e'=(x',p_{G'}(x'))\in E(G')\backslash E(G)$ such that by removing $e'$ and suppressing $p_{G'}(x')$ we obtain again a reconciliation, denoted by $\R''$. From the proof of Lemma \ref{lem:positions}, Case 2,
    we have that $\forall s\in V(S)$ and $x,y\in V(G')$, such that $x<y$, $\rho(x)<s<\rho(y)$, $\exists z\in V(G')$ such that $\rho(z)=s$ and $x<z<y$. Let $x_1$ be another child of $p_{G'}(x')$. Since there is no lost subtrees with more than one edge, we have $x_1\in V(G)$. 
	
	Let $s=\rho(p_{G'}(x'))$. Since $\R''$ is a reconciliation, $\exists x''\in V(G'')$ such that  $s=\rho(x'')$ and $x''$ comparable to $x_1$. Take minimal $x''$ with these properties, then (Lemma \ref{lem:lowest_is_spec}) $x''\in\Sigma(\R'')$. After bringing back $e$, we get that $p_{G'}(x')$ or $x''$ becomes a duplication (Lemma \ref{lem:no_two_speciat}). Hence $\Delta(\R'')\subset \Delta(\R')=\Delta(\R)$, which contradicts the optimality of $\R$ (Lemma \ref{lem:LCA_duplic} and Theorem \ref{th:duplic_const}).  \qed
\end{proof}

\subsection{LCA completions are optimal}

\begin{theorem}\label{th:LCAopt}
A completion of the LCA reconciliation is an optimal reconciliation.      
\end{theorem}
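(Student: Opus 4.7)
The plan is to show that every optimal reconciliation $\R^*$ is, up to the choice of $\delta$, a completion of $\R_{lca}$; since by definition all completions of a given reconciliation share the same weight, this immediately yields the theorem. Starting from an arbitrary optimal $\R^*$, I would first apply Lemma \ref{lem:lowering_duplic} iteratively. Each application strictly decreases $dist_{lca}(\R^*)$ while preserving both $\Delta(\R^*)$ and the weight, so after finitely many steps I may assume without loss of generality that $dist_{lca}(\R^*)=0$, i.e.\ $\rho_{\R^*}(x)=\rho_{lca}(x)$ for every $x\in V(G)$.

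Next I would invoke Lemma \ref{lem:optimal->minimal} to produce a minimal reconciliation $\R^*_{min}$ of which $\R^*$ is a completion. Inspecting the construction in that proof (delete every lost subtree down to its root edge), one sees that $\rho$-values on $V(G)$ are preserved, hence $\rho_{\R^*_{min}}\!\restriction_{V(G)}\,=\,\rho_{lca}\!\restriction_{V(G)}$. The central step is then the uniqueness clause of Lemma \ref{lem:positions}: once the restriction $\rho\!\restriction_{V(G)}$ is fixed, the minimal reconciliation is unique up to $\delta$. Since $\R_{lca}$ is itself a minimal reconciliation (as noted before Lemma \ref{lem:positions}) with the same restriction, it follows that $\R^*_{min}$ and $\R_{lca}$ agree as tuples $(G,G',S,\phi,\rho)$ and differ at most in the pairing function $\delta$.

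To conclude I would observe that the definition of loss extension places constraints only on the extended tree $G''$ and on $\rho'$, leaving $\delta'$ unconstrained; consequently the set of completions of a minimal reconciliation, and in particular the minimum completion weight, depends only on the $(G,G',S,\phi,\rho)$-part. Therefore $\omega(c(\R_{lca}))=\omega(c(\R^*_{min}))=\omega(\R^*)$, so $c(\R_{lca})$ attains the optimal weight.

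The main obstacle I anticipate is the last bookkeeping step: cleanly arguing from the definitions of \emph{loss extension} and \emph{completion} that the completion cost is independent of the starting $\delta$, and correspondingly that ``$\R^*$ is a completion of $\R^*_{min}$'' can be reread as ``a completion of $\R_{lca}$'' despite the two possibly carrying different conversion pairings. This is a definitional rather than combinatorial point, but it is the one place where the proof could easily slip if the distinction between ``unique up to $\delta$'' and ``unique'' were handled casually.
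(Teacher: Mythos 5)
Your argument is correct, and its first half (iterating Lemma \ref{lem:lowering_duplic} to reach an optimal reconciliation with $dist_{lca}=0$) coincides with the paper's opening move; but the identification step is genuinely different. The paper, after lowering, proves directly that the lowered optimal reconciliation $\R$ \emph{is} a completion of $\R_{lca}$ by constructing an explicit order- and $\rho$-preserving bijection $f$ between $V(G)$ together with the roots of the lost subtrees of $G'$ and $V(G'_{lca})\backslash\Lambda(\R_{lca})$ (using Theorem \ref{th:duplic_const}, Lemma \ref{lem:duplic_in_G} and Lemma \ref{lem:lowest_is_spec}); it never invokes Lemma \ref{lem:optimal->minimal} or the uniqueness clause of Lemma \ref{lem:positions} in this proof. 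You instead route through exactly those two lemmas: $\R^*$ is a completion of its minimization $\R^*_{min}$, the minimization has $\rho$-restriction $\rho_{lca}$ on $V(G)$ (which, as you could note, already follows from the definition of loss extension without inspecting the proof of Lemma \ref{lem:optimal->minimal}), and uniqueness up to $\delta$ in Lemma \ref{lem:positions} forces $\R^*_{min}$ and $\R_{lca}$ to share the same $(G,G',S,\phi,\rho)$-part. The ``obstacle'' you flag is indeed only definitional and harmless: the definition of loss extension places no condition relating $\delta'$ to $\delta$, so the set of loss extensions, and hence the completion weight, depends only on $(G,G',S,\phi,\rho)$; this closes the chain $\omega(c(\R_{lca}))=\omega(c(\R^*_{min}))=\omega(\R^*)$. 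What each approach buys: yours is shorter, pushes the combinatorial burden into the already-proved Lemma \ref{lem:positions}, and in fact realizes the plan the paper announces at the end of Section 2 (``an optimal reconciliation is always the completion of its minimization''); the paper's explicit correspondence is more self-contained and makes the structural fact explicit that the lost-subtree roots of a lowered optimal reconciliation sit exactly at the non-loss nodes of $G'_{lca}$, a picture that is reused informally in later arguments.
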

\begin{proof}
Let $\R=(G,G',S,\phi,\rho, \delta)$ be an optimal reconciliation with 
$dist_{lca}(\R)$ minimum. We prove that this reconciliation is a completion of the LCA. Since all completions of the LCA have the same weight by definition, this proves that all completions of the LCA are optimal reconciliations.

From Lemma  $\ref{lem:lowering_duplic}$ we get $dist_{lca}(\R)=0$ and therefore $\rho(x)=\rho_{lca}(x)$, $\forall x\in V(G)$. From Theorem $\ref{th:duplic_const}$ and Lemma $\ref{lem:duplic_in_G}$, we have $\Delta(\R)=\Delta(\R_{lca})\subseteq V(G)$.    
	
	Let $t$ be a root of some lost subtree of $G'$. Let us prove that $t\in V(G'_{lca})$, and vice versa, if $t\in  V(G'_{lca})\backslash V(G)$, then $t$ is a root of some lost subtree of $G'$. This correspondence has to be bijective.  
	
	Let us prove that we can establish a bijection \\
    $f:V(G)\cup \{t \mid \text{t is a root of some lost subtree of } V(G')\} \rightarrow V(G'_{lca})\backslash \Lambda(\R_{lca})$ such that $f(x)=x$, $\forall x\in V(G)$, $x<y\implies f(x)<f(y)$, $\rho(x)=\rho(f(x))$.    
	
	First, put $f(x)=x$, $\forall x\in V(G)$.
	
	Let $t\in V(G')\backslash V(G)$ be a root of some lost subtree of $G'$, $\rho(t)=s$, $x<t<p_G(x)$. From Lemmas \ref{lem:duplic_in_G} and \ref{lem:lowest_is_spec}, we have $t\in\Sigma(\R)$ and $t$ is a minimal element of $\rho^{-1}(s)$. Hence, there is no other element $t'\in V(G')$ such that  $\rho(t')=s$, $x<t'<p_G(x)$. Since $t\in \Sigma(\R)$, we have $\rho(x)<\rho(t)\le \rho(p_G(x))$. In $\R_{lca}$ we also have $x'\in V(G'_{lca})$, such that $\rho(x')=s$, and $x<x'<p_G(x)$. Next, put $f(t)=x'$. 
	
	Above correspondence is obviously an injection. Let us prove that it is a surjection. In a similar way, let $x'\in V(G'_{lca})\backslash \Lambda(\R_{lca})$, $\rho_{lca}(x')=s'$. If $x'\in V(G)$, then $x'=f(x')$. Now, assume $x'\notin V(G)$.  Again from Lemmas \ref{lem:duplic_in_G} and \ref{lem:lowest_is_spec} we have that $x'\in \Sigma(\R_{lca})$ and $x'$ is a minimal element of $\rho_{lca}^{-1}(s')$. Let $x<x'<p_G(x)$, $x\in V(G)$. Similarly, we have $\rho_{lca}(x)<\rho_{lca}(p_G(x))$ and $x'$ is the only element from $V(G')\backslash V(G)$ assigned to $s'$ comparable to $x$.  In order for $\R$ to be $\rho$-consistent, there is  a root of the lost subtree of $G'$ (say $t$) such that: $\rho(t)=s'$, and $x<t<p_G(x)$ and it is unique. So, $f(t)=x'$.
	
	We proved the existence of the described correspondence, therefore every lost subtree of $\R$ is obtained as a loss extension in $\R_{lca}$. \qed           
\end{proof} 

The LCA reconciliation is easy to find, it is a well known result that there is a linear time algorithm to compute it \cite{Chauve2009}. What remains in order to derive an algorithm to find an optimal reconciliation is to find a completion. Next section presents a method to find a completion of an arbitrary reconciliation.


\subsection{Finding a completion and the flow of losses}\label{sec:flow}

Finding a completion is a kind of flow problem. We have demands, which are losses, that we supply by duplications, {\em i.e.} we associate them to duplications to form conversions. The amount and distribution of duplications in the phylogenetic tree tells how many losses can be supplied. The number of losses that can be supplied tells the value of a completion. We compute this number recursively along the tree. In consequence we have to define restriction of reconciliations to subtrees, which are {\em multiple reconciliations}.

\begin{definition}[Multiple reconciliation]\label{def:multRecon}
Let $\R_i=(G_i,G'_i,S,\phi_i,\rho_i)$ be DL reconciliations of gene trees $G_i$ with species tree $S$, ($i=1,\ldots,k$).
Let $T_1,\ldots,T_t$ be trees, $\rho'_j:V(T_j)\rightarrow V(S)$ verifying that $\rho'_j(root(T_j))=root(S)$ and $T_j$ is $\rho'_j$-consistent, ($j=1,\ldots,t$). Let $\R'_j=(T_j,S,\rho'_j)$,  ($j=1,\ldots,t$).
Next, let $\delta: \bigcup\Delta(\R_i) \cup \bigcup\Delta(\R'_j) \rightarrow \bigcup\Lambda(\R_i) \cup \bigcup\Lambda(\R'_j)$ be a partial injective function such that $\delta(d)=l$ implies that $d$ and $l$ are assigned to the same node in $V(S)$. Then the structure $\R_{m}=(G,S,\R_1,...,\R_k,\R'_1,...,\R'_t,\delta)$ is called {\em multiple reconciliation}.
\end{definition} 

The crucial property of a multiple reconciliation is that a loss from one tree ($G'$ or $T_i$) can be assigned by $\delta$ to a duplication from another gene tree. The cost of a multiple reconciliation is computed the same way as the cost of a reconciliation. The multiple reconciliation {\em induced} by a reconciliation $\R$ and an edge $e$ is composed of all parts of $\R$ mapped to $S(e)$ by $\rho$.
If it is evident from the context, instead of \emph{multiple reconciliation}, we will write \emph{reconciliation}, allowing additional lost subtrees.
Let $\R_m$ be a multiple reconciliation with $e\in E(S)$. Let $\R_{m1}$ be the reconciliation obtained from $\R_m$ by adding $k$ new lost subtrees with only one root edge assigned to $e$. Obviously $\omega(\R_m)+k=\omega(\R_{m1})$, but it is possible that $\omega(c(\R_m))=\omega(c(\R_{m1}))$ (see Figure \ref{fig:flow}).

\begin{definition}[Flow]
Let $\R$ be a reconciliation, $e\in E(S)$, and $\R(e)$ the multiple reconciliation induced with $\R$ and $e$. Let $\R'(e)$ be the reconciliation obtained from $\R(e)$ by removing all $T_1,\dots,T_l$ the lost trees containing only one loss assigned to $e$. With $\R_k(e)$ denote multiple reconciliation obtained from $\R'(e)$ by adding $k$ lost trees containing only one loss assigned to $e$ ($k$ may be lower or higher than $l$, if $k=l$ then $\R_k=\R$). Let $k'$ be the maximum number such that $\omega(c(\R_{k'}(e)))=\omega(c(\R'(e)))$. With $F(e,\R)=F(e)=k'-l$ is denoted the {\em flow}  of the edge $e$.  
\end{definition}
Note that if $F(e) \ge 0$, then $F(e)$ is the maximum number of extra  losses assigned to $e$ that does not change the weight of the completion of $\R(e)$. Opposite is also true, if $m\ge 0$ is the maximum number of extra losses assigned to $e$ that does not change the weight of a completion of $\R(e)$, then $m=F(e)$.

\begin{figure}[h]
    \centering
    \includegraphics{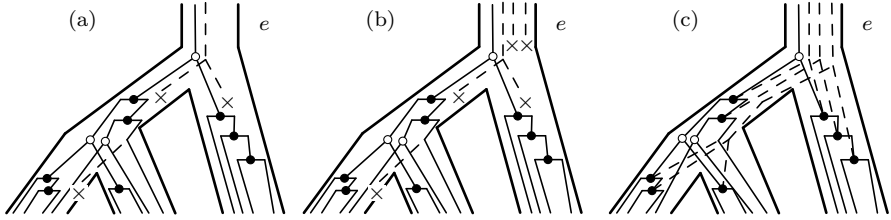}
	\caption{Flow. (a) Multiple reconciliation $\R_1$. (b) Multiple reconciliation $\R_2$ obtained from $\R_1$ by adding $k$ extra ($k=2$) losses to the edge $e$. We have $\omega(\R_2)=\omega(\R_1)+k$. (c) Completion of $\R_2$. Completion of $\R_1$ can be obtained by removing added lost subtrees. We see that $\omega(c(\R_2))=\omega(c(\R_1))$.
    Maximum number $k$ for which the last relation holds is called flow of the edge $e$}
    \label{fig:flow}
\end{figure}

We show how to efficiently compute the flow recursively with Lemma \ref{lem:F(e)_calculation}.
Recall $\D(e)=\D(e,\R_m)$, $\L(e)=\L(e,\R_m)$ denote number of duplications and losses assigned to $e$ in reconciliation $\R_m$.

\begin{lemma}\label{lem:F(e)_calculation}
Let $\R_m$ be a multiple reconciliation, $e\in E(S)$. Then  $$F(e)=max\Big(  min\Big( F(e_l),F(e_r) \Big),  0\Big)+\D(e)-\L(e).$$
\end{lemma}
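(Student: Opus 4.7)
The plan is to induct on the depth of $e$ in $S$, moving from leaves upward.

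For the base case, $e$ is incident to a leaf of $S$, so $e_l$ and $e_r$ are NULL and by convention $F(e_l) = F(e_r) = 0$. Because $S(e)$ has no further structure, speciation extensions are impossible inside $S(e)$, and any completion of $\R_k(e)$ merely pairs losses at the single species node with duplications at that node. This yields $\omega(c(\R_k(e))) = \max(\D(e),\, \L(e) - l + k) + (\text{constants from the rest of }\R_m)$, and the largest $k$ preserving the equality $\omega(c(\R_k(e))) = \omega(c(\R_0(e)))$ is $k' = \D(e) - \L(e) + l$, giving $F(e) = k' - l = \D(e) - \L(e)$, exactly as the formula predicts with $F(e_l) = F(e_r) = 0$.

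For the inductive step, suppose the recurrence holds for $e_l$ and $e_r$. I would decompose $\R_k(e)$ into the sub-multiple-reconciliations at $e_l$ and $e_r$ together with the local structure at $e$. The crucial observation is that the only way to extend a loss at $e$ in the completion without paying for a duplication is via a speciation at $e$: this turns the loss into a free speciation node with two new losses, one at $e_l$ and one at $e_r$, each appearing in the respective child's sub-reconciliation as an extra single-loss tree. Parameterizing by $m$, the number of such speciation extensions, and invoking the inductive hypothesis that adding $m$ extra single-loss trees at $e'$ raises the sub-completion cost by $\max(0, m - F(e')^+)$ where $F^+ := \max(0, F)$, the total completion cost becomes
\begin{align*}
C_k(m) &= \max\bigl(\D(e),\, \L(e) - l + k - m\bigr) + \omega(c(\R(e_l))) + \omega(c(\R(e_r)))\\
&\quad{} + \max\bigl(0,\, m - F(e_l)^+\bigr) + \max\bigl(0,\, m - F(e_r)^+\bigr).
\end{align*}

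I would then minimize $C_k(m)$ over $m \geq 0$. The function is piecewise linear, with breakpoints at $m = F(e_l)^+$, $m = F(e_r)^+$, and $m = \L(e) - l + k - \D(e)$; tracking the slope across these breakpoints shows the minimum is attained at $m^{\ast} = \min(F(e_l)^+, F(e_r)^+)$, yielding $\omega(c(\R_k(e))) = \max(\D(e),\, \L(e) - l + k - M^+) + \omega(c(\R(e_l))) + \omega(c(\R(e_r)))$, where $M^+ = \min(F(e_l)^+, F(e_r)^+) = \max\bigl(0, \min(F(e_l), F(e_r))\bigr)$. Setting this equal to $\omega(c(\R_0(e)))$ and solving for the largest admissible $k$ gives $k' = \D(e) - \L(e) + l + M^+$, and hence $F(e) = k' - l = \D(e) - \L(e) + \max\bigl(0, \min(F(e_l), F(e_r))\bigr)$, as claimed.

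The main obstacle will be the careful bookkeeping in the minimization step: justifying that a single speciation extension at $e$ simultaneously consumes one free slot in each of $S(e_l)$ and $S(e_r)$, so the usable capacity contributed by the children is bounded by their $\min$; verifying the clamp at $0$, because a child with negative flow offers no usable free slots (supply at $e$ cannot be routed past a free speciation without creating losses in both children); and treating the degenerate regime where the local losses dominate, i.e.\ $\L(e) - l + k - M^+ > \D(e)$, so that the $\max$ is realized by losses rather than duplications.
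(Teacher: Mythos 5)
Your overall strategy coincides with the paper's: induction from the leaf edges upward, with the same decomposition of how a loss assigned to $e$ can be made free (either pair it with a duplication assigned to $e$, or extend it by a speciation so that it deposits one extra single-loss tree in each of $S(e_l)$ and $S(e_r)$, whence the capacity $\max(\min(F(e_l),F(e_r)),0)$). Where you go beyond the paper's write-up is in trying to nail the upper bound on $k'$ through the explicit piecewise-linear minimization of $C_k(m)$; the paper's own proof simply exhibits the $m+d$ free losses and asserts $k'=m+d$.

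The gap is in the ingredient that makes your minimization legitimate. You invoke, as ``the inductive hypothesis,'' the claim that adding $j$ extra single-loss trees at a child edge $e'$ raises that child's completion cost by exactly $\max\bigl(0,\,j-\max(0,F(e'))\bigr)$. That is not the inductive hypothesis (the statement being proved by induction is only the recurrence for $F$), and it does not follow from the definition of flow either: the definition, together with the remark after it, gives only that the completion cost is unchanged for $j\le \max(0,F(e'))$ and changes for larger $j$; it says nothing about the size of the change, nor does it rule out that excess losses are partially absorbed by globally re-routing which duplications below serve which losses. That exact-increment property is precisely (an iterated form of) Lemma \ref{lem:FLessThanZero}, which the paper states separately and proves only in Section \ref{sec:algorithm} by a genuinely nontrivial argument (chains of overlapping lost subtrees, switch operations, the maximal free subtree $T_S$); iterating it also needs the observation that adding a loss does not increase the flow. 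Without this ingredient your argument establishes only $F(e)\ge \max(\min(F(e_l),F(e_r)),0)+\D(e)-\L(e)$, since the reverse inequality is exactly a lower bound on $\omega(c(\R_k(e)))$ for $k$ beyond the claimed threshold; your lower-bound direction, which uses only the ``free up to the flow'' semantics, is fine. To make the proof self-contained you would have to strengthen the induction to the full cost profile $k\mapsto\omega(c(\R_k(e)))$ and prove in the inductive step that every completion decomposes according to your parameter $m$, with each child contributing at least its base cost plus its excess (including the case where the extra trees handed to a child are already partially extended) --- and that is where the real work, i.e.\ the content of Lemma \ref{lem:FLessThanZero}, actually lies.
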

\begin{proof}
We will use mathematical induction on $e$. Let $e$ be a leaf edge. Then $e_l=NULL$, $e_r=NULL$, and $F(e_l)=F(e_r)=0$. The only way new losses, assigned to $e$, can be free is by pairing them with duplications in $e$. Therefore $k'=d$ and $F(e)=k'-l=d-l$.

Now, let $e$ be a non-leaf edge, $m=max(min(F(e_l,\R_m(e_l)),F(e_r,\R_m(e_r))), 0)$, $d=\D(e,\R_m(e))$, and $l=\L(e,\R_m(e))$. By inductive hypothesis, we can extend $m$ losses over $e_r$ and $e_l$, so the weight of the completions of $\R_m(e_l)$ and $\R_m(e_r)$ is not changed. We can make $d$ losses, assigned to $e$, free by pairing them with duplications in $e$. Hence $k'=m+d$ and $F(e)=k'-l=m+d-l$. \qed
\end{proof}

\begin{lemma}\label{lem:FLessThanZero}
  Let $\R_1$ be a multiple reconciliation with a root edge $e=(s,p(s))$, and $F(e,\R_1)\le 0$. By assigning an extra loss to $e$ we obtain $\R_2$. Then $\omega(c(\R_2))=\omega(c(\R_1))+1$.
\end{lemma}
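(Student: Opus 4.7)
\medskip

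\noindent\textbf{Proof plan.} The goal is to show $\omega(c(\R_2)) = \omega(c(\R_1)) + 1$, which I would split into matching inequalities. The upper bound $\omega(c(\R_2)) \le \omega(c(\R_1))+1$ is immediate: take any completion of $\R_1$ and append the extra single-loss tree as an unpaired loss, producing a valid loss extension of $\R_2$ whose cost is exactly $\omega(c(\R_1))+1$. All the work is in the matching lower bound $\omega(c(\R_2)) \ge \omega(c(\R_1))+1$. Since costs are integers, it suffices to prove the strict inequality $\omega(c(\R_2)) > \omega(c(\R_1))$; combined with the upper bound this forces equality with a unit jump.

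\medskip

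\noindent The strict inequality I would prove by induction on the species-tree subtree $T(e)$ rooted at $e$, using the recursive formula of Lemma \ref{lem:F(e)_calculation}. In the base case, $e$ is a leaf edge of $S$, so $F(e) = \D(e) - \L(e) \le 0$, i.e. $\L(e) \ge \D(e)$. At a leaf species no loss extension is possible, so any completion can only pair losses with duplications locally: the minimum cost contributed at $s$ is $\max(\D(e),\L(e)) = \L(e)$ for $\R_1$ and $\L(e)+1$ for $\R_2$, a strict increase.

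\medskip

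\noindent In the inductive step $e$ is internal, and Lemma \ref{lem:F(e)_calculation} gives $F(e) = M + \D(e) - \L(e) \le 0$ with $M = \max(\min(F(e_l),F(e_r)),0) \ge 0$, hence $\L(e) \ge \D(e) + M$. I would inspect any completion $c(\R_2)$ and analyse how the new single-loss tree attached to $e$ is extended. Either it stays as a single leaf assigned to $s$, in which case the inequality $\L(e) \ge \D(e)$ forces one more unpaired loss at $s$ than in any completion of $\R_1$; or it is promoted via a speciation at $s$ into further lost subtrees in $T(e_l)$ and $T(e_r)$, which amounts to adding a single-loss tree to the multiple reconciliation induced by $\R_1$ on \emph{each} of $e_l$ and $e_r$. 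The saturation $\L(e) \ge \D(e) + M$ combined with Lemma \ref{lem:F(e)_calculation} applied at the two children then guarantees that after this push-down at least one child edge $e'\in\{e_l,e_r\}$ sits in the regime $F(e',\cdot)\le 0$, so the inductive hypothesis applied to $e'$ supplies the required strict increase of at least 1.

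\medskip

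\noindent The main obstacle is the bookkeeping in the inductive step: one must show that the ``push-down capacity'' $\min(F(e_l), F(e_r))$ is already fully consumed in any optimal completion of $\R_1$ (because $\L(e)\ge \D(e)+M$ forces it), so that pushing one more unit necessarily tips some child into the non-positive-flow regime where the hypothesis applies. All other possibilities for how the added loss could be extended (larger lost subtrees, redistributions of existing extensions) can be reduced to these two cases by viewing an arbitrary binary extension of the added leaf as an iterated sequence of speciation-promotions.
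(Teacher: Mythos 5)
Your plan is sound and, modulo the bookkeeping you yourself flag, it can be completed; but it is a genuinely different proof from the paper's. The paper establishes the lower bound by contradiction at a global level: assuming $\omega(c(\R_2))=\omega(c(\R_1))$, it shows that every lost subtree extending over $e$ must be free in $c(\R_2)$, that some non-free lost subtree nevertheless exists, and then, via the maximal free subtree $T_S$ of Figure \ref{fig:maximal_subtree}, exhibits a chain of pairwise overlapping lost subtrees joining a non-free tree to a tree over $e$; switch operations along this chain carry the non-free loss into a tree over $e$, which is then deleted, contradicting the optimality of $c(\R_2)$. You instead recurse down the species tree, using Lemma \ref{lem:F(e)_calculation} only through $\L(e)\ge \D(e)+\max(\min(F(e_l),F(e_r)),0)$ (which needs only the easy inequality $k'\ge m+d$) together with the definition of flow at the children; this is more elementary, avoids the switch machinery entirely, and in effect re-derives the completion-cost recursion that the paper obtains only later (Lemma \ref{lem:nonFreeLossesInCompletion}), whereas the paper's route exercises the chain/switch tools it needs anyway for Lemmas \ref{lem:CompletionOutput_1} and \ref{lem:SwitchOutput}. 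The one step to make explicit in your inductive step is that tracking the added leaf alone is not enough (it may, for instance, be paired with a duplication at $s$, displacing another loss): you should compare optimal values over all strategies, i.e. over the number $j$ of losses at $e$ extended downwards, writing $\omega(c(\R_1)) = \D(e)+\min_j\bigl[\max(0,\L(e)-j-\D(e))+g_l(j)+g_r(j)\bigr]$ and the same expression with $\L(e)+1$ for $\R_2$, where $g_l(j),g_r(j)$ are the completion costs of the induced child reconciliations with $j$ extra losses; then $\L(e)\ge\D(e)+M$, monotonicity of $g_l,g_r$, and $g(j)\ge g(j-1)+1$ once $j$ exceeds the child's (nonnegative part of the) flow --- your ``tipping'' step, supplied by the induction hypothesis and the maximality in the definition of flow --- close the argument.
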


We postpone the proof of this Lemma to section \ref{sec:algorithm} because it will use some notions introduced later.

The next lemma is a consequence of Lemma \ref{lem:FLessThanZero}.
\begin{lemma}\label{lem:lossRemoved}
 Let $\R_1$ be a (multiple) reconciliation, $e$ is the root edge of $S$, and $F(e,\R_1) < 0$. Let $\R_2$ be a reconciliation obtained from $\R_1$ by removing a loss assigned to $e$. Then $\omega(c(\R_2))=\omega(c(\R_1))-1$
\end{lemma}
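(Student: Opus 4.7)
The plan is to derive Lemma \ref{lem:lossRemoved} as an immediate consequence of Lemma \ref{lem:FLessThanZero} by reversing the direction of the operation: if $\R_2$ is obtained from $\R_1$ by removing a loss assigned to $e$, then equivalently $\R_1$ is obtained from $\R_2$ by adding a loss assigned to $e$. Hence it suffices to check that Lemma \ref{lem:FLessThanZero} applies to $\R_2$, i.e.\ that $F(e,\R_2)\le 0$, and then simply read the conclusion $\omega(c(\R_1))=\omega(c(\R_2))+1$ and rearrange.

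The key step is to track how $F(e,\cdot)$ changes when we delete a single loss at $e$. Using the recursive formula from Lemma \ref{lem:F(e)_calculation},
\[
F(e,\R)=\max\bigl(\min(F(e_l,\R),F(e_r,\R)),0\bigr)+\D(e,\R)-\L(e,\R),
\]
removing a loss assigned to $e$ decreases $\L(e)$ by exactly one while leaving $\D(e)$, and the multiple reconciliations induced on $e_l$ and $e_r$ (so $F(e_l,\cdot)$ and $F(e_r,\cdot)$) unchanged. Therefore $F(e,\R_2)=F(e,\R_1)+1$. Combined with the hypothesis $F(e,\R_1)<0$, i.e.\ $F(e,\R_1)\le -1$, this gives $F(e,\R_2)\le 0$.

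Now apply Lemma \ref{lem:FLessThanZero} to $\R_2$: since its root-edge flow is non-positive, re-adding the removed loss to obtain $\R_1$ increases the weight of the completion by exactly one, so $\omega(c(\R_1))=\omega(c(\R_2))+1$, which rearranges to the claim $\omega(c(\R_2))=\omega(c(\R_1))-1$.

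The only potential subtlety, and what I would double-check before calling the argument finished, is that the single removed loss is indeed a free leaf and not a loss whose removal would violate the definition of a (multiple) reconciliation on $\R_2$; but since the statement posits $\R_2$ as a legitimate reconciliation obtained by removing one loss assigned to $e$, this is built into the hypothesis. No other obstacle arises: the entire content of the lemma is captured by the observation that $F$ shifts by $+1$ when an $e$-loss is deleted, together with Lemma \ref{lem:FLessThanZero} used in its ``add a loss'' direction.
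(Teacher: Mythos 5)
Your argument is correct and is essentially the paper's own route: the paper gives no separate proof, stating only that the lemma is a consequence of Lemma \ref{lem:FLessThanZero}, which is exactly your reversal (view $\R_1$ as $\R_2$ plus one extra loss at the root edge). Your added check via Lemma \ref{lem:F(e)_calculation} that removing the loss shifts the flow by $+1$, so $F(e,\R_2)\le 0$ and Lemma \ref{lem:FLessThanZero} applies, fills in the one detail the paper leaves implicit.
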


Lemmas \ref{lem:FLessThanZero} and \ref{lem:lossRemoved} are stated in a way of adding and removing a loss from the root edge $e$. Similar lemmas are in effect if we remove/add a duplication from/to the root edge $e$. Because of the obviousness we will not state them nor prove them.

Thanks to this flow computation we can find a completion of any reconciliation by a polynomial time algorithm, which pseudo-code is written in Algorithm~\ref{ExtendFreeLosses} and \ref{ExtendLoss}.

 \begin{algorithm}
	\caption{Find a completion of a reconciliation.}\label{ExtendFreeLosses}
	\begin{algorithmic}[1]
	  \Procedure{OneCompletion}{$\R$}	
        \While {there is a loss $l\in\Sigma\backslash\Sigma'$ assigned to edge $e$ such that either there is a duplication that is not a conversion assigned to $e$ or children of $e$ have positive flow}
                 \State \textproc{ExtendLossIntoFreeTree}$(\R,l)$
		      \EndWhile		
		\EndProcedure
	\end{algorithmic}
\end{algorithm}

\begin{algorithm}
	\caption{Extends one loss into a free tree.}\label{ExtendLoss}
	\begin{algorithmic}[1]
		\Procedure{ExtendLossIntoFreeTree}{$\R,l$}		    
          \State $l$ is assigned to $e=(s,p(s))$ and $e_1,e_2$ are children of $e$
          \State $\Delta''(e)$ is the set of all duplications that are not conversion assigned to $e$
          \If {$\Delta''(e)\not=\emptyset$ and $F(e_1)>0$ and $F(e_2)>0$}
          	\State Randomly choose between "assign" and "extend"
           \EndIf
          \If {$F(e_1)\le 0$ or $F(e_2)\le 0$ or "assign" has been chosen}
             	\State Assign $l$ to a random $d\in \Delta''(e)$
           \Else
                \State extend $l$ over $e_1,e_2$
                \State $l_1,l_2$ are new losses assigned to $e_1,e_2$ 
                \State \textproc{ExtendLossIntoFreeTree}$(\R,l_1)$
                \State \textproc{ExtendLossIntoFreeTree}$(\R,l_2)$
             \EndIf
		\EndProcedure
	\end{algorithmic}
\end{algorithm}

Let us introduce a convention. If we say that, {\em e.g.} \emph{$\R'$ is an output of \textproc{ExtendLosses}$(\R)$}, then the procedure \textproc{ExtendLosses}(.) is observed as a standalone procedure with the input $\R$. But if we say that \emph{$\R'$ is an output of \textproc{ExtendLosses}} (no input parameters), then we observe \textproc{ExtendLosses} as a part (sub procedure) of the main procedure, and \textproc{ExtendLosses} receives parameters as described. 

\begin{lemma}\label{lem:ExtendLossIntoFreeTree}
	Let $\R$ be a reconciliation, $l$ is a non-free loss assigned to $e\in E(S)$, $e_1,e_2$ are children of $e$. Next, $\Delta''(e)\ne \emptyset$ or $(F(e_1)>0$ and $F(e_2)>0)$.  Then the procedure \textproc{ExtendLossIntoFreeTree}$(\R,l)$  extends $l$ into a free tree.
\end{lemma}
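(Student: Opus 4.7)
The plan is to proceed by induction on the height of the edge $e$ in $S$ (equivalently, on the size of the subtree $S(e)$), and at each step carefully verify that the preconditions of the lemma are preserved under the recursive calls.

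For the base case, take $e$ to be a leaf edge of $S$, so $e_1=e_2=NULL$ and $F(e_1)=F(e_2)=0$ by the flow formula. The hypothesis $\Delta''(e)\ne\emptyset$ or $(F(e_1)>0$ and $F(e_2)>0)$ therefore forces $\Delta''(e)\ne\emptyset$, and the algorithm necessarily enters the "assign" branch, pairing $l$ with some $d\in\Delta''(e)$. After this assignment, $l$ becomes the associate loss of the new conversion $d$, hence a free loss; the single-edge lost subtree containing $l$ is then a free tree.

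For the inductive step, if the algorithm takes the "assign" branch, then (as in the base case) $l$ is paired with a duplication at $e$ and we are done. Otherwise the "extend" branch is taken, which requires $F(e_1)>0$ and $F(e_2)>0$ before the extension. After $l$ is extended, two fresh losses $l_1,l_2$ are created at $e_1,e_2$; the crucial verification is that each recursive call \textproc{ExtendLossIntoFreeTree}$(\R,l_i)$ satisfies the lemma's precondition. Applying Lemma \ref{lem:F(e)_calculation}, $F(e_i)=\max(\min(F(e_{i,l}),F(e_{i,r})),0)+\D(e_i)-\L(e_i)>0$, so we split on two cases. If $\D(e_i)>\L(e_i)$, then because the number of conversions at $e_i$ is at most $\L(e_i)$, we have $|\Delta''(e_i)|\ge\D(e_i)-\L(e_i)>0$, so $\Delta''(e_i)\ne\emptyset$. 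Otherwise $\D(e_i)\le\L(e_i)$, forcing $\min(F(e_{i,l}),F(e_{i,r}))>\L(e_i)-\D(e_i)\ge 0$, so both children of $e_i$ have positive flow. Either way the precondition is met and the induction hypothesis applies.

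The subtle point I expect to be the main obstacle is that assigning $l_i$ to $e_i$ increments $\L(e_i)$ by one, potentially altering the flow used in the precondition check. However, adding a loss at $e_i$ affects neither $\Delta''(e_i)$ (no conversion is created or destroyed by the extension step itself) nor the flows $F(e_{i,l}),F(e_{i,r})$ of the children of $e_i$, so the case analysis above remains valid after the update. Given that the induction hypothesis produces free subtrees rooted at $l_1$ and $l_2$, gluing them under $l$ yields a free tree whose root replaces $l$, which is exactly the claim that $l$ has been extended into a free tree.
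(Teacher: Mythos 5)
Your proof is correct and takes essentially the same route as the paper's: induction on the edge, with the leaf case forcing the assign branch, and the inductive step using $F(e_i)>0$ to guarantee that each recursive call satisfies the lemma's precondition. Your explicit case analysis via Lemma \ref{lem:F(e)_calculation} (showing $F(e_i)>0$ implies $\Delta''(e_i)\ne\emptyset$ or both children of $e_i$ have positive flow) merely spells out, in contrapositive form and slightly more completely, the remark the paper states at the start of its own proof.
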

\begin{proof}
	Note that if $\Delta''(e)=\emptyset$ and $F(e_1)=F(e_2)=0$, then $F(e)\le 0$.
	
	We will use mathematical induction on $e$. Let $e$ be a leaf edge. Then $e_1=NULL$, $e_2=NULL$ and $F(e_1)=F(e_2)=0$. Hence $\Delta''(e)\ne \emptyset$, and $l$ is assigned to a random duplication from $\Delta''(e)$.
	
	Assume that $e$ is not a leaf edge. If $\Delta''(e) \ne \emptyset$ and $assign$ is chosen, then $l$ is assigned to a random element from $\Delta''(e)$, {\em i.e.} $l$ is extended into a free tree with one edge. If $\Delta''(e)=\emptyset$ or $extend$ is chosen, then $F(e_1)>0$, $F(e_2)>0$ and $l$ is extended into $l_1$ and $l_2$. Since $F(e_i)>0$, $(i=1,2)$ then $e_i$ satisfies {\em if} condition in \textproc{OneCompletion}. Hence, by inductive hypothesis, \textproc{ExtendLossIntoFreeTree}$(\R,l_i)$ extends $l_i$ into a free tree, {\em i.e.} $l$ is extended into free tree.   \qed
\end{proof}

Let us introduce a convention. Let $e=(x,p_{G'}(x))\in E(G')$. If $\rho(p_{G'}(x))=p(\rho(x))$, then we can write $\rho(e)=(\rho(x),\rho(p_{G'}(x)))\in E(S)$. This property does not hold for any edge of $G'$, but it holds for any edge of a lost subtree, since we do not observe lost subtrees with duplications (an optimal reconciliation cannot have a lost subtree with a duplication). Let $T$ be a subtree of $G'$, then $\rho(t)=\{\rho(e)\mid e\in E(T)\}$. Sometimes we will identify lost trees with their root, {\em i.e.} $v$ can denote both a root of a tree or a tree with root $v$. The reason for this is that lost subtrees are dynamical, they extend or switch (an operation introduced later), but their roots are not.

\begin{lemma}\label{lem:CompletionOutput_1}
	Let $\R$ be a reconciliation with non-extended losses, $t_i$ ($i=1,\ldots, k$) and $t'_j$ ($j=1,\ldots, m$) are free and non-free lost subtrees of $c(\R)$ such that $t'_j\ge t_i$ whenever $t_{i}$ and $t'_{j}$ overlap. All non-free lost subtrees $t'_j$ ($j=1,\ldots, m$) are non-extended, {\em i.e.} they have one edge each. Then $c(\R)$ is a possible output of  \textproc{OneCompletion}$(\R)$.   
\end{lemma}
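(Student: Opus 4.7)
The plan is to exhibit a specific sequence of random choices in \textproc{OneCompletion}$(\R)$ which outputs $c(\R)$ with non-zero probability. Since $\R$ has only non-extended losses, each lost subtree of $c(\R)$---whether a free $t_i$ or a non-free $t'_j$---corresponds bijectively to exactly one original loss of $\R$, namely the loss rooted at its root.

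I would proceed in two phases. In the first phase, I would process, in the while-loop of Algorithm~\ref{ExtendFreeLosses}, precisely the losses corresponding to the free subtrees $t_1,\ldots,t_k$, one at a time, in an order compatible with the partial order of their roots in $G'$. For each such loss $l_i$, because $t_i$ actually occurs as a free subtree in the completion $c(\R)$, the current flow and duplication resources must still permit extending $l_i$ into the shape of $t_i$, so the while-condition is met and \textproc{ExtendLossIntoFreeTree}$(\R,l_i)$ is invoked. Inside that recursion I would make the random choice (``assign'' versus ``extend'') that matches the local branching of $t_i$, and at each leaf I would pick the specific duplication paired with that leaf by $\delta$ in $c(\R)$. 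Lemma~\ref{lem:ExtendLossIntoFreeTree} guarantees that these choices are legal, so the recursion terminates producing exactly $t_i$.

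In the second phase, after all $t_i$'s have been produced, the unextended losses remaining in the current reconciliation are exactly the $t'_j$'s. I would then argue that the while-loop must terminate here: if some $t'_j$ at an edge $e$ still satisfied the while-condition, Lemma~\ref{lem:ExtendLossIntoFreeTree} would allow extending $t'_j$ into a free subtree, which would give a loss extension of $c(\R)$ with strictly smaller weight---contradicting that $c(\R)$ is a completion of $\R$ (completions have minimum weight by definition). Hence the algorithm halts with output having exactly the free subtrees $t_i$ and the single-edge non-free subtrees $t'_j$, i.e., with output $c(\R)$.

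The main obstacle is justifying the first phase precisely: when I arrive at $l_i$ in my chosen order, the flow on the current partially-extended multiple reconciliation must still allow the specific extensions and leaf-assignments prescribed by $t_i$, and the $t'_j$'s must remain ``blocked'' throughout. The hypothesis that $t'_j \ge t_i$ whenever $t_i$ and $t'_j$ overlap is what removes the potential conflict: since every single-edge non-free subtree lies above the free subtrees with which it shares species edges, the duplications and positive child-flow consumed when producing each $t_i$ are exactly those already used by $t_i$ inside $c(\R)$, and they do not interfere with the obstructions that later keep the $t'_j$'s unextended. Making this precise requires propagating the effect of each successful extension through the recursive flow formula of Lemma~\ref{lem:F(e)_calculation}, edge by edge along the image of each $t_i$ in $S$.
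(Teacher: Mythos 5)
Your overall plan (replay the free subtrees $t_1,\ldots,t_k$ one at a time through \textproc{ExtendLossIntoFreeTree}, then argue that the remaining single-edge non-free losses no longer satisfy the while-condition because $c(\R)$ has minimum weight among loss extensions of $\R$) is the same as the paper's, and your second phase is correct. The gap is exactly the point you flag yourself: the assertion that ``the current flow and duplication resources must still permit extending $l_i$ into the shape of $t_i$, so the while-condition is met'' is the whole nontrivial content of the lemma, and it is never proved. Lemma \ref{lem:ExtendLossIntoFreeTree} does not supply it: it only says that \emph{if} its preconditions hold at the root edge, the procedure produces \emph{some} free tree; to output precisely $t_i$ you must show that in the partially extended reconciliation $\R_{i-1}$ every non-root edge $e$ of $\rho(t_i)$ satisfies $F(e,\R_{i-1})>0$ (otherwise the guard forbids the ``extend'' step needed to reproduce the branching of $t_i$), plus the easy fact that the duplication paired with each leaf of $t_i$ in $c(\R)$ is still unused. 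Your closing suggestion that this ``requires propagating the effect of each successful extension through the recursive flow formula of Lemma \ref{lem:F(e)_calculation}'' restates the problem rather than solving it, and a purely local propagation along $\rho(t_i)$ is not obviously sufficient, since extending the earlier trees changes flows throughout $S$.

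The paper closes this gap with a global exchange argument that your sketch lacks: process the free trees taking at each step a $t_i$ with \emph{minimal root} among those not yet added (your ``order compatible with the partial order'' should be fixed to this bottom-up choice, since the minimality is used); suppose some edge $e_1$ of $\rho(t_i)$ had $F(e_1,\R_{i-1})\le 0$, so that $\omega(c(\R_{i-1}(e_1)))<\omega(c(\R_{i}(e_1)))$; then consider the maximal subtree $T_S$ of $S(e_1)$ whose internal lost subtrees are linked to $t_i$ by chains of overlapping subtrees. The hypothesis $t'_j\ge t_i$ for overlapping pairs, together with the bottom-up order, forces every subtree in such a chain to be free and already among $t_1,\ldots,t_{i-1}$, hence unchanged; consequently all non-free losses of $S(e_1)$ lie below the leaf edges of $T_S$, where the lost-subtree structure of a completion of $\R_{i-1}(e_1)$ can be copied into a completion of $\R_{i}(e_1)$, giving equal weights and a contradiction. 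Some argument of this kind (or another proof of the positivity of the flows along $\rho(t_i)$ at step $i$) must be added for your proof to stand.
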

\begin{proof}
	Let $\R_0=\R$, $\R_i$ is obtained from $\R_{i-1}$ by extending corresponding loss to the tree $t_i$ ($i=1,\ldots, k$). Hence $\R_k=c(\R)$.
	
	Assume that trees $t_1,\ldots, t_{i-1}$ ($i\ge 1$) are constructed by iterations of  \textproc{ExtendLossIntoFreeTree}. Take $t_i$ that has the minimal root among free lost subtrees that are not added. Let us prove that $F(e,\R_{i-1})>0$, $\forall e\in E(\rho(t_i))\backslash\{root_E(\rho(t_i))\}$. Assume the opposite, let $F(e_1,\R_{i-1})\le 0$, and since free subtree $t_i$ extends over $e_1$, we have that some loss in $S(e_1)$ becomes non-free. More precisely, $\omega(c(\R_{i-1}(e_1))) < \omega(c(\R_i(e_1)))$. This means that $|\Lambda \backslash \Lambda'(c(\R_{i-1}(e_1)))| < |\Lambda \backslash \Lambda'(c(\R_{i}(e_1)))|$. Since trees $t_1,\ldots, t_{i-1}$ (and $t_i$) are free and already present in $\R_{i-1}$ (i.e. $\R_i$), then we can assume that they are not changed in $c(\R_{i-1})$ (i.e. $c(\R_i)$), because we gain nothing by further extending free losses (although it is possible).  
	
	Observe $c(\R_i(e_1))$. Let $T_S$ be the maximal subtree of $S(e_1)$ (see Figure \ref{fig:maximal_subtree}) such that if $v_0 \in V(T_S)\backslash L(T_S)$ is a lost subtree in $c(\R_i(e_1))$, then there are lost subtrees (in $c(\R_i(e_1))$) $v_1,\ldots, v_s$, $\rho(v_0)<\rho(v_1)<\ldots <\rho(v_s)$, $v_i$ overlaps with $v_{i+1}$ $(i=0,\ldots, s-1)$ and $v_s=t_i$. 
	
	Let $v\in V(T_S)\backslash L(T_S)$ be a lost subtree. Let us prove that $v$ is a free tree (in $c(\R_{i-1}(e_1))$, $c(\R_i(e_1))$, and $c(\R)$). From $v\in V(T_S)\backslash L(T_S)$ we have $v=v_0<v_1<\ldots,v_{s-1}<v_s=t_i$ and $v_{i-1}$ overlaps $v_i$. Since $v_{s-1}$ overlaps $t_i$ (in $c(\R_i(e_1))$) and $t_i$ is the same in both $c(\R_i(e_1))$ and $c(\R)$, we have that $v_{s-1}$ overlaps $t_i$ in $c(\R)$, hence $v_{s-1}$ is a free tree in $c(\R)$, {\em i.e.} $v_{s-1}\in \{t_1,\ldots, t_{i-1}\}$. Applying the same argument on $v_{s-1}$, we get $v_{s-2}\in \{t_1,\ldots,t_{i-1}\}$. Proceeding in this manner, we have $v\in \{t_1,\ldots,t_{i-1}\}$, hence $v$ is a free tree.      
	
	Let $f_1,\ldots, f_r$ be the children of leaf edges of $T_S$. From the maximality of $T_S$, we have there is no lost subtree in $c(\R_{i-1}(e_1))$ nor in $c(\R_i(e_1))$ that expands over $f_j$,  $(j=1,\ldots, r)$. All non-free losses from $S(e_1)$ are contained in $S(f_j)$,  $(j=1,\ldots, r)$. This holds for both $c(\R_{i-1})$ and $c(\R_i)$. Therefore the structure of the lost subtrees in $\R_{i-1}(f_j)$ can be identical to the structure of the lost subtrees in $\R_{i}(f_j)$,  $(j=1,\ldots, r)$, and thus obtaining that a completion of $\R_{i-1}(e_1)$ has the same weight as an extension of  $\R_{i}(e_1)$, a contradiction.     
	
	Hence the procedure \textproc{ExtendLossIntoFreeTree} can give us $t_i$, ($i=1,\ldots,k$).  \qed   
\end{proof}

It is proved in Section \ref{sec:algorithm}, in a more general framework, that these procedures indeed compute a completion, and hence, if the input reconciliation is the LCA reconciliation, it computes an optimal reconciliation.

	\section{Zero-flow reconciliations and the space of all optimal reconciliations}\label{sec:all}

Here we introduce zero-flow reconciliations and use them as a hinge to find all optimal reconciliations. Zero-flow (ZF) reconciliations are a subspace of optimal reconciliations and they contain LCA reconciliations, but these inclusions are strict: all sets are distinct. We first show how to find any ZF reconciliation, up to completion, from an LCA reconciliation. Then by a different procedure we show how to access the whole space of optimal reconciliations, up to completion, from a ZF reconciliation. Finally, as these reductions work up to completion, we show how to navigate in all completions for a given reconciliation.

Let $e=(s,p(s))$ be an edge of $S$ and $\R$ a reconciliation. We note
$X(e,\R)=\{d\in V(G)\mid \rho_{lca}(d)\le s, \rho(d)\ge p(s)\}$ the set of nodes (duplications or conversions) which are assigned under $s$ in the LCA reconciliation and above $p(s)$ in $\R$.

\begin{definition}
An optimal reconciliation $\R$ is said to be a {\em zero-flow} (ZF) reconciliation if for all $s$ internal node of $S$ with children edges $e_1$ and $e_2$, $F(e_1,\R)<0 \implies X(e_1,\R) = X(e_2,\R)=\emptyset$.
\end{definition}

In other words, an optimal reconciliation is ZF if all duplications assigned to or above a node $s$, when strictly below in the LCA, verify that the flow the children edges of $s$ is non negative.
By definition LCA reconciliations are ZF ($X(e,\R_{lca})=\emptyset$ for all $e$). But we will see that the converse is not true. Similarly ZF reconciliations are optimal by definition but some optimal reconciliations are not ZF.

\subsection{Computing ZF reconciliations by duplication raising}

Duplication raising consists in changing the position of a duplication from its position in a minimal reconciliation to an upper position in the species tree. It is a concept that was previously used to explore DL reconciliations \cite{Chauve2008}.

\begin{definition}[Node raising]
	Let $\R=(G,G',S,\phi,\rho, \delta)$ be a minimal reconciliation and  $x\in V(G)$. We say that reconciliation $\R'=(G,G'',S,\phi,\rho', \delta')$ is obtained from $\R$ by {\em raising} node $x$ if $\R'$ is a minimal reconciliation such that $\rho(x')=\rho'(x')$, $\forall x'\in V(G)\backslash \{x\}$ and $\rho'(x)=p(\rho(x))$.
\end{definition}

\begin{figure}[h]
	\centering 
	\includegraphics{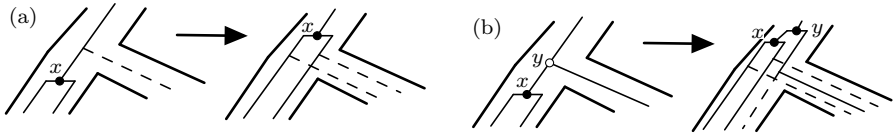}
	\caption{Duplication raising. Given duplication $x$ (a) No speciation. After raising $x$, a new loss is created. An optimal solution can be generated by this operation. (b) We have $y=p_G(x)$ and $\rho(y)=p(\rho(x))$. After raising $x$, $y$ becomes duplication and three new losses are generated. This cannot be optimal}
	\label{fig:duplication_raising}
\end{figure}

Depending on the assignment and event status of the parent node of $x$, raising $x$ has different effects. If $p_{G}(x)$ is a speciation (see Figure \ref{fig:duplication_raising}) and $\rho(p_{G}(x))=p(\rho(x))$, after raising $x$, $p_G(x)$ becomes a duplication and three new losses are generated. This cannot lead to an optimal solution because of the additional duplication (Theorem \ref{th:duplic_const}). If $\rho(p_G(x))>p(\rho(x))$ or $p_G(x)$ is a duplication, after raising $x$, only one  additional loss is generated. This condition, which is necessary to yield an optimal solution, is formalized as follows.
\begin{equation}
x\in \Delta(\R) \wedge  \bigg(p(\rho(x))<\rho(p_G(x)) \vee \Big(p(\rho(x))=\rho(p_G(x)) \wedge p_G(x)\in \Delta(\R) \Big) \bigg) \label{eq:star}
\end{equation}

The next lemma states that raising a duplication cannot decrease the weight of a completion. The proof of the lemma also describes how to lower a duplication. This procedure will be important later in some proofs. 
\begin{lemma}\label{lem:duplRaising_is_expensive} 
	Let $\R$ be a minimal reconciliation, $\R_1$ is a minimal reconciliation obtained from $\R$ by raising a duplication. Then $\omega(c(\R))\le \omega(c(\R_1))$. 
\end{lemma}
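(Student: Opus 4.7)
The plan is to show that from any completion $c(\R_1)$ of $\R_1$, one can construct a reconciliation $\R'$ obtained from $\R$ by loss extensions and satisfying $\omega(\R') \le \omega(c(\R_1))$. Since $c(\R)$ is defined as a minimum weight reconciliation among loss extensions of $\R$, this yields $\omega(c(\R)) \le \omega(\R') \le \omega(c(\R_1))$ as desired. The construction of $\R'$ amounts to undoing the raising operation inside $c(\R_1)$; the proof will therefore also describe explicitly how to lower a duplication, which is the tool invoked later in the paper.

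First, I would describe carefully the local difference between $\R$ and $\R_1$. Writing $s = \rho(x)$ and $s'$ for the sibling of $s$ in $S$, the minimal $\R_1$ extends $\R$ in one of two ways, according to condition~(\ref{eq:star}). In the favorable case, the raising inserts exactly one new speciation node at $p(s)$ (the new position of $x$), one new intermediate duplication node $y$ at $s$ taking over the children $x_l, x_r$ of $x$ in $G$, and a single new loss at $s'$. In the complementary case, $p_G(x)$ additionally changes from speciation to duplication and two further losses are generated. In $c(\R_1)$ the raising-induced loss(es) at $s'$ may have been extended into a larger lost subtree by the completion.

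Next I would define $\R'$ by reversing these changes inside $c(\R_1)$: prune the entire lost subtree(s) rooted at the sibling edges introduced by the raising, collapse the speciation $x$ at $p(s)$ together with its duplication child $y$ at $s$ back into a single duplication $x$ at $s$ with children $x_l, x_r$, and in the complementary case restore $p_G(x)$ to a speciation as well. Restricted to $V(G)$, the resulting $\rho$ agrees with $\R$, and the additional nodes of $\R'$ are precisely those lost subtrees of $c(\R_1)$ that were not involved in the raising; hence $\R'$ is a loss extension of $\R$.

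For the cost comparison, I would enumerate what the reverse operation removes: at least one non-free or free loss at $s'$, possibly an extended lost subtree rooted above it, and in the complementary case an additional duplication together with two further losses. Removing a non-free loss or a duplication strictly decreases the weight by one (recall $d=l=c=1$). Removing a free loss does not change the weight directly, but unpairs its associate duplication, which then contributes the same weight $1$ (as a non-conversion duplication) that it did as a conversion; the net contribution is $0$. Consequently $\omega(\R') \le \omega(c(\R_1))$.

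The main obstacle is the situation where the completion $c(\R_1)$ has extended the raising-induced loss at $s'$ into a large lost subtree whose leaves are free losses paired with duplications scattered below $s'$. Removing this subtree unpairs these duplications, and one must check that this gives a well-defined reconciliation (trivially, since the duplications themselves are untouched) and that the global structure remains a loss extension of $\R$; the delicate bookkeeping is the equality of free-loss and unpaired-duplication costs under $d=l=c=1$, which is what makes the construction work without any strict increase.
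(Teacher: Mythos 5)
Your overall strategy (reverse the raising inside $c(\R_1)$ to obtain a loss extension of $\R$ of no greater weight, then invoke minimality of $c(\R)$) is the same as the paper's, but your description of the raised reconciliation is wrong, and the error hides exactly the one case where the argument is nontrivial. After raising, $x$ maps to $p(s)$ and it must remain a \emph{duplication} there: by Lemma~\ref{lem:spec} a node of $V(G)$ can only be a speciation at its LCA position, and by Lemma~\ref{lem:positions} (Case~3) the unique minimal $\R_1$ has $x$ as a duplication at $p(s)$ whose two children in $G'_1$ are inserted speciations at $p(s)$, each carrying a lost subtree over the edge to $s'$; there is no auxiliary duplication $y\notin V(G)$ at $s$ (indeed such a node cannot occur, and a tree in which one child branch of $x$ is a pure lost subtree is not even an extension of $G$, since pruning it would force the suppression of $x$). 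Two consequences. First, your reversal "prune the lost subtree(s) introduced by the raising and collapse'' is not well defined on the true structure: one of the two lost subtrees over $s'$ must be \emph{kept} (it supplies the speciation at $p(s)$ with a loss over $s'$ that consistency of any extension of $\R$ requires), and only the other removed; this is repairable bookkeeping.

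Second, and this is the genuine gap: since the duplication that changes species is $x\in V(G)$ itself, the completion $c(\R_1)$ may have chosen $x$ as a conversion, i.e.\ $x\in\Delta'(c(\R_1))$ with associate loss $l$ satisfying $\rho(l)=p(s)$. Your cost accounting only covers removing free losses whose partner duplications stay put (net $0$); it never covers a duplication that moves while its partner loss stays put. Lowering $x$ to $s$ and unpairing makes $l$ a non-free loss ($+1$) while $x$ still costs $1$ as a plain duplication, and pruning the raising-induced free subtree over $s'$ gains nothing, so your construction only yields $\omega(\R')\le\omega(c(\R_1))+1$, which does not prove the lemma. The paper's proof devotes its Case~2 precisely to this situation: instead of pruning, it extends $l$ downward so that its new leaf at $s$ stays paired with $x$, and reattaches the subtree $T$ over $s'$ under this extension, keeping the weight unchanged; this "lowering a conversion'' device is also what Lemmas~\ref{lem:X_in_Delta} and \ref{lem:L>0_X=Y=0} later rely on. Without an argument of this kind your proof fails whenever the raised duplication is a conversion in $c(\R_1)$.
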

\begin{proof}
	Let $x$ be the raised duplication, $e_1,e_2\in E(S)$ are siblings, $e$ is their parent, $x$ is assigned to $e_1$ in $\R$ and to $e$ in $\R_1$. 
	
	Let $T$ be the lost subtree such that $root(T)$ is a child of $x$ in $c(\R_1)$ and $T$ is expanded over $e_2$. Observe two cases (see Figure \ref{fig:duplication_lowering}).
\begin{figure}[h]
    \centering 
    \includegraphics{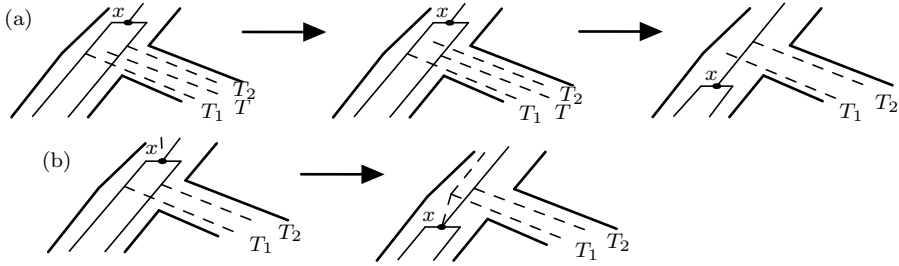}
	\caption{Lowering duplication. (a) Lowering duplication (that is not a conversion). If there is non-free subtree $T$, not associated with $x$, we can make it associated by rearranging the roots. By lowering $x$ we can delete one lost subtree ($T$), and if it is non-free, then we get a cheaper reconciliation. (b) Lowering a conversion. Loss assigned to $x$ can be extended to one of the lost subtrees on the right side. We get a reconciliation of the same weight}
    \label{fig:duplication_lowering}
\end{figure}

	Case 1, $x\notin \Delta'(c(\R_1))$. Start with $c(\R_1)$, place $x$ back to $e_1$, and remove $T$. We get an extension of $\R$ with a cost at most the one of $c(\R_1)$, {\em i.e.} $\omega(c(\R))\le \omega(c(\R_1))$.
	
	Case 2, $x\in\Delta'(c(\R_1))$. Let $l$ be a loss assigned to $x$ in $c(\R_1)$. Start with $c(\R_1)$, place $x$ back to $e_1$, extend $l$, so that in $e_1$ is paired with $x$ (staying free loss), and in $e_2$ is connected to $T$. In this way, we get an extension of $\R$ of the same weight as $c(\R_1)$, i.e. $\omega(c(\R))\le \omega(c(\R_1))$. \qed
\end{proof}

As a consequence of Lemma \ref{lem:duplRaising_is_expensive}, no optimal reconciliation can be obtained by raising a duplication from a reconciliation that has no optimal completion. We will now see in which conditions a duplication raising of a reconciliation with an optimal completion can lead to another reconciliation with optimal completion.

The next lemma states when raising a duplication does not increase the weight of a reconciliation. 
\begin{lemma}\label{lem:duplRaising_F>0}
	Let $\R$ be a minimal reconciliation, and $e_1,e_2\in E(S)$ the children of edge $e$. If $x\in\Delta(\R)$ assigned to $e_1$ satisfies condition $\eqref{eq:star}$, $\R_1$ is a minimal reconciliation obtained by raising $x$, $F(e_1,\R)>0$ and $F(e_2,\R)>0$, then $\omega(c(\R_1))=\omega(c(\R))$.
\end{lemma}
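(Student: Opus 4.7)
The plan is to establish $\omega(c(\R_1)) = \omega(c(\R))$ via two inequalities. The inequality $\omega(c(\R)) \le \omega(c(\R_1))$ is immediate from Lemma \ref{lem:duplRaising_is_expensive}, so the substance lies in the reverse inequality $\omega(c(\R_1)) \le \omega(c(\R))$. My strategy is to build, starting from $c(\R)$, an explicit extension $\R^*$ of $\R_1$ with $\omega(\R^*) = \omega(c(\R))$.

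First I would describe precisely how $\R_1$ differs from $\R$. Under condition \eqref{eq:star}, raising produces exactly one new lost subtree, as noted in the discussion preceding the lemma. Applying Case 3 of Lemma \ref{lem:positions} to compute the minimal extension, $x$ remains a duplication but is now placed at $s$ instead of $s_1$; two new speciation children of $x$ are inserted at $s$, each with one $s_1$-child taking over the descending structure that $x$'s $G'$-children had in $\R$, and one $s_2$-child that is a new lost subtree. In the subcase $\rho(p_G(x)) > s$, the speciation node that sat at $s$ between $x$ and $p_G(x)$ in $\R$ disappears together with its attached $s_2$-loss, so the net count change at $s_2$ is exactly $+1$. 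The resulting edge counts satisfy $\D(e_1,\R_1) = \D(e_1,\R)-1$, $\D(e,\R_1) = \D(e,\R)+1$, and $\L(e_2,\R_1) = \L(e_2,\R)+1$, with all other counts unchanged.

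From Lemma \ref{lem:F(e)_calculation}, these count changes give $F(e_1,\R_1) = F(e_1,\R)-1 \ge 0$ and $F(e_2,\R_1) = F(e_2,\R)-1 \ge 0$ (using $F(e_1,\R),F(e_2,\R)>0$), and a direct computation with the same recursion shows $F(e,\R_1) = F(e,\R)$. The construction of $\R^*$ then proceeds in three steps. (i) If $x$ was paired as a conversion in $c(\R)$ with some loss $l$ at $s_1$, unpair them; the condition $F(e_1,\R)>0$ guarantees a spare duplication in the subtree of $e_1$ that can be re-paired with $l$, preserving the weight. (ii) Perform at $x$ the same structural rewrite as in the passage $\R \to \R_1$: move $x$ from $s_1$ to $s$ and attach the additional speciations and the extra lost subtree at $e_2$. (iii) Extend the new lost subtree at $e_2$ into a free subtree using \textproc{ExtendLossIntoFreeTree}, whose applicability under $F(e_2,\R)>0$ is Lemma \ref{lem:ExtendLossIntoFreeTree}. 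Above $s$, nothing else needs to change because $F(e,\R_1)=F(e,\R)$. This yields an extension $\R^*$ of $\R_1$ with $\omega(\R^*) = \omega(c(\R))$, hence $\omega(c(\R_1)) \le \omega(c(\R))$.

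The main obstacle is making the local weight bookkeeping in step (i)--(iii) watertight: one needs to argue that the re-pairing in the subtree of $e_1$ really can be carried out without cascading effects, and that the free-subtree extension in the subtree of $e_2$ does not collide with the existing loss extensions in $c(\R)$. Both points reduce to tracking the recursive flow along the subtree using Lemmas \ref{lem:F(e)_calculation}, \ref{lem:FLessThanZero}, and \ref{lem:lossRemoved}, together with the duality remark for duplications stated right after Lemma \ref{lem:lossRemoved}; the positivity of $F(e_1,\R)$ and $F(e_2,\R)$ is precisely what is needed for each of the edge-local adjustments to be weight-neutral.
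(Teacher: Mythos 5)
Your proposal follows essentially the same route as the paper's proof: one direction from Lemma \ref{lem:duplRaising_is_expensive}, and the other by turning $c(\R)$ into an extension of $\R_1$ of the same weight, using $F(e_2,\R)>0$ to make the newly created loss at $e_2$ free and $F(e_1,\R)>0$ to keep the old associate loss free when $x$ was a conversion. The only (harmless) divergence is in the case where the associate loss belongs to an extended lost subtree, which the paper handles by raising that loss together with $x$ while you re-absorb it inside $e_1$ via the positive flow; your additional flow bookkeeping ($F(e_1,\R_1),F(e_2,\R_1)\ge 0$, $F(e,\R_1)=F(e,\R)$) is consistent with Lemma \ref{lem:F(e)_calculation} and only supplements the paper's argument.
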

\begin{proof}
	 First, construct an extension of $\R_1$, by using $c(\R)$. By raising $x$, we generate one new loss in $e_2$. Since $F(e_2,\R)>0$, we have $\omega(c(\R(e_2)))=\omega(c(\R_1(e_2)))$, {\em i.e.} the loss generated by the duplication raising can become a free loss. 
     
     Let $x\in\Delta'(c(\R))$ and assigned to $l\in \Lambda'(c(\R))$. If $l$ is non-extended (in $c(\R)$)  and since $F(e_1,\R)>0$, we have that $l$ can be assigned to some other duplication in $e_1$ or extend over children of $e_1$ and become free. If $l$ is part of a lost subtree $T_l$ in $c(\R)$, then by raising $x$, we ca also raise $l$, remove subtree of $T_l$ expanding over $e_2$, leave $l$ assigned to $x$.  
     
     Thus we obtain an extension of $\R_1$, not heavier than $c(\R)$, i.e. $\omega(c(\R_1))\le \omega(c(\R))$. From Lemma \ref{lem:duplRaising_is_expensive}, we have $\omega(c(\R))\le \omega(c(\R_1))$, hence $\omega(c(\R_1))= \omega(c(\R))$.   \qed  
\end{proof}

The next lemma follows directly from Lemma \ref{lem:duplRaising_F>0}.
\begin{lemma}\label{lem:duplRaising_in_optimal}
	Under the hypotheses of Lemma \ref{lem:duplRaising_F>0}, if completions of $\R$ are optimal, then completions of $\R_1$ are optimal.
\end{lemma}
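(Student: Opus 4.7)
The plan is essentially to observe that this lemma is an immediate corollary of Lemma \ref{lem:duplRaising_F>0}. Under the stated hypotheses, that lemma already gives $\omega(c(\R_1)) = \omega(c(\R))$, so there is no combinatorial work left to do beyond unwinding definitions.

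More precisely, I would proceed as follows. Assume the completions of $\R$ are optimal, meaning $\omega(c(\R))$ equals the minimum value of $\omega$ over all reconciliations of $G$ with $S$ compatible with $\phi$. Applying Lemma \ref{lem:duplRaising_F>0} (whose hypotheses are exactly those of the present lemma), we get $\omega(c(\R_1)) = \omega(c(\R))$. Hence $\omega(c(\R_1))$ also equals the global minimum, so any completion of $\R_1$ is optimal. Recall that by the definition of completion all completions of $\R_1$ share the same weight $\omega(c(\R_1))$, so it suffices to read off this common value.

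There is no real obstacle here: the technical content, namely that raising a duplication over edges of strictly positive flow preserves the weight of an optimal completion (producing one new loss on $e_2$ absorbed by $F(e_2,\R)>0$, and possibly rerouting the associate loss of $x$ using $F(e_1,\R)>0$ when $x \in \Delta'$), has already been discharged in Lemma \ref{lem:duplRaising_F>0}. The only conceptual point worth noting is that optimality is a global property while the flow condition is local, so the implication goes through precisely because Lemma \ref{lem:duplRaising_F>0} upgrades the local weight-preservation statement to a statement about the completion cost, which is what optimality is measured against.
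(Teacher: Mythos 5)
Your proposal is correct and matches the paper, which simply notes that the lemma follows directly from Lemma \ref{lem:duplRaising_F>0}: since $\omega(c(\R_1))=\omega(c(\R))$ and the latter equals the global minimum, completions of $\R_1$ are optimal as well. Spelling out that all completions share the same weight is a fine (and faithful) elaboration of the paper's one-line justification.
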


Algorithms \ref{RaiseSeveralDuplications}, \ref{dPosition}, and \ref{PossiblePositions} describe how to generate a reconciliation which does not change the score of completions by raising duplications.

  \begin{algorithm}[h]
  	\caption{Raises duplications}\label{RaiseSeveralDuplications}
  	\begin{algorithmic}[1]
  		\Procedure{RaiseSeveralDuplications}{$d,\R$} 
  		
  		 \For{$d \in \Delta(\R_{lca})$ from top to bottom of $V(G)$} 
  		  \State $\R \leftarrow$ \textproc{RaiseDuplication}($d,\R_{lca}$)
  		 \EndFor
  		\EndProcedure
  	 \end{algorithmic}	
  \end{algorithm}

 \begin{algorithm}
 	\caption{Raises duplication (respecting $F>0$)} \label{dPosition}
 	\begin{algorithmic}[1]
     \Procedure{RaiseDuplication}{$d,\R$}
     \State L $\leftarrow$ \textproc{PossiblePositions}($d,\R$) 
     \State k = \textbf{random}(0,$|L|-1$)  
     \State $\rho(d)=L[k]$  
     \State \textproc{GenerateNewLosses()}
     \EndProcedure
    \end{algorithmic}
 \end{algorithm}

Procedure \textproc{GenerateNewLosses} adds lost subtrees to that the new $\rho$ after raising a duplication is consistent with $S$.

 \begin{algorithm} 
 	\caption{Possible new positions for a duplication.} \label{PossiblePositions}
 	\begin{algorithmic}[1]
 	 \Procedure{PossiblePositions}{$d,\R$}
 	   \State $s=\rho(d)$ 
 	   \State $e=(s,p(s))$ 
 	   \State $e'$ - sibling of $e$
 	   \State $L \leftarrow \{s\}$
 	   \While{$F(e)>0$ and $F(e')>0$ and ($\rho(p_G(d))>p(s)$ or ($\rho(p_G(d))== p(s)$ and $p_G(d)\in \Delta$ ))} 
 	    \State $s=p(s)$
 	    \State $e=(s,p(s))$
 	    \State $e'$ - sibling of $e$
 	    \State $L\leftarrow L+ \{s\}$
 	    
 	   \EndWhile
      
     \EndProcedure
    \end{algorithmic}
 \end{algorithm}

The two next statements demonstrate that, up to completion, all the ZF reconciliations are reached by applying Algorithm \textproc{RaiseDuplication} on a LCA reconciliation.

\begin{lemma}\label{lem:Changeposition}
	Completions of $\R$, an output of \textproc{RaiseDuplication} when the input is the LCA reconciliation, are optimal. 
\end{lemma}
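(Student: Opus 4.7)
The plan is to view \textproc{RaiseDuplication}$(d,\R_{lca})$ as a sequence of $k$ elementary single-level raises of $d$, producing a chain $\R_{lca}=\R_0,\R_1,\ldots,\R_k=\R$, and to prove by induction on $i$ that completions of each $\R_i$ are optimal. The base case $i=0$ is immediate from Theorem \ref{th:LCAopt}. For the induction step, with $d$ sitting on edge $e_1$ in $\R_i$, sibling $e_2$ and parent edge $e$, I would invoke Lemma \ref{lem:duplRaising_in_optimal}: this requires (i) condition $\eqref{eq:star}$ for $d$ in $\R_i$ and (ii) $F(e_1,\R_i)>0$ and $F(e_2,\R_i)>0$.

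These are exactly the tests performed in the while loop of \textproc{PossiblePositions}, except that the algorithm evaluates them once against $\R_{lca}$ rather than freshly against every intermediate $\R_i$. The core of the argument is therefore to show that, for the edges scanned by the loop, the flows and the condition $\eqref{eq:star}$ coincide in $\R_{lca}$ and in each $\R_i$. Condition $\eqref{eq:star}$ is the easy half: a raise of $d$ only modifies $\rho(d)$, so $\rho(p_G(d))$ and the event status of $p_G(d)$ are invariant across $\R_0,\ldots,\R_k$, and the predicate tested in the while loop reads identically at every stage.

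For the flow, I would use the recursion of Lemma \ref{lem:F(e)_calculation} directly. A single raise from $e_1$ to $e$ modifies only $\D(e_1)\to \D(e_1)-1$, $\L(e_2)\to \L(e_2)+1$, $\D(e)\to \D(e)+1$, leaving all other counts unchanged. Substituting into Lemma \ref{lem:F(e)_calculation} gives that $F(e_1)$ and $F(e_2)$ each drop by $1$, while the new $F(e)$ equals the old $F(e)$ precisely when $F(e_1),F(e_2)\geq 1$, which is guaranteed by the strict inequality $F>0$ enforced in the while loop. Since edges outside the subtree rooted at $e$ are untouched by the raise, all flows on the ancestor chain that \textproc{PossiblePositions} scans are preserved as well, and the induction closes.

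The main obstacle is this flow-preservation step; it hinges on the exact cancellation between the unit decrement of $\min(F(e_1),F(e_2))$ and the unit increment of $\D(e)$ in the recursion of Lemma \ref{lem:F(e)_calculation}, which is the reason the algorithm tests $F>0$ rather than $F\geq 0$. Once the flow and event-status invariants are in hand, the inductive application of Lemma \ref{lem:duplRaising_in_optimal} yields optimality of completions of the final output $\R$.
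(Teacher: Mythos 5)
Your proof is correct and takes essentially the same route as the paper: the base case is Theorem \ref{th:LCAopt} and each elementary raise is handled by Lemma \ref{lem:duplRaising_in_optimal}. The only difference is that you explicitly verify, via the recursion of Lemma \ref{lem:F(e)_calculation}, that the flows and condition \eqref{eq:star} tested by \textproc{PossiblePositions} against the input reconciliation remain valid in every intermediate reconciliation of the chain of single-level raises, a point the paper's two-line proof leaves implicit.
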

\begin{proof}
	Completion of LCA reconciliation is an optimal (Theorem \ref{th:LCAopt}), raised duplications satisfy conditions of Lemma \ref{lem:duplRaising_in_optimal}, and by this Lemma every time a duplication is raised we get that $c(\R)$ is an optimal reconciliation.   \qed
\end{proof}

\begin{lemma}\label{lem:RaiseDuplication}
	Let $\R'$ be a minimal reconciliation such that $c(\R')$ is a ZF reconciliation. Then $\R'$ is a possible output of \textproc{RaiseDuplication}. 
\end{lemma}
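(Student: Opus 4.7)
By Theorem~\ref{th:duplic_const} and Lemma~\ref{lem:spec}, $\Delta(\R')=\Delta(\R_{lca})$ and every speciation of $\R'$ coincides with a speciation of $\R_{lca}$ at its LCA position; combined with Lemma~\ref{lem:notunder} this reduces $\R'$ to the data of a raised target $\rho'(d)\ge \rho_{lca}(d)$ for each $d\in\Delta(\R_{lca})$. My plan is to exhibit random choices that make the top-down loop of \textproc{RaiseSeveralDuplications} send each duplication $d_i$ from $\rho_{lca}(d_i)$ to $\rho'(d_i)$, so that the final state is precisely $\R'$.

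First I would order the duplications top-down in $G$ as $d_1,\ldots,d_n$, matching the for-loop, set $\R_0=\R_{lca}$, and let $\R_i$ denote the minimal reconciliation obtained from $\R_{i-1}$ by raising $d_i$ to $\rho'(d_i)$. Combining Lemma~\ref{lem:duplRaising_is_expensive} with the joint optimality of $\R_{lca}$ and $\R'$, the whole chain $\omega(c(\R_0))=\cdots=\omega(c(\R_n))$ is constant, so each $\R_i$ is optimal. It then suffices to show, for every $i$, that $\rho'(d_i)$ appears in the list returned by \textproc{PossiblePositions}$(d_i,\R_{i-1})$.

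The parent-condition part of the while loop is straightforward: top-down processing guarantees that whenever $p_G(d_i)$ is a duplication it has already been placed at $\rho'(p_G(d_i))\ge \rho'(d_i)$, with equality forcing $p_G(d_i)\in\Delta$; and if $p_G(d_i)$ is a speciation then Lemma~\ref{lem:spec} forces $\rho'(d_i)=\rho_{lca}(d_i)$ and there is nothing to raise. For the flow condition, the ZF hypothesis on $c(\R')$ forces $F(e,\R')\ge 0$ and $F(e',\R')\ge 0$ for every edge $e$ on the raising path of any $d_j$ and for its sibling $e'$: the contrapositive of the ZF definition forbids strictly negative flow on an edge whose $X$-set contains a raised duplication. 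Since raising is non-increasing on $F$, we obtain the weak bounds $F(\cdot,\R_{i-1})\ge F(\cdot,\R')\ge 0$.

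Upgrading these weak bounds to the strict inequalities demanded by the algorithm is the main obstacle. For each sibling edge $e'$ on $d_i$'s raising path, the not-yet-performed raising of $d_i$ will add exactly one extra loss on $e'$, strictly decreasing $F(e',\cdot)$ by one, so $F(e',\R_{i-1})\ge F(e',\R_i)+1\ge 1$. For each main-path edge $e$, $d_i$ still sits at $\rho_{lca}(d_i)$ in $\R_{i-1}$, so its $\D$-contribution propagates upward through the recursion of Lemma~\ref{lem:F(e)_calculation} and supplies the needed slack; the analogous contribution of any later $d_j$ whose raising path crosses $e$ handles the remaining cases. Assembling these observations shows $\rho'(d_i)\in L$ at every iteration, and induction on $i$ yields $\R_n=\R'$ as a valid output. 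The delicate part of the full proof is precisely this last strict-positivity argument on main-path edges, since the $\max\circ\min$ recursion mixes the contributions of the sibling subtree with those of the raising path.
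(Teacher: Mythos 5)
Your overall strategy matches the paper's: replay the raising of each duplication from its LCA position to its position in $\R'$ and verify that the flow and parent conditions of \textproc{PossiblePositions} hold at every step, using the ZF hypothesis together with the claim that raising never increases flows. Your treatment of the parent condition and of the sibling edges is sound and is essentially the paper's mechanism: the still-to-come raise of $d_i$ adds a loss on each sibling edge, strictly decreasing its flow, so a flow that were $\le 0$ beforehand would be $<0$ in $\R'$ while the corresponding $X$-set contains $d_i$, contradicting the ZF property.

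However, there is a genuine gap exactly where you flag it: the strict inequality $F(e,\R_{i-1})>0$ for the main-path edges $e$ strictly above $\rho_{lca}(d_i)$. The assertion that the $\D$-contribution of $d_i$ (or of later $d_j$'s) ``propagates upward through the recursion of Lemma~\ref{lem:F(e)_calculation} and supplies the needed slack'' is not an argument: in $F(e)=\max(\min(F(e_l),F(e_r)),0)+\D(e)-\L(e)$ the $\min$ can be governed entirely by the sibling subtree, so a duplication sitting below $e$ on the path need not contribute anything to $F(e)$, and the weak bound $F(e,\R_{i-1})\ge 0$ cannot be upgraded this way. The paper closes this case by the same device you used for sibling edges, applied to the main path: decompose the raise of $d_i$ into single-level steps; at the step where $d_i$ vacates a main-path edge $e$, $\D(e)$ drops by one, so $F(e)$ strictly decreases at that moment. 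If $F(e)\le 0$ just before that step, then afterwards $F(e)<0$, and since flows do not increase under the remaining raisings, $F(e,\R')<0$ while $d_i\in X(e,\R')$ (as $\rho_{lca}(d_i)$ lies below $e$ and $\rho'(d_i)$ above it), contradicting the ZF hypothesis. Hence $F(e)>0$ just before the duplication leaves $e$, and since that intermediate configuration is reached from $\R_{i-1}$ by flow-non-increasing operations, $F(e,\R_{i-1})>0$ as well. (Alternatively, one can push your own idea through by an induction up the raising path: once $F(e_{j-1}),F(e'_{j-1})>0$ is known, the loss added on $e'_{j-1}$ propagates through the $\min$ and strictly decreases $F(e_j)$ over the full raise of $d_i$; but this propagation argument must be made explicit, and it still rests on the non-increase claim that both you and the paper use without proof.)
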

\begin{proof}
	Since $c(\R')$ is an optimal reconciliation, $\R'$ is obtained from LCA by raising duplications that satisfy condition \eqref{eq:star}. By raising a duplication, value of $F(e)$ cannot increase. Let $e_1,e_2\in E(S)$ be siblings, $e$ their parent, $x$ a duplication assigned to $e_1$. Let us raise $x$ to $e$. If before raising $F(e_1)\le 0$ or $F(e_2)\le 0$, then after raising $F(e_1)< 0$ or $F(e_2)< 0$, $X(e_1)\ne \emptyset$, and $X(e_2)\ne \emptyset$, a contradiction. Hence $F(e_1)> 0$ and $F(e_2)> 0$. 
	
	Thus all conditions, for raising a duplication, of the procedure \textproc{RaiseDuplication} are satisfied, hence $\R'$ is a possible output.    \qed
\end{proof}


\subsection{Reduction of optimal reconciliations to ZF reconciliations}

Lemma \ref{lem:RaiseDuplication} states that up to completion, we can generate all ZF reconciliation from LCA reconciliations. We now show how to generate all reconciliations from ZF reconciliations. This is done by conversion raising. Next lemma proves that only conversions are concerned by optimal non ZF reconciliations.

\begin{lemma}\label{lem:X_in_Delta}
	Let $\R$ be an optimal reconciliation, $e_1=(s_1,s),e_2=(s_2,s)\in E(S)$. If $F(e_1,\R)<0$, then $X(e_1,\R)$ and $X(e_2,\R)$ are only conversions. 
\end{lemma}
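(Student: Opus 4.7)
The plan is to proceed by contradiction. Suppose some $d \in X(e_1,\R) \cup X(e_2,\R)$ is not a conversion; I will construct a reconciliation strictly cheaper than $\R$. By Lemma \ref{lem:optimal->minimal} let $\R_0$ be the minimization of $\R$, so $\R$ is a completion of $\R_0$. Let $i \in \{1,2\}$ denote the side of $d$'s LCA (so $\rho_{lca}(d) \le s_i$ and $d \in X(e_i)$).

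First I would reduce to the case where $\rho_{\R_0}(d) = s$ by reversing raising steps one at a time. Each such reverse step preserves or decreases $\omega(c(\cdot))$ by Lemma \ref{lem:duplRaising_is_expensive}; a strict decrease would already contradict the optimality of $\R$. In the surviving case every intermediate reconciliation is optimal, and $F(e_1) < 0$ is preserved throughout the reduction because the reversed steps occur strictly above $s$ and leave $\D$ and $\L$ on $e_1$ and $e_2$ unchanged. Moreover one can select at each stage a completion in which $d$ remains a non-conversion, since $d$ was unpaired in $\R$ and nothing in the local structure above $s$ forces a new pairing.

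With $\rho_{\R_0}(d)=s$, I then lower $d$ one further step into $s_i$ to produce a minimal reconciliation $\R_0^*$. The claim is $\omega(c(\R_0^*)) < \omega(\R)$, which yields the contradiction. This last step simultaneously (a) removes the lost subtrees at the sibling edge $e_{3-i}$ created by $d$'s last raising, and (b) places $d$ as a duplication on $e_i$. In the sub-case $d \in X(e_2)$, operation (a) removes losses rooted at $e_1$; by Lemma \ref{lem:lossRemoved} this strictly decreases the completion weight since $F(e_1,\R) < 0$, while (b) cannot increase it. In the sub-case $d \in X(e_1)$, operation (b) adds a duplication on $e_1$, which by the duplication-analog of Lemma \ref{lem:FLessThanZero} alluded to just after Lemma \ref{lem:lossRemoved} strictly decreases the weight when $F(e_1) < 0$, while (a) cannot increase it.

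The main obstacle will be the bookkeeping of broken conversion pairings among the removed lost subtrees, since some of those losses may be targets of conversions involving other duplications and naively removing them orphans those duplications. The non-conversion hypothesis on $d$ itself is exactly what makes the move clean: were $d$ a conversion, its own target loss would be orphaned and reappear as a non-free loss, cancelling the flow-based saving. I would also need to verify that $F(e_1,\R)$ and $F(e_1,\R_0)$ agree in sign, so that the flow-based lemmas, phrased on multiple reconciliations induced by subtrees, apply directly to statements about $\omega(c(\R_0^*))$.
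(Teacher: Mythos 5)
Your core idea is the same as the paper's: lower the offending non-conversion duplication back past the negative-flow edge and cash in one unit via Lemma~\ref{lem:lossRemoved} (resp.\ the add-a-duplication analogue mentioned after it), contradicting optimality. The genuine gap is the step you defer at the end: you apply these flow lemmas in the \emph{minimal} reconciliation $\R_0$ (after lowering $d$ to $s$), so you need $F(e_1,\cdot)<0$ there, but the hypothesis of the lemma is $F(e_1,\R)<0$, a statement about the completed optimal reconciliation. These two flows need not agree in sign, because $F(e_1,\cdot)$ depends on how lost subtrees are extended: an optimal $\R$ may contain a lost subtree rooted \emph{above} $e_1$ that dips into $S(e_1)$ and deposits a non-free leaf there, and this loss disappears when all lost subtrees are cut back to their root edges. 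A concrete instance: let $S$ have root node, then $P$ with children $A$ and a leaf $Q$, where $A$ has children $B$ (with leaf children $D,E$) and a leaf $C$; take a gene tree whose LCA reconciliation has a duplication at $P$ producing a single loss assigned to $(A,P)$, a loss assigned to $(E,B)$, and a duplication at $C$. The LCA completion (weight $4$) keeps the $(A,P)$ loss non-free, while another optimal reconciliation $\R$ extends that loss through $A$, pairs its $C$-leaf with the duplication at $C$ and leaves its $B$-leaf non-free; then $F((B,A),\R)=-1$ but $F((B,A),\R_0)=0$. So the sign agreement you say you ``would need to verify'' is false in general, and it is exactly the hinge of your final step in both sub-cases; establishing the needed flow bound under the contradiction hypothesis would require a substantive new argument (essentially redoing the switch-based analysis behind Lemma~\ref{lem:FLessThanZero}), not bookkeeping. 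The paper's proof avoids the issue entirely by lowering the elements of $X(e_1,\R)$ inside $\R$ itself, where $F(e_1,\R)<0$ applies directly.

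A secondary, patchable omission: lowering $d$ alone in $\R_0$ need not be the reverse of a legal raising. A descendant of $d$ in $G$ (possibly a conversion, so not excluded by your choice of $d$) can be mapped to the same species node as $d$, and then monotonicity of $\rho$ forbids moving $d$ below it; this is why the paper lowers \emph{all} of $X(e_1,\R)$, treating conversions (weight- and flow-neutral by the Case~2 construction of Lemma~\ref{lem:duplRaising_is_expensive}) along the way. You would need to add the same preprocessing before your step-by-step reduction.
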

\begin{proof} 
Assume the opposite, let $x\in X(e_1,\R)$ and $x$ is not a conversion. Put back (lower) all elements of $X(e_1,\R)$ to $e_1$. The process is performed as in the proof of Lemma \ref{lem:duplRaising_is_expensive} (Figure \ref{fig:duplication_lowering}). If we lower a conversion, the weight of a reconciliation is not changed, as well as $F(e_1)$. If we lower a duplication, then $F(e_1)$ is increased by 1 and the cost of a completion is decreased by one (Lemmas \ref{lem:FLessThanZero}, \ref{lem:lossRemoved} and the comment after), which is a contradiction with the optimality of $\R$. Therefore, $X(e_1,\R)$ does not contain a duplication that is not a conversion.   

Similar arguments apply to $X(e_2,\R)$.    \qed  
\end{proof}

\begin{lemma}\label{lem:RaiseConversions}
	Procedure \textproc{RaiseConversions} does not change the weight of a reconciliation.
\end{lemma}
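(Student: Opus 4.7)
\emph{Plan.} The procedure \textproc{RaiseConversions} differs from \textproc{RaiseDuplication} essentially in that the node being raised is a conversion rather than a plain duplication, so its associated loss must move together with it. I would argue that a single conversion raising preserves the weight, and then the general statement follows by induction on the number of raisings performed by the procedure.

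Fix a single step: let $x\in\Delta'(\R)$ be a conversion at edge $e_1$ being raised to its parent edge $e$, with sibling edge $e_2$. Let $\ell=\delta(x)$ be the paired loss; originally $\rho(x)=\rho(\ell)$ lies on $e_1$. After raising, $\rho'(x)=p(\rho(x))$ lies on $e$, and since the definition of a conversion demands $\rho(x)=\rho(\delta(x))$, the paired loss is repositioned to $e$ as well. The event bookkeeping is then straightforward: no new duplication appears (all new nodes inserted by minimality are either speciations or loss leaves), $x$ remains a conversion, and $\Delta'$ and $\Delta\setminus\Delta'$ are both preserved. Hence $d\cdot|\Delta\setminus\Delta'|+c\cdot|\Delta'|$ is unchanged and it suffices to control $|\Lambda\setminus\Lambda'|$.

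The loss count is the delicate part. The old single-edge free loss $\ell$ at $e_1$ is replaced by a loss still paired with $x$ at $e$ (hence still in $\Lambda'$, contributing nothing). However, minimality of the new reconciliation forces the child of $x$ that carries the $G$-descendants to be connected by a fresh speciation at $e$ with one branch going into $e_1$ and a new single-edge lost subtree on $e_2$. I would show that this induced loss on $e_2$ can be absorbed into the free subtree rooted at $\ell$ by a loss extension: $\ell$ at $e$ is extended into a two-edge subtree with children on $e_1$ and $e_2$, whose $e_1$-leaf plays the role of the vanished old $\ell$ and whose $e_2$-leaf absorbs the induced loss. Because $\ell\in\Lambda'$, the whole extended subtree is a free subtree by the paper's definition, so no additional non-free loss is introduced. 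Combined with Step~1 this gives $\omega(\R')=\omega(\R)$ for a single raising, and iterating proves the lemma.

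The hard part is making the absorption argument rigorous: one must check that the extended lost subtree rooted at $\ell$ is indeed $\rho$-consistent with $S$, that the pairing $\delta$ can be kept injective when $\ell$ is re-rooted at $e$, and that the leaves of this extended subtree can be declared free without conflict with other conversions. This bookkeeping is analogous to Case~2 of the proof of Lemma~\ref{lem:duplRaising_is_expensive} (and its reverse direction), where an extension of a paired loss across the sibling edge was used to compensate exactly for the loss created by the duplication displacement; here the same mechanism applies, except that the loss being extended is already the conversion's partner rather than being newly recruited, so the weight is preserved rather than merely bounded.
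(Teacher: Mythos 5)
There is a genuine gap, and it sits exactly at the step you flag as "the delicate part." Your single-step construction contradicts the paper's definitions: once you extend $\ell$ into a two-edge subtree, $\ell$ becomes an internal node, hence is no longer a loss and cannot serve as $\delta(x)$; the two new leaves map to the children of $\rho'(x)$, so neither of them can be paired with $x$ either (a conversion requires $\rho(x)=\rho(\delta(x))$); and freeness is a property of each leaf lying in the image of $\delta$, not something a subtree inherits from the former status of its root, so "the whole extended subtree is a free subtree" does not follow. Counting events in your scenario (with $d=l=c=1$), the raise trades a conversion plus a free loss (weight $1$) for a duplication/conversion plus two unpaired losses at $e_1$ and $e_2$ (weight $3$): the weight goes up by $2$ rather than being preserved. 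Worse, the configurations \textproc{RaiseConversions} (Algorithm \ref{raise_some}) actually operates on are those with negative flow on a child edge of the target node, and by Lemma \ref{lem:FLessThanZero} an extra loss assigned to an edge with non-positive flow strictly increases the weight of any completion, so there is no way to "absorb" a genuinely new loss on $e_2$ there.

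The missing idea is the hypothesis your argument never uses: \textproc{RaiseOneConversion} (Algorithm \ref{raise_one}) raises the conversion $d$ only within the span of the lost subtree $T_d$ whose leaf is $\delta(d)$ (condition (ii), $s\le \rho(root(T_d))$), and it creates no new losses at all. The branches of $T_d$ hanging along the path from $\rho(d)$ up to the new position are detached and reattached under $d$ (they are exactly the lost branches a naive raise would have to create), and the node of $T_d$ at the new position becomes the new partner loss; the old partner leaf disappears, a new paired (free) leaf appears, and the numbers of duplications, conversions, free and non-free losses are literally unchanged. This is Case 2 of Lemma \ref{lem:duplRaising_is_expensive} run in reverse, which you correctly sense at the end of your proposal — but in that reverse direction the paired lost subtree already spans $e$ and $e_2$ before the raise, so nothing is extended; something is un-extended and recycled. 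Your version, where the partner is a single-edge loss at $e_1$ that gets repositioned and then re-extended, is precisely the raise the procedure does not perform, and in general it cannot be made weight-preserving.
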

\begin{proof}
	 Let $d$ be a raised  conversion, and $T_i$ is a lost subtree whose leaf is assigned to $d$. By raising $d$, we do not create an extra losses, but use existing subtree of $T_i$ and reattach it under $d$ (see Figure \ref{fig:duplication_lowering} $(ii)$ in the opposite direction and Lemma \ref{lem:duplRaising_is_expensive}, Case 2). The loss that was assigned to $d$ is removed, and newly created loss is assigned to $d$ at a new position. In this way we do not change the number of non-free losses, and the number of duplications/conversions, i.e. the weight of the reconciliation is not changed.	 \qed
\end{proof}

\begin{lemma}\label{lem:L>0_X=Y=0}
	Let $\R$ be an optimal reconciliation. We can obtain a ZF reconciliation by lowering some conversions.  
\end{lemma}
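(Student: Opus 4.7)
The plan is to repair any ZF violation in $\R$ by lowering offending conversions until every $X$ set at a flow-negative edge becomes empty. By Lemma \ref{lem:X_in_Delta}, at any $e\in E(S)$ with $F(e,\R)<0$ the elements of $X(e,\R)$ and of the corresponding set on its sibling edge are necessarily conversions, so any non-ZF configuration is caused by some conversion $x$ lying strictly above its LCA position, and it suffices to move each such conversion back toward $\rho_{lca}(x)$ without changing the weight.

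First I would analyze a single lowering step. If $x$ is a conversion at $\rho(x)=s'$ and $t$ is the child of $s'$ on the path to $\rho_{lca}(x)$, I run the construction in the proof of Lemma \ref{lem:RaiseConversions} in reverse: remove the partner loss at $s'$, create a new partner loss at $t$, and use an existing lost subtree on the sibling-of-$t$ side to fill the restructured gap. No new loss leaves are produced and no duplication is added or removed, so the weight is unchanged. The bookkeeping $\D(s')\to\D(s')-1$, $\L(s')\to\L(s')-1$, $\D(t)\to\D(t)+1$, $\L(t)\to\L(t)+1$ leaves $\D(\cdot)-\L(\cdot)$ invariant on both edges, so by Lemma \ref{lem:F(e)_calculation} the flow $F(e',\R)$ is unchanged on every $e'\in E(S)$, and the set $E^-=\{e\in E(S):F(e,\R)<0\}$ is invariant under lowering. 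Since $X(e,\R)$ depends on $\rho(x)$ only through the inequality $\rho(x)\ge p(s_e)$, a lowering step can only remove $x$ from some $X(e)$, never add it; in particular, once $\rho(x)=\rho_{lca}(x)$, $x$ belongs to no $X$ set at all.

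Then I would iterate: while some $e\in E^-$ has $X(e,\R)$ or its sibling's $X$ set non-empty, pick a conversion $x$ in this union (possible by Lemma \ref{lem:X_in_Delta}) and apply one lowering step towards $\rho_{lca}(x)$. The nonnegative integer $\sum_{x\in\Delta'}d_S(\rho(x),\rho_{lca}(x))$ strictly decreases at each step, so the process terminates. Since the flow is invariant and the weight is preserved throughout, the final reconciliation is still optimal; and since every conversion that was previously in some $X(e)$ with $e\in E^-$ has been driven out, the ZF condition is satisfied.

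The hard part is justifying that each single lowering step is structurally executable, i.e.\ that the sibling-of-$t$ side at $s'$ really carries a lost subtree of $G'$ that can be reused so no extra loss is introduced. This should follow from $\rho_{lca}(x)\le t$, which forces every $G$-descendant of $x$ to lie in $S(t)$; combined with $x\in\Delta(\R)$ being a duplication at $s'$ and the $\rho$-consistency of $G'$, the sibling-of-$t$ side below $x$ at $s'$ must then consist entirely of nodes of $G'\setminus V(G)$, which is precisely the lost subtree required by the inverse of the operation in Lemma \ref{lem:RaiseConversions}.
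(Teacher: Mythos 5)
Your proposal is correct and follows essentially the same route as the paper: invoke Lemma \ref{lem:X_in_Delta} to see that the offending elements of the $X$-sets at flow-negative edges are conversions, then lower them with the weight-preserving operation of Lemma \ref{lem:duplRaising_is_expensive}, Case 2 (the inverse of \textproc{RaiseConversions}), so the result is an optimal ZF reconciliation of the same weight. Your added bookkeeping (flow invariance via Lemma \ref{lem:F(e)_calculation} and the decreasing potential for termination) only makes explicit what the paper's one-sweep argument leaves implicit.
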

\begin{proof}
	For all $e \in E(S)$, if $F(e)<0$, take all elements from $X(e)$ and $X(e')$, where $e'$ is the sibling of $e$, and lower them to $e$ and $e'$. In this way we get $X(e)=X(e')=\emptyset$. Since these elements are conversions (Lemma \ref{lem:X_in_Delta}) lower them as described in Lemma \ref{lem:duplRaising_is_expensive}, Case 2. 
    
	In this way we obtain a ZF reconciliation of the same weight as $\R$.  \qed
\end{proof}

In consequence it is possible to reach any optimal reconciliation by an algorithm which explores first ZF reconciliations and raises some conversions as in Algorithm \ref{raise_some}.

 \begin{algorithm} 
	\caption{raises some conversions}\label{raise_some}
	\begin{algorithmic}[1]
	\Procedure{RaiseConversions}{$\R$} 
        \State By convention let $e_1(d)$ denote the edge to which $d$ is assigned, and $e_2(d)$ its sibling in $S$.
		\State Let $C=\{d\mid d\in \Delta', F(e_1(d))< 0 \text{ or } F(e_2(d))< 0\}$
		\State Let $T_d$ be used to denote the lost subtree with a leaf paired with $d$ by $\delta$.
		\While{$C\ne \emptyset$}
			\State $d\in C$ - \textbf{random}
			\State \textproc{RaiseOneConversion$(d,\R,T_d)$}
			\State $C=C\backslash\{d\}$ 
		\EndWhile
		
	\EndProcedure
	\end{algorithmic}
\end{algorithm}

 \begin{algorithm}
 	\caption{raises one conversion}\label{raise_one}
 	\begin{algorithmic}[1]
 	\Procedure{RaiseOneConversion}{$d,\R,T_d$} 
    \State Let $s$ be a random element of $V(S)$ satisfying
    \State (i) $s\ge p(\rho(d))$ 
    \State (ii) $s\le min(\rho(root(T_d)),\rho(p_G(d)))$  
    \State (iii) if $p_G(d)\in \Sigma$ then $s\ne \rho(p_G(d))$
    \State Note $\rho(d)=s_0<s_1<\ldots<s_k=s$
    \State $T_d^j$ - subtree of $T_d$, $\rho(root(T_d^j))=s_j$, $j=\overline{1,k}$
    \State assign $d$ to random $s_i$
    \State node (leaf) of $T_d$, assigned to $s_i$, pair with $d$ (and $d$ stays conversion)  
    \State root of every tree $T_d^j$ position in $G'$, under $d$, at an appropriate position
	\EndProcedure
\end{algorithmic}
\end{algorithm}

\subsection{Finding all completions}

All previous results are valid up to completions. It means that we have an algorithm which is able to detect all duplications that can be conversions in one optimal solution for example. However we don't know all the possibilities by which it is converted. For that we need to enumerate all possible completions.
The algorithm can be described by three procedures, as written in Algorithm \ref{extend_losses}.

 \begin{algorithm}
 	\caption{finds a random completion}\label{extend_losses}
 	\begin{algorithmic}[1]
 	\Procedure{AllCompletions}{$\R$}
 		  \State \textproc{OneCompletion}($\R$) \
 		  \State \textproc{ExtendLossesIntoNonFreeTrees}($\R$)
          \State \textproc{Switch}($\R_c$) 
 	\EndProcedure
 	\end{algorithmic}
 \end{algorithm}

One procedure is to generate a completion by extending losses into free trees, which is described in Section \ref{sec:flow}. In order to generate the full diversity of possible reconciliations, there are two others described here, which consist in extending losses into non free lost subtrees, and switch between subtrees. The first one is described in Algorithms~\ref{ExtendNonFreeLosses} and \ref{ExtendOneNonFreeLoss}.
In Algorithm \ref{ExtendOneNonFreeLoss} a loss is extended over two edges, one with positive $F$-value (say edge $e_1$), and the other with non-positive $F$-value (say edge $e_2$). The part (of the lost subtree) extended over $e_1$ is further extended as a free loss, while the part extended over $e_2$ is further (recursively) extended as a non-free loss.

 \begin{algorithm}
	\caption{randomly extends losses into non-free trees} \label{ExtendNonFreeLosses}
	\begin{algorithmic}[1]
	\Procedure{ExtendLossesIntoNonFreeTrees}{$\R$}
		  \State $\Sigma_1$ is the set of all non-free, non-extended losses in $\R$
		  \For{all $l\in \Sigma_1$}
		    \State \textproc{ExtendOneLossIntoNonFreeTree}$(\R,l)$
		  \EndFor
	\EndProcedure
	\end{algorithmic}
\end{algorithm}

\begin{algorithm}
	\caption{randomly extends losses into non-free trees}\label{ExtendOneNonFreeLoss}
	\begin{algorithmic}[1]
		\Procedure{ExtendOneLossIntoNonFreeTree}{$\R,l$}
           \State $l$ is assigned to $e=(s,p(s))$
           \State $e_1,e_2$ are children of $e$ and $F(e_1)\ge F(e_2)$
           \State Randomly choose between "extend" or not.
           \If{$F(e_1)>0$ and $F(e_2)\le 0$ and "extend" has been chosen}
              \State extend $l$ over $e_1,e_2$
              \State $l_1,l_2$ are new losses assigned to $e_1,e_2$ and $l$ is their parent
              \State \textproc{ExtendOneLossIntoFreeTree}$(\R,l_1)$
              \State \textproc{ExtendOneLossIntoNonFreeTree}$(\R,l_2)$
           \EndIf         
		\EndProcedure
	\end{algorithmic}
\end{algorithm}

\begin{lemma}\label{lem:nonFreeTree}
	Let $l$ be a non-free loss in a reconciliation $\R$. Then procedure \textproc{ExtendOneLossIntoNonFreeTree}$(\R,l)$ extends loss $l$ into a non-free tree. 
\end{lemma}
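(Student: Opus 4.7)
The plan is to induct on the height of the subtree $S(e)$ of $S$ rooted at the lower endpoint of the edge $e$ to which $l$ is assigned. Intuitively, the subtree rooted at $l$ is grown recursively following the same binary branching as $S$ starting from $e$: the "free" branch is handled by Lemma~\ref{lem:ExtendLossIntoFreeTree} and the "non-free" branch by the inductive hypothesis, and the presence of at least one non-free leaf somewhere in the result is what makes the whole subtree non-free.

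The base case is $e$ a leaf edge of $S$. Then $e_1=e_2=\text{NULL}$, $F(e_1)=F(e_2)=0$, the \textbf{if} guard of the procedure fails, and nothing happens; $l$ remains a single non-free loss, which is already a (degenerate) non-free lost subtree. For the inductive step, if the guard fails or ``not extend'' is drawn, the same remark applies. Otherwise the procedure installs children $l_1,l_2$ under $l$ at $e_1,e_2$ and calls \textproc{ExtendOneLossIntoFreeTree}$(\R,l_1)$ followed by a recursive call on $l_2$. To legitimize the first call I would invoke Lemma~\ref{lem:ExtendLossIntoFreeTree}, whose precondition is that either $\Delta''(e_1)\ne\emptyset$ or both children edges of $e_1$ have strictly positive flow. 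Because each conversion at $e_1$ contributes $+1$ to both $\D(e_1)$ and $\L(e_1)$, one has $\D(e_1)-\L(e_1)=|\Delta''(e_1)|-|\Lambda''(e_1)|$, so combined with Lemma~\ref{lem:F(e)_calculation} the hypothesis $F(e_1)>0$ forces either $\min(F(e_{1l}),F(e_{1r}))>0$ or $\Delta''(e_1)\ne\emptyset$, establishing the precondition and ensuring $l_1$ is extended into a free subtree. For the recursive call, $l_2$ is a brand new (hence non-free) loss and $e_2$ is strictly lower in $S$ than $e$, so the inductive hypothesis produces a non-free subtree rooted at $l_2$. The subtree rooted at $l$ then contains at least the non-free loss supplied by $l_2$'s subtree and is therefore itself a non-free lost subtree, as claimed.

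The only real obstacle I expect is the bookkeeping that transports $F(e_1)>0$ into the precondition of Lemma~\ref{lem:ExtendLossIntoFreeTree}, namely the observation that conversions cancel in $\D(e_1)-\L(e_1)$. Once that identity is spelled out, both the base case and the induction reduce to invoking the two results already at hand (Lemmas~\ref{lem:F(e)_calculation} and~\ref{lem:ExtendLossIntoFreeTree}).
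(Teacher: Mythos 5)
Your proof is correct and follows essentially the same route as the paper's: induction on the edge $e$ (leaf edge as base case, guard-fails case yielding a one-edge non-free tree, and in the guard-satisfied case invoking Lemma~\ref{lem:ExtendLossIntoFreeTree} on $l_1$ and the inductive hypothesis on $l_2$). The only difference is that you explicitly verify the precondition of Lemma~\ref{lem:ExtendLossIntoFreeTree} at $e_1$ via Lemma~\ref{lem:F(e)_calculation} and the cancellation of conversions in $\D(e_1)-\L(e_1)$, a detail the paper leaves implicit; this is a welcome clarification, not a deviation.
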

\begin{proof}
	If $l$ is not extended, since it is not assigned to a duplication (conversion) we will assume that it is extended into a non-free tree (with one edge).  
	
	Let $l$ be assigned to the edge $e$, and $e_1,e_2$ are its children. We will use mathematical induction on $e$.
	
	Let $e$ be a leaf edge. Then $e_1=NULL,e_2=NULL$ and $F(e_1)=F(e_2)=0$. In this case, the {\em if} condition is not satisfied, and therefore $l$ is not extended.
	
	Assume that $e$ is not a leaf edge. If the {\em if} condition is not satisfied, then $l$ is not extended, {\em i.e.} it is extended into a non-free tree with one edge. If the {\em if} condition is satisfied, then $F(e_1)>0$ and $F(e_2)\le 0$, and $l$ is extended into $l_1,l_2$. Then \textproc{ExtendOneLossIntoFreeTree}$(\R,l_1)$ extends $l_1$ into a free tree (Lemma \ref{lem:ExtendLossIntoFreeTree}), and \textproc{ExtendOneLossIntoNonFreeTree}$(\R,l_2)$ extends $l_2$ into a non-free tree (inductive hypothesis). Hence $l$ is extended into a non-free tree.  \qed   
\end{proof}

The next lemma is a consequence of Lemma \ref{lem:nonFreeTree}
\begin{lemma}\label{lem:ExtendNonFreeLosses}
	Procedure \textproc{ExtendLossesIntoNonFreeTrees} does not change the weight of a reconciliation.
\end{lemma}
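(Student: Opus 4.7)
The plan is to reduce the lemma to a statement about a single invocation of \textproc{ExtendOneLossIntoNonFreeTree}, and then to control the weight change during that invocation. Since \textproc{ExtendLossesIntoNonFreeTrees} consists only of a loop that calls \textproc{ExtendOneLossIntoNonFreeTree} on each element of the fixed set $\Sigma_1$ of initially non-free, non-extended losses, it suffices to show that each such call leaves $\omega(\R)$ unchanged; the total invariance then follows by summing over the loop.

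The main structural claim I would establish is that the subtree produced by \textproc{ExtendOneLossIntoNonFreeTree}$(\R,l)$ contains exactly one non-free loss and no duplications, with every other loss inside it being a free loss paired with a freshly created conversion. I would prove this by induction on the edge $e$ to which $l$ is assigned, paralleling the induction already used in Lemma \ref{lem:nonFreeTree}. The base case (leaf edge, no extension) is immediate: the only leaf of the subtree is $l$ itself, which stays non-free. In the inductive step, if ``extend'' is not chosen the subtree again reduces to $l$ alone; if ``extend'' is chosen, then $l$ becomes an internal speciation node with children $l_1,l_2$, and by Lemma \ref{lem:ExtendLossIntoFreeTree} the subtree below $l_1$ is a free tree (all its leaves paired with duplications which thereby become conversions), while by the inductive hypothesis the subtree below $l_2$ contributes exactly one non-free loss.

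Given this structural claim, the weight bookkeeping is straightforward. Writing $\omega(\R)=|\Lambda\setminus\Lambda'|+|\Delta\setminus\Delta'|+|\Delta'|$ under the standing assumption $d=l=c=1$, a single call to \textproc{ExtendOneLossIntoNonFreeTree} removes one non-free loss (namely $l$) and creates one new non-free loss deep inside the extended subtree, so $|\Lambda\setminus\Lambda'|$ is unchanged. No duplications are created or destroyed; the pairing of free losses inside the $l_1$-subtree only moves some elements from $\Delta\setminus\Delta'$ into $\Delta'$, which preserves the sum $|\Delta\setminus\Delta'|+|\Delta'|$ because the two corresponding weights are equal. Hence $\omega$ is invariant under each invocation.

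The delicate point I anticipate is the bookkeeping around the pairing step inside \textproc{ExtendLossIntoFreeTree}: one must check that the new free leaves below $l_1$ are paired only with duplications in $\Delta''(e)$, i.e.\ duplications that are not already conversions, so that each pairing genuinely trades one element of $\Delta\setminus\Delta'$ for one element of $\Delta'$ rather than double-counting. This is an invariant that must also be seen to survive the outer loop, since extending one loss may alter the flow values $F(e)$ used by the ``if'' tests of subsequent iterations. Once this consistency is verified using Lemma \ref{lem:ExtendLossIntoFreeTree} and the flow computation of Lemma \ref{lem:F(e)_calculation}, the claim that $\omega$ is preserved by the full procedure follows immediately.
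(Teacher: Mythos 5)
Your proof is correct and follows the same route the paper intends: the paper leaves this lemma as an immediate consequence of Lemma \ref{lem:nonFreeTree}, and your argument simply makes explicit the bookkeeping behind that reduction (one non-free leaf survives per extended loss, all other new leaves are free and paired with previously unpaired duplications, and $d=c$ makes each such pairing weight-neutral). No gap to report.
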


\begin{lemma}\label{lem:CompletionOutput}
	Let $\R$ be a reconciliation with non-extended losses, $t_i$ ($i=1\ldots k$) and $t'_j$ ($j=1\ldots m$) are free and non-free lost subtrees of $c(\R)$ such that $t'_j\ge t_i$ whenever $t_{i}$ and $t'_{j}$ overlap. Then $c(\R)$ is a possible output of series of procedures \textproc{OneCompletion}$(\R)$,  \textproc{ExtendLossesIntoNonFreeTrees}$(\R)$.   
\end{lemma}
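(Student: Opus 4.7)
The plan is to decompose the extension $\R \to c(\R)$ into two stages through an intermediate reconciliation $\R^*$ that matches the output of \textproc{OneCompletion}$(\R)$, and then argue that \textproc{ExtendLossesIntoNonFreeTrees}$(\R^*)$ can finish the job to produce $c(\R)$.

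Let $\R^*$ be the reconciliation obtained from $\R$ by extending losses so that the free lost subtrees of $\R^*$ are exactly $t_1,\ldots,t_k$ and every non-free lost subtree is a single edge (the root edge of the corresponding $t'_j$). Since whether a lost subtree is free depends only on the $\delta$-pairing with a duplication, not on the shape of the subtree, $\R^*$ has the same $\Delta, \Delta', \Lambda'$ as $c(\R)$ and differs only in that the non-free subtrees are collapsed. In particular $\omega(\R^*) = \omega(c(\R))$, so $\R^*$ is itself a completion of $\R$. The overlap hypothesis $t'_j \ge t_i$ transfers to $\R^*$, since the $t_i$ are unchanged and the $t'_j$ are replaced by their root edges.

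First I would apply Lemma \ref{lem:CompletionOutput_1} to $\R^*$: its hypotheses are precisely what we arranged above (non-free lost subtrees are non-extended, and the overlap inequality holds), so $\R^*$ is a possible output of \textproc{OneCompletion}$(\R)$. Then it remains to show that $c(\R)$ is a possible output of \textproc{ExtendLossesIntoNonFreeTrees}$(\R^*)$. The procedure iterates over the non-extended non-free losses of $\R^*$ and calls \textproc{ExtendOneLossIntoNonFreeTree} on each one. I would show by induction on the depth of $t'_j$ that the non-deterministic choices in \textproc{ExtendOneLossIntoNonFreeTree}$(\R^*, l_j)$ (where $l_j$ is the loss at the root edge of $t'_j$) can produce exactly $t'_j$. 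At each internal node of $t'_j$, one child edge $e_1$ lies above some overlapping free subtree $t_i$ (by the overlap hypothesis, such overlap forces $t'_j$ to extend at least as high, so the branching inside $t'_j$ respects this), and the other child edge $e_2$ is where the non-free extension continues; the corresponding flow inequalities $F(e_1)>0$ and $F(e_2)\le 0$ are exactly what the \textbf{if}-guard in \textproc{ExtendOneLossIntoNonFreeTree} tests, and are satisfied because a positive flow on $e_1$ is precisely what enabled the free subtree $t_i$ to be created in step one, while a non-positive flow on $e_2$ is what makes the non-free extension necessary there.

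The main obstacle will be step three: pinning down the flow values at every internal node of $t'_j$ and showing that the recursive structure of \textproc{ExtendOneLossIntoNonFreeTree}, calling \textproc{ExtendOneLossIntoFreeTree} on the positive-flow side (justified by Lemma \ref{lem:ExtendLossIntoFreeTree}) and recursing on the non-positive-flow side, reconstructs $t'_j$ exactly. The overlap hypothesis $t'_j \ge t_i$ is what ultimately makes this dichotomy at every branching node hold; without it, both children of an internal node of $t'_j$ could sit in a free region or in a non-free region, and the procedure would be unable to generate that shape. Once this case analysis is completed, composing the two stages yields $c(\R)$ as a legal output of the sequence \textproc{OneCompletion}$(\R)$; \textproc{ExtendLossesIntoNonFreeTrees}$(\R)$, as required.
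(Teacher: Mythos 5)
Your overall plan coincides with the paper's: first realize the free trees $t_1,\ldots,t_k$ through \textproc{OneCompletion} via Lemma \ref{lem:CompletionOutput_1} (your intermediate $\R^*$ is, up to the harmless collapsing of the $t'_j$, the reconciliation $\R_k$ that the paper builds by adding the $t_i$ one at a time), and then argue that \textproc{ExtendLossesIntoNonFreeTrees} can rebuild each $t'_j$. The genuine gap is that this second stage, which you yourself flag as ``the main obstacle'', is the actual content of the paper's proof and is not established by your sketch. What has to be shown, for each $j$ and with flows evaluated in the intermediate reconciliation $\R'_{j-1}$ obtained after $t_1,\ldots,t_k,t'_1,\ldots,t'_{j-1}$ have been added (the flows change as trees are added, so reasoning about $\R^*$ alone does not suffice), is: (i) the root edge of $t'_j$ has $F\le 0$, and (ii) at every internal branching of $\rho(t'_j)$ the guard of \textproc{ExtendOneLossIntoNonFreeTree} (and, inside the free branches of $t'_j$, the guard of \textproc{ExtendLossIntoFreeTree}) holds, with the unique non-free leaf continuing on the non-positive-flow side. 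Your justification for (ii) is circular and partly misdirected: the free branch hanging off an internal node of $t'_j$ is a piece of $t'_j$ itself, not one of the $t_i$ produced in stage one, so ``a positive flow on $e_1$ is precisely what enabled the free subtree $t_i$'' does not apply; and ``a non-positive flow on $e_2$ is what makes the non-free extension necessary there'' is exactly the statement that needs proof. The paper obtains these inequalities by exchange arguments against the minimality of $c(\R)$: for a spine node $e'$ with $F(e',\R'_{j-1})\le 0$, if both children had positive flow one could extend a non-free loss assigned to $e'$ over the two children (making it free) and prune $t'_j$ back to a single edge, yielding an extension of $\R$ strictly cheaper than $c(\R)$, a contradiction; a symmetric argument rules out both children having non-positive flow, and a separate case analysis handles $F(e')>0$. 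Without arguments of this kind the claim that the if-guards are satisfied along $\rho(t'_j)$ is unsupported, and Lemma \ref{lem:ExtendLossIntoFreeTree} alone does not help, since it only guarantees extension into \emph{some} free tree, not into the prescribed branches of $t'_j$.

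A smaller inaccuracy: collapsing the $t'_j$ to their root edges does change $\Delta'$ and $\Lambda'$ whenever a non-free subtree of $c(\R)$ carries free leaves paired with conversions; the equality $\omega(\R^*)=\omega(c(\R))$ still holds, but only because the weights satisfy $d=c$, not because the event sets are identical.
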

\begin{proof}
	Let $\R_0=\R$, $\R_i$ is obtained from $\R_{i-1}$ by extending corresponding loss to the tree $t_i$ ($i=1,\ldots, k$), $\R'_0=\R_k$, $\R'_j$ is obtained from $\R'_{j-1}$ by extending corresponding loss to the tree $t'_j$ ($j=1,\ldots, m$). Hence $\R'_m=c(\R)$.
	
	The procedure \textproc{OneCompletion} can give us $t_i$, ($i=1,\ldots,k$) (Lemma \ref{lem:CompletionOutput_1}). Now we will prove that \textproc{ExtendLossesIntoNonFreeTrees} can give us $t'_j$, ($j=1,\ldots, m$). 
	
	Assume that $t_i$,  ($i=1,\ldots,k$), $t'_1,\ldots, t'_{j-1}$ ($j\ge 1$) are added. Let us prove that \textproc{ExtendLossesIntoNonFreeTrees} can add $t'_j$. Let  $e_1,e_2\in E(S)$, $e=(s,p(s))$ is their parent, and $\rho(l'_j)=s$, where $l'_j$ extends into $t'_j$. If $F(e,\R'_{j-1})>0$, then $l'_j$ can be free, thus obtaining a cheaper reconciliation than $c(\R)$, a contradiction, so $F(e,\R'_{j-1})\le 0$.  
	
	Let $e'_1,e'_2\in E(\rho(t'_j))$ be siblings, $e'$ their parent, and $F(e_1',\R'_{i-1})\ge F(e_2',\R'_{i-1})$. Subtree $t'_j$ expands over $e'_1,e'_2$ and not necessarily originating at $e'$. Observe two cases.
	
	Case 1, $F(e',\R'_{j-1})\le 0$. If $F(e'_1,\R'_{j-1})\le 0$ (and $F(e'_2,\R'_{j-1}) \le 0$), then by pruning  $t'_j$ both $e'_1$ and $e'_2$ don't gain a loss, so the cost of reconciliations $c(\R'_{j-1}(e'_1))$ and $c(\R'_{j-1})(e'_2)$ will not rise in $\R'_j$, but $\R'_j$ gain one non-free loss (pruned $t'_j$). Hence we gain a cheaper reconciliation, a contradiction. 
 	
	Assume $F(e'_1,\R'_{j-1})>0$ and $F(e'_2,\R'_{j-1}) > 0$. Since $F(e',\R'_{j-1})\le 0$, there is a loss $l$ assigned to $e'$ that is non-free (in $\R'_{j-1}$). Then we can extend $l$ over $e'_1,e'_2$ so it becomes free, and prune $t'_j$ to a single edge ($t'_j$ stays non-free). Hence obtaining a cheaper reconciliation than $c(\R)$, a contradiction. 
	
	Case 2, $F(e',\R'_{j-1})>0$. If $F(e'_2,\R'_{j-1})\le 0$, then $e'$ has a duplication that is not a conversion.  At least one of the subtrees of $t'_j$ expanding over $e'_1,e'_2$ is a free tree. Assume that it is the one expanding over $e'_1$. Next, we can prune subtree of $t'_j$ so that $t'_j$ has a leaf assigned to $e'$ and to the duplication, thus becoming a free loss. Since $F(e'_2,\R'_{j-1})\le 0$ there is one non-free loss in $\R'_{j-1}(e'_2)$ that can become free, thanks to the fact that $t'_j$ does not expand over $e'_1$ anymore. Making this loss free enable us to obtain a cheaper reconciliation than $c(\R)$, a contradiction.  
   
	From the Cases 1 and 2, we have that if $F(e',\R'_{j-1})\le 0$, then $F(e'_1,\R'_{j-1})>0$, $F(e'_2,\R'_{j-1})\le 0$, and if $F(e',\R'_{j-1})> 0$, then $F(e'_1,\R'_{j-1})>0$, $F(e'_2,\R'_{j-1})> 0$. Hence conditions along $\rho(t'_j)$ of  \textproc{ExtendLossesIntoNonFreeTrees} are satisfied, and therefore $t'_j$ can be obtained by this procedure.      \qed
\end{proof}

To obtain all possible lost subtrees in an optimal reconciliation, we need to introduce an operation that exchanges parts of the lost subtrees. 
Notice that a lost subtree with more than one non-free leaf cannot appear in an optimal reconciliation. 

\begin{definition}[Switch operation on a binary rooted trees]
	Let $T_0$ and $T_1$ be binary rooted trees and $t_i\in V(T_i)\backslash\{root(T_i)\}$ $(i=0,1)$. A {\em switch} operation on $T_0$ and $T_1$ around $t_0$ and $t_1$ creates new trees by separating subtrees $T_i(t_i)$ from $T_i$ and joining them with $p(t_{1-i})\in T_{1-i}$ $(i=0,1)$. 
\end{definition}

\begin{subfigures}
\begin{figure}
	\centering 
	\includegraphics{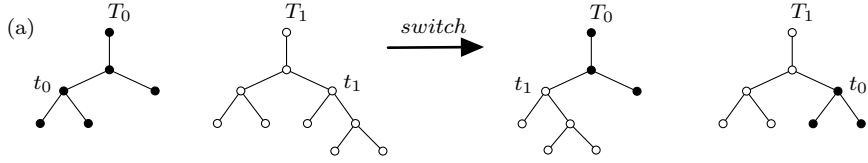}
	\caption{Switch operation between binary trees.  (a) Switch between $T_0$ and $T_1$ around $t_0$ and $t_1$}
    \label{fig:switcha}
\end{figure}
\begin{figure}
	\centering 
	\includegraphics{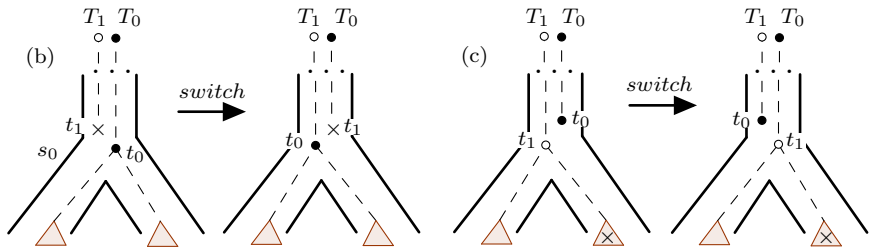}
	\caption{Switch operation between reconciliations. (b-c) Switch on a reconciliation. Exactly one lost subtree receives a (nontrivial) subtree from the other lost subtree. A subtree with a non-free loss has to be involved in a switch operation. An empty triangle denotes a free subtree, while a triangle with $x$ denotes a non-free subtree}
	\label{fig:switchb}
\end{figure}
\end{subfigures}

\begin{figure}
	\centering
	\includegraphics{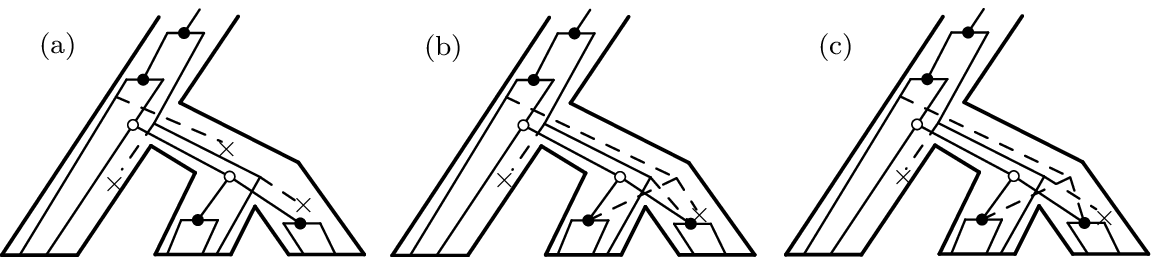}
	\caption{Example for necessity of switch operation. (a) Minimal reconciliation. (b) The completion. We have one free and two non-free trees. (c) A completion obtained by switch operation. Note that this completion is not obtainable by standard extension into free and non-free trees}
\end{figure}

\begin{definition}[Switch operation on a reconciliation ]\label{def:switch}
	Let $\R$ be a reconciliation, $T_0$ and $T_1$ free and non-free lost subtrees, $l\in L(T_1)$ is a non-free loss, $p$ is a path in $S$ from $\rho(l)$ to $\rho(root(T_1))$. Assume there exists a minimal element $s_0\in \{s\mid s\in V(p)\cap V(\rho(T_0)) \} \backslash \{\rho(root(T_0)), \rho(root(T_1))\}$, and $t_i\in V(T_i)$ such that $\rho(t_i)=s_0$ $(i=0,1)$. By {\em switch} operation on $T_0$ and $T_1$ we mean a switch operation on the binary trees $T_0$ and $T_1$ around $t_0$ and $t_1$.  
\end{definition} 

Switch operation on a reconciliation is defined only for one free and one non-free lost subtree, and is possible only if trees $T_0$ and $T_1$ {\em overlap}, {\em i.e.} if $\rho(v_0) \in \rho(T_1)$ or $\rho(v_1) \in \rho(T_0)$, where $v_i=root(T_i)$, $(i=0,1)$.
In the case $\rho(v_0) \in \rho(T_1)$, it must be $\rho(l)<\rho(v_0)$, where $l$ is a non-free leaf of $T_1$. In these cases we say that $T_0$ and $T_1$ are {\em switchable}. We have that either $T_0$ gives a (non-trivial) subtree to $T_1$, or $T_1$ gives a (non-trivial) subtree to $T_0$, but both cannot happen.  

When we apply a switch operation two times on the same trees, around the same nodes, we obtain starting trees, i.e. switching is self-inverse operation. After switch operation, involved trees still overlap. 

For simplicity of notation, we introduce some conventions. We write \emph{tree} instead of \emph{lost subtree}. We will identify a tree with its root, {\em i.e.} instead of writing \emph{a tree with the root $v$}, we will use \emph{a tree $v$}. We do this because, when switching, trees are changed, but the roots are not. When we write $v_0<v_1$, we mean $\rho(v_0)<\rho(v_1)$. Number of non-free leaves in a tree $v$ is denoted by $\omega(v)$, thus $\omega(v)=0$ means that $v$ is a free lost subtree, and $\omega(v)=1$ means that $v$ is a non-free lost subtree.   

If we will apply a switch operation on switchable trees $v_0$, $v_1$ such that $\omega(v_1)=1$ and $\omega(v_0)=0$, we say that $v_1$ carries over a (non-free) loss to $v_0$.

The next lemma is obvious.  
\begin{lemma}\label{lem:switchIsConst}
	Switch operation does not change the weight of a reconciliation.
\end{lemma}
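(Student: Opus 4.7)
The plan is to observe that the switch operation is essentially a surgery that exchanges the two subtrees $T_0(t_0)$ and $T_1(t_1)$ rooted at nodes with the same $\rho$-image $s_0$, and then to argue that each of the three quantities entering $\omega(\R)$ is preserved by this surgery. First I would verify that the output of a switch is still a valid reconciliation. The only parent-child relationships altered are those at $p_{G'}(t_0)$ and $p_{G'}(t_1)$: at each of these nodes a child is replaced by one with identical $\rho$-value since $\rho(t_0)=\rho(t_1)=s_0$. Consequently the $\rho$-consistency conditions (D) and (S) from Definition \ref{def:consis} are checked against the same multiset of $\rho$-values at every internal node, so they remain satisfied. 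The vertex set $V(G')$ and the restriction of $\rho$ to $V(G')$ are left unchanged.

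Next I would check that $|\Delta|$ and $|\Lambda|$ are invariant. No internal node or leaf is created or destroyed by the switch, so $\Lambda=L(G')\setminus L(G)$ is the same set before and after. For an internal node, its event type depends only on its own $\rho$-value and those of its two children; these are preserved (trivially for all nodes except $p_{G'}(t_0)$ and $p_{G'}(t_1)$, and by the key equality $\rho(t_0)=\rho(t_1)$ for those two). Hence the set $\Delta$ is identical before and after. For the pairing, I would keep the same $\delta$: every conversion $d$ and its associate loss $\delta(d)$ survive the switch with the same $\rho$-values, so $\delta$ remains an injective partial function satisfying $\rho(d)=\rho(\delta(d))$, and therefore $|\Delta'|$ and $|\Lambda'|$ are unchanged. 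Plugging into the weight formula $\omega(\R)=l|\Lambda\setminus\Lambda'|+d|\Delta\setminus\Delta'|+c|\Delta'|$ yields the equality.

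The one subtlety worth emphasizing is that the switch may reorganize which lost subtree contains the non-free leaf $l$: since $l$ lies under $t_1$ by construction, it travels with $T_1(t_1)$ and so ends up attached to the root that was previously the root of $T_0$. Thus the \emph{free} and \emph{non-free} labels of the two resulting lost subtrees can swap. However, this is a purely bookkeeping reorganization and does not alter the total counts of free versus non-free losses, because $\Lambda'$ is defined through $\delta$, not through how losses are clustered into lost subtrees. This will be the point to state most carefully, but once stated it completes the proof.
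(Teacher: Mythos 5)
Your proof is correct: since $\rho(t_0)=\rho(t_1)=s_0$, the switch preserves the vertex set, all $\rho$-values, the (D)/(S) status of every internal node, and the validity of $\delta$, so every term of $\omega(\R)$ is unchanged; this is precisely the routine verification behind the paper's claim, which it states without proof as ``obvious.'' Your closing remark that only the free/non-free labelling of the two lost subtrees may swap, while $\Lambda'$ itself is fixed by $\delta$, is exactly the right point to make explicit.
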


The next lemma tells us how to, from an arbitrary reconciliation, obtain a reconciliation with more convenient structure of lost subtrees.     
\begin{lemma}\label{lem:arrangedLostSubtrees}
	Let $\R$ be a reconciliation. Then there exists a reconciliation $\R_1$ such that if $v_0$ and $v_1$  are free and non-free overlapping trees in $\R_1$, then $v_0\le v_1$ and $\omega(\R)=\omega(\R_1)$.  
\end{lemma}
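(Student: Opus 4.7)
The plan is to iteratively apply switch operations to eliminate every violation of the ordering $v_0 \le v_1$, starting from $\R$. Since every switch preserves weight (Lemma~\ref{lem:switchIsConst}), the resulting $\R_1$ automatically satisfies $\omega(\R_1)=\omega(\R)$.

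I would introduce a potential function to measure progress toward the desired arrangement. Define
\[
\Phi(\R)=\sum_{l} d_S(root(S),\rho(v(l))),
\]
where the sum ranges over the non-free losses $l$ of $\R$ and $v(l)$ denotes the root of the lost subtree of $G'$ containing $l$. Since $G'$ and $S$ are finite, $\Phi$ takes values in a bounded set of non-negative integers.

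Suppose there remains a ``bad pair'': a free tree $T_0$ with root $v_0$ and a non-free tree $T_1$ with root $v_1$ that overlap with $v_0 > v_1$. The overlap condition yields $\rho(v_1)\in V(\rho(T_0))$, so $T_0$ reaches down to $\rho(v_1)$. Pick a non-free leaf $l$ of $T_1$ together with the minimal element $s_0$ of $V(p)\cap V(\rho(T_0))\setminus\{\rho(v_0),\rho(v_1)\}$, where $p$ is the path in $S$ from $\rho(l)$ to $\rho(v_1)$, and choose $t_i\in T_i$ with $\rho(t_i)=s_0$. Apply the switch of Definition~\ref{def:switch} at $t_0,t_1$. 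The subtree $T_1(t_1)$, which contains $l$, is grafted into $T_0$, so every non-free loss inside $T_1(t_1)$ switches its host root from $v_1$ to $v_0$, replacing $d_S(root(S),\rho(v_1))$ by $d_S(root(S),\rho(v_0))$ in its $\Phi$-contribution; since $v_0>v_1$ this is a strict decrease. Non-free losses of $T_1$ outside $T_1(t_1)$ are unaffected, and because $T_0$ is free no non-free losses move in the opposite direction. Hence $\Phi$ strictly decreases, and since $\Phi\ge 0$ the process terminates, at which point no bad pair remains.

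The main obstacle is guaranteeing the existence of a valid $s_0$: Definition~\ref{def:switch} demands that $s_0$ lie strictly below $\rho(v_1)$ on the $l$-branch and in $V(\rho(T_0))$, while the overlap condition alone only places $\rho(v_1)$ there. I would handle this by selecting the bad pair with $v_0$ minimal in $S$ and analyzing the node $t$ of $T_0$ mapped to $\rho(v_1)$: if $t$ is a speciation node of $G'$ then $T_0$ branches to both children of $\rho(v_1)$ in $S$, yielding a valid $s_0$ on the $l$-branch; the degenerate cases ($t$ a duplication or a leaf of $T_0$) require a finer argument, most plausibly a preliminary weight-preserving reorganization that pushes $T_0$'s extension off the $\rho(v_1)$-branch so that the ``overlap'' ceases to be problematic, after which the switch-based reduction applies unconditionally.
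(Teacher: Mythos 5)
Your argument is essentially the paper's own proof: repeatedly apply a switch to any overlapping free/non-free pair violating the ordering, note that each switch preserves the weight by Lemma~\ref{lem:switchIsConst}, and force termination with a potential equal to the total distance from $root(S)$ of the roots hosting non-free losses, which strictly decreases at every step. The applicability concern about the existence of $s_0$ that you raise at the end is not treated in the paper either (its proof simply asserts the switch can be performed on any such pair), so apart from that openly flagged subtlety your proposal matches the published argument.
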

\begin{proof}
	Let $V_{lost}=\{v\mid v \text{ is a lost subtree}\}$. Take  $v_0\in V_{lost}$ such that $\omega(v_0)=1$, and $v_1\in V_{lost}$ such that $\omega(v_1)=0$,  $v_0<v_1$, and $v_0$ is overlapping with $v_1$. By switching $v_0$ and $v_1$ we get $\omega(v_0)=0$, $\omega(v_1)=1$ and $v_0<v_1$. Repeat the process as long as there are trees $v_0,v_1$ as described. We need to prove that this algorithm ends.
	
	Let $d(V_{lost})$ be the total distance of all non-free $v\in V_{lost}$ from $root(S)$. Hence $d(V_{lost})$ is a non-negative integer. Every time, when switching is applied,  $d(V_{lost})$ decreases, hence the algorithm must stop, because  $d(V_{lost})$ cannot decrease indefinitely.   
	
	Switch operation does not change the weight of a reconciliation (Lemma \ref{lem:switchIsConst}).  \qed         
\end{proof}

\begin{algorithm}
	\caption{applies switch operation on lost subtrees }\label{switch}
	\begin{algorithmic}[1]
	\Procedure{Switch}{$\R$} 
	    \State $\mathfrak{T}'$ - the set of all non-free lost subtrees in $\R$
	    \State $\mathfrak{T}_{v'}$ - the set of all free lost subtrees, less than $v'$, switchable with $v'$
        \While{$\mathfrak{T}' \ne \emptyset$}
          \State $v'\in \mathfrak{T}'$ - \textbf{random}
          \State $v\in \mathfrak{T}_{v'}\cup \{NULL\}$
          
          \If{$v$==NULL}
            \State $\mathfrak{T}'=\mathfrak{T}'\backslash \{v'\}$
            \State \textbf{continue} while loop
          \EndIf
          
          \State \textproc{SwitchSubtrees}($v,v'$)
          \State $\mathfrak{T}'=(\mathfrak{T}'\backslash \{v'\})\cup\{v\}$ 
        \EndWhile
    \EndProcedure
    \end{algorithmic}
\end{algorithm}

Procedure \textproc{SwitchSubtrees} is described in Definition \ref{def:switch}.

	\section{The algorithm}\label{sec:algorithm}

In this section, we prove that the algorithm returns an optimal reconciliation, and any optimal reconciliation can be an output of the algorithm. We also prove the remaining lemmas.

All elements are ready to write the main algorithm that generates a random optimal solution.

  Algorithm \ref{randR} gives the main procedure.

  \begin{algorithm}[h]
  	\caption{Random reconciliation}\label{randR}
  	\begin{algorithmic}[1]
  		\Procedure{RandR}{$S,G,\phi$} 
  		 \State Let $\R_{lca}$ be the LCA reconciliation

  		  \State $\R \leftarrow$ \textproc{RaiseSeveralDuplications}($d,\R_{lca}$)
  		
  		 \State $\R_c \leftarrow$ \textproc{AllCompletions}($\R$)
  		 \State Return \textproc{RaiseConversions}($\R_c$) 
  		\EndProcedure
  	 \end{algorithmic}	
  \end{algorithm}

Now we prove a lemma stated earlier.

\begin{proof}[Proof of Lemma \ref{lem:FLessThanZero}]
Let $l$ be the number of assigned losses to $e$ in $\R_1$, $\R$ is the (multiple) reconciliation obtained from $\R_1$ by removing all ($l$) losses from $e$, $k'$ is from the definition of flow. Then $F(e,\R_1)=k'-l$. Therefore, the maximum number of extra losses that we can assign to $e$ in $\R$, without completion cost change, is $k'$ and $k'\le l$.  

It is obvious that $\Delta(\R)=\Delta(\R_1)=\Delta(\R_2)$. Also $\omega(c(\R))<\omega(c(\R_2))$. 

 We have that $\omega(c(\R_2))=\omega(c(\R_1))+1$ or $\omega(c(\R_2))=\omega(c(\R_1))$. Assume that $\omega(c(\R_2))=\omega(c(\R_1))$.

 Observe $c(\R_2)$. Let $t_1,\ldots,t_l,t_{l+1}$ be the lost subtrees with the roots assigned to $p(s)$ (and expanding over $e$). If any of these subtrees are non-free in $c(\R_2)$ then by removing it we get an extension of $\R_1$ that has strictly less weight than $\omega(c(\R_2))=\omega(c(\R_1))$, a contradiction. Therefore all subtrees $t_1,\ldots,t_l,t_{l+1}$ are free in $c(\R_2)$.
 
Let us prove that there is at least one non-free subtree in $c(\R_2)$. Assume the opposite, {\em i.e.} all lost subtrees of $c(\R_2)$ are free. Then we can have an extension of $\R_1$ and $\R$ with all free lost subtrees, by just removing one or all subtrees extending over $e$. Hence $\omega(c(\R))=\omega(c(\R_1))=\omega(c(\R_2))=|\Delta(\R)|$. This means that we can assign at least $l+1$ losses to $e$ in $\R$ without completion cost change. This contradicts the fact that $k'<l+1$. Therefore $c(\R_2)$ has at least one non-free lost subtree.

Let us prove that there exists a chain of lost subtrees $v_1,\ldots, v_{m-1},v_m$ (in $\R_2$) such that $v_1< \ldots < v_m$, $v_i$ overlaps $v_{i+1}$, $(i=1,\ldots,m-1)$, $v_1$ is a non-free tree, $v_2,\ldots,v_m$ are free trees and $v_m$ is a tree assigned to $p(s)$ extending over $e$. 

Assume the opposite. Let $T_S$ be the maximum subtree with root edge $e$ that contains only free lost subtrees (see Figure \ref{fig:maximal_subtree}), and  $f_1,\ldots,f_r$ edges of $S$ that are children of leaf-edges of $T_S$. Because of the maximality of $T_S$ and the assumption that there is no chain leading from non-free tree to one of the trees $t_1,\ldots,t_{l+1}$, we have that there is no tree expanding from inner node of $T_S$ over one of the edges $f_1,\ldots,f_r$. Since $\R_2$ has at least one non-free lost subtree, we have $r\ge 1$, {\em i.e.} edges $f_1,\ldots,f_r$ do exist.

Since $\omega(c(\R))<\omega(c(\R_2))$ and $c(\R_2)$ has only free trees in $T_S$, then there is $i$ such that $\omega(c(\R)(f_i)) < \omega(c(\R_2)(f_i))$. Since no lost subtrees expands from inner node of $T_S$ over $f_i$, we can take the lost subtrees with roots in $c(\R)(f_i)$ and use them in $c(\R_2)$, instead  of the lost subtrees in $c(\R_2)(f_i)$. Thus we obtain an extension of $\R_2$ with strictly less cost than $c(\R_2)$, a contradiction. This means that there is a chain $v_1,\ldots,v_m$ with described properties ($v_1$ is non-free, {\em etc.}).

Now, apply switch operation on $v_i,v_{i+1}$, for every $i=1,\ldots,m-1$. In this way $v_m$, which is one of the trees $t_1,\ldots,t_{l+1}$, becomes non-free. The weight of $c(\R_2)$ is not changed with these switch operations. Now, by removing $v_m$, we obtain an extension of $\R_1$ with strictly less cost than $c(\R_2)$, which contradicts the assumption $\omega(c(\R_2)) = \omega(c(\R_1))$. Therefore $\omega(c(\R_2)) = \omega(c(\R_1))+1$.  \qed
\end{proof}

\begin{figure}[H]	
    \centering 
	\includegraphics{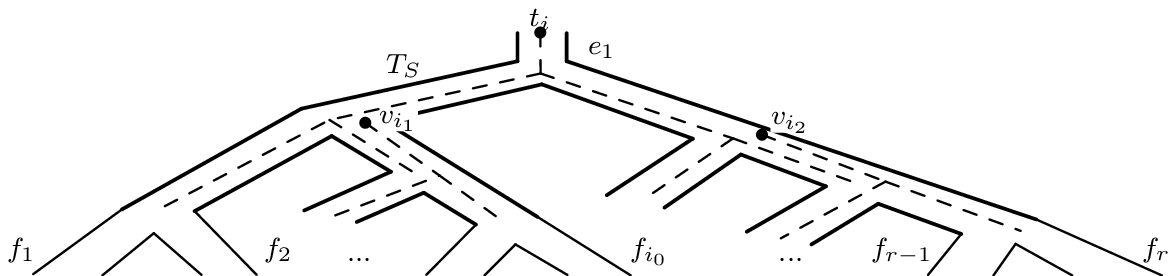}
	\caption{Tree $T_S$ is the maximum subtree of $S(e_1)$ rooted at $e_1$ that contains only free lost subtrees. There are no lost subtrees expanding from $T_S$ over $f_j$. Black dots denote roots of lost subtrees ($t_i,v_{i_1},v_{i_2}$). All non-free lost subtrees of $S(e_1)$ are in $S(f_j)$, $j=\overline{1,r}$}
	\label{fig:maximal_subtree}
\end{figure}

 Let $e=(s,p(s))\in E(S)$, and by $\L'(e)=\L'(e,\R)=\L'(s)=\L'(s,\R)$ denote the number of non-free lost subtrees, in the reconciliation $\R$, with a root assigned to $p(s)\in V(S)$,  expanding over $e$.
\begin{lemma}\label{lem:nonFreeLossesInCompletion}
	Let $\R$ be a reconciliation, $\R_1$ output of \textproc{OneCompletion}$(\R)$, $e\in E(S)$, and $e_l,e_r$ children of $e$. Then
	\begin{enumerate}[(a)]
		\item $F(e,\R)>0 \implies \L'(e,\R_1)=0$;
		\item $F(e,\R)\le 0 \implies \L'(e,\R_1)=\L(e,\R)-\D(e,\R)-max(min(F(e_l,\R),F(e_r,\R)),0)$;
		\item \label{lem:nonFreeLossesInCompletion-c} if  $\R_2$ is another output of \textproc{OneCompletion}$(\R)$, then $\omega(\R_1)=\omega(\R_2)$.  
	\end{enumerate}
\end{lemma}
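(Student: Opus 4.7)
My plan is to prove (a), (b), (c) simultaneously by induction on the edge $e\in E(S)$, bottom-up from the leaf edges, exploiting the recursive flow identity of Lemma~\ref{lem:F(e)_calculation},
$$F(e,\R)=\max\!\bigl(\min(F(e_l,\R),F(e_r,\R)),0\bigr)+\D(e,\R)-\L(e,\R).$$
For the base case, $e$ is a leaf edge, so $F(e_l,\R)=F(e_r,\R)=0$ and the ``extend'' branch of \textproc{ExtendLossIntoFreeTree} is never executable: each processed loss at $e$ must be assigned to one of the $\D(e,\R)$ non-conversion duplications. The outer while-loop of \textproc{OneCompletion} runs as long as such an assignment remains possible, so exactly $\min(\L(e,\R),\D(e,\R))$ losses are made free; the remaining $\max(\L(e,\R)-\D(e,\R),0)$ stay as non-free singleton lost subtrees rooted at $s$. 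This matches (a) when $F(e,\R)>0$ and (b) when $F(e,\R)\le 0$, since $\max(\min(0,0),0)=0$.

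For the inductive step, assume the claim at $e_l,e_r$, and track over the run the counters $a$ (losses assigned at $e$) and $k$ (losses extended at $e$). Each assignment consumes one non-conversion duplication; each extension pushes a fresh loss down to each of $e_l,e_r$, decreasing their current residual flow by one. Reading the $F$-tests in the algorithm dynamically, the while-loop at $e$ terminates exactly when either $a+k=\L(e,\R)$, or $a=\D(e,\R)$ together with the current residual flow at some child having reached zero. A short case analysis on the signs of $F(e_l,\R)$ and $F(e_r,\R)$ then yields
$$a+k=\min\!\bigl(\L(e,\R),\,\D(e,\R)+\max(\min(F(e_l,\R),F(e_r,\R)),0)\bigr),$$
independently of the random interleaving of assign/extend choices. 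In case (a), the hypothesis $F(e,\R)>0$ forces this minimum to be $\L(e,\R)$, so no non-free lost subtree has its root at $s$. In case (b), it equals the local capacity, and the remaining $\L(e,\R)-\D(e,\R)-\max(\min(F(e_l,\R),F(e_r,\R)),0)$ unprocessed losses at $s$ are exactly the non-free lost subtrees counted by $\L'(e,\R_1)$. The inductive hypotheses at $e_l,e_r$ are invoked only to confirm that the extra losses pushed downward are absorbed according to the corresponding formulas.

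Part (c) will then follow immediately: (a) and (b) express $\L'(e,\R_1)$ and the number $a(e)$ of new conversions at each edge as deterministic functions of $\R$, so the total number of non-free losses and of conversions in $\R_1$ are invariant across runs, giving $\omega(\R_1)=\omega(\R_2)$. The main obstacle will be the bookkeeping in the inductive step: I must establish rigorously that the dynamic $F$-tests in the while-condition and inside \textproc{ExtendLossIntoFreeTree} force the local capacity $\D(e,\R)+\max(\min(F(e_l,\R),F(e_r,\R)),0)$ to be fully consumed before the algorithm halts at $e$, regardless of the coin flips. A monovariant on the number of ``processable'' losses, combined with the strict decrease in children's residual flow per extension, should yield this invariant and close the induction.
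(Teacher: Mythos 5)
Your overall strategy---per-edge counting driven by the flow recursion of Lemma~\ref{lem:F(e)_calculation} together with the tests performed by \textproc{OneCompletion} and \textproc{ExtendLossIntoFreeTree}---is the same as the paper's, but your accounting has a genuine gap: it ignores the losses that the algorithm pushes \emph{into} $e$ from the parent edge. When a loss assigned to the parent of $e$ is extended, a fresh loss is created on $e$ and recursively processed there, and these incoming losses consume exactly the same local capacity (the duplications in $\Delta''(e)$ and the child flows of $e_l,e_r$) as the original $\L(e,\R)$ losses. Hence your central identity $a+k=\min\bigl(\L(e,\R),\,\D(e,\R)+\max(\min(F(e_l,\R),F(e_r,\R)),0)\bigr)$ is not correct as stated (an incoming extra loss paired with a duplication of $e$ prevents the original losses from accounting for all of $\D(e,\R)$), and the termination characterization ``$a=\D(e,\R)$ and some child residual flow has reached zero'' fails for the same reason; even your leaf-edge base case is affected, since extras arriving from above compete for the $\D(e,\R)$ duplications. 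The missing ingredient is precisely the influx bound with which the paper opens its proof of (a): at most $F(e,\R)$ extra losses are ever extended over $e$ when $F(e,\R)>0$, and none when $F(e,\R)\le 0$. Establishing it requires its own invariant, namely that the flow values tested during the run equal the input flows minus the number of extras already pushed over the edge (pairing a loss with a duplication of $e$, or extending a loss of $e$ when both child flows are positive, leaves $F(e)$ unchanged; only an extension over $e$ coming from above decreases it). This is not delivered by your induction hypotheses (a)--(b) at $e_l,e_r$, which speak about $\L'$ and not about how the tested flows evolve; what you actually need from below is Lemma~\ref{lem:ExtendLossIntoFreeTree} (pushed-down losses become free) plus this flow invariance.

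Once the influx bound is in place, the conclusion follows essentially as in the paper: in case (a) the total demand at $e$ is at most $\L(e,\R)+F(e,\R)=\D(e,\R)+\max(\min(F(e_l,\R),F(e_r,\R)),0)$, and the paper's four counters ($f_m,l_m,f_d,l_d$, bounding losses freed by extension over $e_l,e_r$ and by pairing with duplications of $e$) give a contradiction with the while-condition of \textproc{OneCompletion} if a non-free loss survived; in case (b) no extras cross $e$, so the count $\L(e,\R)-\D(e,\R)-\max(\min(F(e_l,\R),F(e_r,\R)),0)$ of surviving losses is forced, and (c) follows as you say since $\L'(e,\cdot)$ is determined edge by edge and no new duplications are created. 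So your plan is repairable, but the ``bookkeeping obstacle'' you flag is not mere bookkeeping: it is the cross-edge interaction that constitutes the actual content of parts (a) and (b).
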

\begin{proof}
	Let $l=\L(e,\R), d=\D(e,\R), m=max(min(F(e_l,\R),F(e_r,\R)),0)$.
	
	$(a)$ Since \textproc{OneCompletion} extends losses only into edges $e$, if $F(e)>0$, we have that the number of extra losses, expanded over $e$, is not greater than $F(e)$. Assume that $\R_1$ generates extra $f$ losses in $e$. Hence $f\le F(e,\R)=m+d-l$. Let $f_m$ and $l_m$ be the number of losses made free by extending over $e_l,e_r$, and $f_d$ and $l_d$ are number of losses made free by assigning them to the duplications in $e$. Hence $f_m+l_m\le m$, $f_d+l_d\le d$, $f_m+f_d\le f$, $l_m+l_d\le l$. 
	
	Assume the opposite,  let $\L'(e,\R_1)>0$. Then $f_m+f_d+l_m+l_d<f+l\le d+m$ $\implies$ $f_d+l_d < d$ or $f_m+l_m < m$. Therefore one extra loss can be made free by assigning it to duplication in $e$, or extending it over $e_r$ and $e_l$. This contradicts the procedure \textproc{ExtendLossIntoFreeTree}, which make loss free if $\Delta(e)\ne \emptyset$ or $F(e_1)>0$ and $F(e_2)>0$. 
	
	$(b)$  Since $F(e,\R)\le 0$, $\R_1$ does not extend any new losses over $e$, and $m+d\le l$. At most $m$ losses can be extended over $e_r$ and $e_l$, and at most $d$ losses can be assigned to the duplications in $e$. Therefore, number of losses that remained non-free is $l-d-m$.

	$(c)$ From $(a)$ and $(b)$, we have $\L'(e,\R_1)=\L'(e,\R_2)$, $\forall e\in E(S)$, hence $|\Lambda \backslash \Lambda'(\R_1)|=|\Lambda \backslash \Lambda'(\R_2)|$. Since \textproc{ExtendLossIntoFreeTree} does not create new duplications, we have $\Delta(\R_1)=\Delta(\R_2)=\Delta(\R)$. Therefore $\omega(\R_1)=\omega(\R_2)$. \qed
\end{proof}

\begin{lemma}\label{lem:Completion}	
	Let $\R$ be a minimal reconciliation. Then  \textproc{AllCompletions}$(\R)$ returns a completion of $\R$.
\end{lemma}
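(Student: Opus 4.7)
The plan is to show two things: first, that the procedures called inside \textproc{AllCompletions} only perform operations that yield a reconciliation obtained from $\R$ by loss extensions (so the output is indeed a candidate completion of $\R$); and second, that the cumulative weight of the output equals the weight of an arbitrary (hence any) completion $c(\R)$.

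For the first point, I would quickly verify that each subroutine respects the loss extension structure. \textproc{OneCompletion} repeatedly invokes \textproc{ExtendLossIntoFreeTree}, which either assigns a non-free loss to an existing duplication (turning it into a conversion) or extends a loss downward into the children edges; neither operation alters $\Delta(\R)$ nor the positions of non-loss nodes. \textproc{ExtendLossesIntoNonFreeTrees} similarly only extends losses downward. Finally \textproc{Switch} applies switch operations which, by Definition~\ref{def:switch}, only rearrange the shape of lost subtrees while keeping their roots fixed. Hence the output is an extension of $\R$ with the same $G$, $G'_{nodes}$, $\phi$, and with $\rho, \delta$ extending those of $\R$.

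For the second point, which is the real content, I would argue as follows. Pick any actual completion $c(\R)$. By Lemma~\ref{lem:arrangedLostSubtrees}, applying a sequence of switch operations produces a reconciliation $c'(\R)$ of the same weight whose lost subtrees satisfy the hypothesis of Lemma~\ref{lem:CompletionOutput}: every free subtree $t_i$ that overlaps a non-free subtree $t'_j$ is lower, i.e.\ $t'_j \ge t_i$. By Lemma~\ref{lem:CompletionOutput}, this $c'(\R)$ is a possible output of the composition \textproc{OneCompletion}$(\R)$ followed by \textproc{ExtendLossesIntoNonFreeTrees}$(\R)$. Therefore the weight achievable by these two procedures in sequence is at most $\omega(c'(\R)) = \omega(c(\R))$. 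Conversely, by Lemma~\ref{lem:nonFreeLossesInCompletion}(\ref{lem:nonFreeLossesInCompletion-c}) every output of \textproc{OneCompletion} has the same weight, and by Lemma~\ref{lem:ExtendNonFreeLosses} the weight is preserved by \textproc{ExtendLossesIntoNonFreeTrees}. Hence the weight after the first two procedures equals $\omega(c(\R))$ on every execution path. The final \textproc{Switch} step preserves weight by Lemma~\ref{lem:switchIsConst}, so the output of \textproc{AllCompletions}$(\R)$ has weight exactly $\omega(c(\R))$, i.e.\ it is a completion of $\R$.

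The main obstacle is the second point, and more specifically the reliance on Lemma~\ref{lem:CompletionOutput} to realise an arbitrary (appropriately arranged) completion as an output of the first two procedures; the non-trivial combinatorial control of free vs.\ non-free subtrees that happens inside that lemma is what powers the whole argument. The arrangement step via Lemma~\ref{lem:arrangedLostSubtrees} is essential to reduce to its hypothesis, and the fact that switch operations preserve weight is what makes it legitimate to end \textproc{AllCompletions} with a \textproc{Switch} call without affecting the bound. \qed
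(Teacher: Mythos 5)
Your proposal is correct and follows essentially the same route as the paper: apply Lemma \ref{lem:arrangedLostSubtrees} to a completion so that Lemma \ref{lem:CompletionOutput} realises it as one possible output of \textproc{OneCompletion} followed by \textproc{ExtendLossesIntoNonFreeTrees}, then use Lemma \ref{lem:nonFreeLossesInCompletion}(\ref{lem:nonFreeLossesInCompletion-c}) and Lemma \ref{lem:ExtendNonFreeLosses} to conclude that every execution path attains that same (minimum) weight, and finally Lemma \ref{lem:switchIsConst} to dispose of the closing \textproc{Switch}. Your explicit check that each subroutine only performs loss extensions corresponds to the paper's appeal to Lemmas \ref{lem:ExtendLossIntoFreeTree} and \ref{lem:nonFreeTree}, so there is no substantive difference.
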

\begin{proof}
	Let $\R_1$ be a reconciliation from Lemma \ref{lem:arrangedLostSubtrees}, obtained by applying switch operations on $c(\R)$. Then $\omega(\R_1)=\omega(c(\R))$ and $\R_1$ satisfies the conditions from Lemma \ref{lem:CompletionOutput}. Hence $\R_1$ is a possible output of the series of procedures \textproc{OneCompletion}$(\R)$, \textproc{ExtendLossesIntoNonFreeTrees}$(\R)$. 
	
	Let $\R_2$ be another output of this series of procedures with the input $\R$. From Lemmas \ref{lem:ExtendLossIntoFreeTree} and \ref{lem:nonFreeTree} we have that $\R_2$ is an extension of $\R$. From Lemmas \ref{lem:nonFreeLossesInCompletion} (\ref{lem:nonFreeLossesInCompletion-c}) and \ref{lem:ExtendNonFreeLosses} we have $\omega(\R_1)=\omega(\R_2)$. Since $\R_1$ is a completion of $\R$, we have  $\R_2$ is a completion of $\R$.   
	
	Since \textproc{Switch} does not change the weight of a reconciliation (Lemma \ref{lem:switchIsConst}) and $\R_2$ is a completion of $\R$, we have that \textproc{AllCompletions}$(\R)$ is also a completion of $\R$. \qed
\end{proof}

\begin{theorem}\label{th:theorem_1}
 	Algorithm $\ref{randR}$ returns an optimal solution.
\end{theorem}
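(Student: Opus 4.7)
The plan is to track optimality along each of the four stages of Algorithm \ref{randR} and verify that no stage can increase the weight above the optimum.

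First, by Theorem \ref{th:LCAopt}, any completion of $\R_{lca}$ is an optimal reconciliation. In particular, $\omega(c(\R_{lca}))$ equals the minimum cost. Next, I would argue that the intermediate reconciliation $\R$ produced by \textproc{RaiseSeveralDuplications}$(d,\R_{lca})$ still satisfies $\omega(c(\R)) = \omega(c(\R_{lca}))$. This follows by induction on the number of duplications raised, applying Lemma \ref{lem:Changeposition} (equivalently Lemma \ref{lem:duplRaising_in_optimal}): at each iteration, \textproc{RaiseDuplication} only moves a duplication $d$ to a position inside the list returned by \textproc{PossiblePositions}, and this list is built precisely to enforce condition \eqref{eq:star} together with the positivity of $F$ on the two children edges of the target. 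Lemma \ref{lem:duplRaising_F>0} then guarantees that the cost of the completion is unchanged at each step.

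Then, applying \textproc{AllCompletions} to $\R$ produces $\R_c$, and by Lemma \ref{lem:Completion} this is a completion of $\R$. Therefore $\omega(\R_c) = \omega(c(\R)) = \omega(c(\R_{lca}))$, so $\R_c$ is optimal. Finally, \textproc{RaiseConversions}$(\R_c)$ is applied; by Lemma \ref{lem:RaiseConversions} this procedure does not change the weight of the reconciliation (the raised conversions reuse an existing lost subtree rather than creating a new loss, and the loss/conversion counts are preserved). Hence the output has the same weight as $\R_c$, which is optimal.

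The only delicate point is the inductive step for the repeated duplication raising: one must verify that after each raising, the next duplication $d$ taken by \textproc{RaiseSeveralDuplications} still satisfies the hypotheses of Lemma \ref{lem:duplRaising_F>0} when evaluated against the current (updated) reconciliation, not merely against $\R_{lca}$. Processing the duplications from top to bottom of $V(G)$ (as the algorithm does) is what makes this work, because moving a lower duplication upward cannot invalidate condition \eqref{eq:star} for a duplication that has already been processed higher up, and the test $F(e_1)>0$, $F(e_2)>0$ is re-checked by \textproc{PossiblePositions} before any actual raising. Once this ordering invariant is in place, the chain of equalities $\omega(\textproc{RandR}(S,G,\phi)) = \omega(\R_c) = \omega(c(\R)) = \omega(c(\R_{lca})) = \text{OPT}$ yields the theorem. \qed
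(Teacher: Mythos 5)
Your proof is correct and follows essentially the same route as the paper: Theorem \ref{th:LCAopt} for optimality of LCA completions, Lemma \ref{lem:Changeposition} (via Lemmas \ref{lem:duplRaising_F>0} and \ref{lem:duplRaising_in_optimal}) for the duplication-raising phase, Lemma \ref{lem:Completion} for \textproc{AllCompletions}, and Lemma \ref{lem:RaiseConversions} for the final weight-preserving step. The extra care you take with the inductive step of \textproc{RaiseSeveralDuplications} is a welcome explicit treatment of what the paper's Lemma \ref{lem:Changeposition} already encapsulates, since Lemma \ref{lem:duplRaising_in_optimal} applies to any minimal reconciliation with optimal completions and the $F>0$ tests are indeed re-evaluated on the current reconciliation by \textproc{PossiblePositions}.
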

\begin{proof}
 The algorithm starts with LCA reconciliation $\R_1$. LCA's completion is an optimal reconciliation (Theorem \ref{th:LCAopt}), therefore completion of $\R_1$ is an optimal reconciliation. 
 
 Let $\R_2$ be an output of \textproc{RaiseSeveralDuplications}$(\R_1)$. Then $c(\R_2)$ is an optimal reconciliation (Lemma \ref{lem:Changeposition}).   
 
 Let $\R_3$ be an output of \textproc{AllCompletions}$(\R_2)$. Then (Lemma \ref{lem:Completion}) it is s completion of $\R_2$, hence $\R_3$ is an optimal reconciliation. 
 
 Assume that $\R_4$ is an output of \textproc{RaiseConversions}$(\R_3)$. From Lemma \ref{lem:RaiseConversions} we have $\omega(\R_4)=\omega(\R_3)$. Hence $\R_4$ is an optimal reconciliation. Note that $\R_4$ is an output of \textproc{RandR}$(S,G,\phi)$. \qed 
\end{proof}

Next lemma states that all duplications raised on a path going though a vertex with non positive flow on its children are conversions.

\begin{lemma}\label{lem:Completion_2}
	Let $\R$ be a ZF reconciliation  such that
        if $v',v$ are non-free and free lost subtrees that overlap, then $v\le v'$.
	Then $\R$ is a possible output of \textproc{ExtendLossesIntoNonFreeTrees}.
\end{lemma}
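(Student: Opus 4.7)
The plan is to mirror the proof of Lemma \ref{lem:CompletionOutput}, but restricted to non-free lost subtrees, since the hypothesis that free subtrees lie below overlapping non-free subtrees eliminates the need for switch manipulations during construction. I would start by building an input reconciliation $\R_*$: contract each non-free lost subtree $t'_j$ of $\R$ down to its root edge, removing its internal speciations and its free leaves, and unpairing the duplications formerly bound to those free leaves. Because every removed free leaf is a loss with zero cost contribution, $\omega(\R_*)=\omega(\R)$, and $\R$ is recovered from $\R_*$ by re-extending each contracted non-free loss into the full $t'_j$. So the task reduces to showing that suitable random choices in \textproc{ExtendLossesIntoNonFreeTrees}$(\R_*)$ reproduce precisely these extensions.

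Order the contracted non-free losses $l_1,\ldots,l_m$ as the procedure iterates over $\Sigma_1$. Set $\R^{(0)}=\R_*$, and let $\R^{(j)}$ be obtained from $\R^{(j-1)}$ by running \textproc{ExtendOneLossIntoNonFreeTree}$(\R^{(j-1)},l_j)$ with the choices needed to reproduce $t'_j$; the target is $\R^{(m)}=\R$. Since $\R$ is optimal (ZF reconciliations are optimal by definition), Lemma \ref{lem:duplic_in_G} forces every internal node of $t'_j$ to be a speciation, so each such node corresponds to a species node where $t'_j$ branches into both children edges. I would then proceed by induction on the depth inside $t'_j$, at each step reducing the task to checking the gate $F(e_1,\R^{(j-1)})>0$ and $F(e_2,\R^{(j-1)})\le 0$ at the branching, where WLOG $e_2$ carries the non-free continuation and $e_1$ carries the free branch of $t'_j$.

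The critical — and hardest — step is establishing this flow gate in $\R^{(j-1)}$ at every branching of every $t'_j$. The argument would parallel the case analysis in the proof of Lemma \ref{lem:CompletionOutput}, obtaining a contradiction with the optimality of $\R$ in each failure mode. If $F(e_2,\R^{(j-1)})>0$, then by Lemma \ref{lem:nonFreeLossesInCompletion}(a) applied within $S(e_2)$, the non-free leaf of $t'_j$ lying under $e_2$ could be absorbed into a free position in a completion, producing a reconciliation strictly cheaper than $\R$. Symmetrically, if $F(e_1,\R^{(j-1)})\le 0$, then the free branch of $t'_j$ extending into $e_1$ would force at least one additional loss in $S(e_1)$ to become non-free in any completion, again strictly reducing the cost below $\omega(\R)$. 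Either contradiction rules out the failure of the gate.

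The overlap hypothesis is what makes this local flow analysis tractable: it guarantees that no other non-free lost subtree sits above $e_1$ in such a way that it could siphon off the positive-flow budget required for the free branch of $t'_j$, and, symmetrically, that no free lost subtree above $e_2$ can be reshuffled so as to inflate $F(e_2)$ beyond what the non-free leaf of $t'_j$ accounts for. The main obstacle I anticipate is precisely this bookkeeping — relating $F(\cdot,\R^{(j-1)})$ to $F(\cdot,\R)$ across the intermediate extensions, since each previously processed $t'_i$ locally alters the flow under $\rho(t'_i)$. The overlap condition is expected to confine these alterations to regions disjoint from the branching edges of the $t'_j$ currently being constructed, so that the flow gates can be read off from $\R$ itself up to known, cost-neutral adjustments.
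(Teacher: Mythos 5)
There is a genuine gap, and it comes in two parts. First, by the paper's stated convention, ``$\R$ is a possible output of \textproc{ExtendLossesIntoNonFreeTrees}'' (with no argument) means as a sub-procedure of the main algorithm, receiving as input whatever the earlier stages produce; this is also how the lemma is used later, in the proof of Lemma \ref{lem:SwitchOutput}. Your proof constructs an input $\R_*$ by hand (free lost subtrees of $\R$ intact, non-free ones contracted to their root edges) and only argues about the last procedure, never showing that the pipeline can actually reach $\R_*$. That reachability is where the hypotheses of the lemma really act: the ZF condition is needed so that the minimization of $\R$ is a possible output of \textproc{RaiseDuplication} (Lemma \ref{lem:RaiseDuplication}), and the overlap condition ($v\le v'$ for overlapping free/non-free trees) is needed so that \textproc{OneCompletion} can reproduce exactly the free lost subtrees of $\R$ (Lemma \ref{lem:CompletionOutput_1}). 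Tellingly, your argument never uses the ZF hypothesis except to say $\R$ is optimal, which signals that this half of the statement has been lost.

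Second, for the half you do address, you propose to re-derive the flow-gate conditions rather than invoke Lemma \ref{lem:CompletionOutput}, which is exactly the lemma the paper's proof applies (its Cases 1 and 2 establish, by contradiction with optimality, that along $\rho(t'_j)$ the gates of \textproc{ExtendOneLossIntoNonFreeTree} and of the recursive calls to \textproc{ExtendLossIntoFreeTree} are satisfied). Your re-derivation is incomplete precisely at the hard points: the one-line contradictions (via Lemma \ref{lem:nonFreeLossesInCompletion}(a) and its ``symmetric'' counterpart) gloss over the constructions the paper needs there — chains of overlapping subtrees, the maximal free subtree $T_S$, and switch-type rearrangements as in the proofs of Lemmas \ref{lem:FLessThanZero} and \ref{lem:CompletionOutput_1}; the exact reproduction of the free branches hanging off each non-free subtree (the content of Case 2 of Lemma \ref{lem:CompletionOutput}) is not addressed, since Lemma \ref{lem:ExtendLossIntoFreeTree} only yields \emph{some} free tree, not the specific one in $\R$; and you explicitly leave open the bookkeeping relating $F(\cdot,\R^{(j-1)})$ to $\R$ across the intermediate extensions, merely expecting the overlap condition to handle it. The paper's own proof is a short reduction: Lemma \ref{lem:RaiseDuplication} gives the minimization, and Lemma \ref{lem:CompletionOutput} (whose proof already contains the case analysis you are trying to redo) gives $\R$ as an output of \textproc{OneCompletion} followed by \textproc{ExtendLossesIntoNonFreeTrees}. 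Structuring your argument that way would both close the reachability gap and spare you the unfinished flow analysis.
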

\begin{proof}
  From Lemma \ref{lem:RaiseDuplication} we have that $\R'$ is a possible output of \textproc{RaiseDuplication}, where $\R'$ is the minimization of $\R$. From Lemma \ref{lem:CompletionOutput} and and this Lemma condition, $\R$ is a possible output of the series of procedures $\textproc{OneCompletion}(\R')$, $\textproc{ExtendLossesIntoNonFreeTrees}(\R')$. Hence $\R$ is a possible output of \textproc{ExtendLossesIntoNonFreeTrees}.  \qed   
\end{proof}

\begin{lemma}\label{lem:SwitchOutput}
	Let $\R$ be a ZF reconciliation. Then $\R$ is a possible output of \textproc{Switch}. 
\end{lemma}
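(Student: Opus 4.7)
My plan is to exhibit a reconciliation $\R_1$ that can serve as the input handed to \textproc{Switch} by \textproc{ExtendLossesIntoNonFreeTrees}, together with a particular sequence of random choices in \textproc{Switch} that drives $\R_1$ back to the target $\R$.

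First, I would apply Lemma \ref{lem:arrangedLostSubtrees} to $\R$ to produce such an $\R_1$ through an explicit sequence of switches $\sigma_1, \ldots, \sigma_k$, each swapping a non-free tree $v_0^{(i)}$ with an overlapping free tree $v_1^{(i)}$ lying above it; the end state $\R_1$ has no overlapping free/non-free pair with the free one higher, so it satisfies the hypothesis of Lemma \ref{lem:Completion_2}. Since switches preserve the total weight (Lemma \ref{lem:switchIsConst}) and do not alter the number of duplications or losses assigned to any edge of $S$, the flow values $F(e,\cdot)$ and the sets $X(e,\cdot)$ are unchanged by the $\sigma_i$'s, hence $\R_1$ is again a ZF reconciliation. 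Lemma \ref{lem:Completion_2} then gives that $\R_1$ is a possible output of \textproc{ExtendLossesIntoNonFreeTrees}, so it is a legitimate input to \textproc{Switch} inside \textproc{AllCompletions}.

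Next, since a switch is self-inverse, applying $\sigma_k, \sigma_{k-1}, \ldots, \sigma_1$ in reverse order to $\R_1$ retrieves $\R$. Each reverse step selects the tree $v_1^{(i)}$, which is non-free in the current state, together with the overlapping free tree $v_0^{(i)} < v_1^{(i)}$, and switches them. This is exactly one iteration of the while loop of \textproc{Switch}: pick $v' \in \mathfrak{T}'$ and $v \in \mathfrak{T}_{v'}$. With positive probability the algorithm makes these very selections at each step. Once $\R$ is reached after the $k$ reversed switches, any non-free tree still present in $\mathfrak{T}'$ is flushed by selecting the NULL branch, which likewise has non-null probability, so the procedure terminates with output $\R$.

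The main obstacle I expect is the bookkeeping of $\mathfrak{T}'$ during the reverse pass: at the moment $\sigma_i$ is to be reversed, $v_1^{(i)}$ must still belong to $\mathfrak{T}'$ and $v_0^{(i)}$ must still be a free tree switchable with $v_1^{(i)}$. This should follow by induction on $i$, using that each switch only relocates two specific trees so that the pair $(v_0^{(i)}, v_1^{(i)})$ is untouched by the later forward switches (and hence correctly restored by the earlier reverse switches), together with the fact that the switchability condition of Definition \ref{def:switch} is symmetric in the two trees. Multiplying the non-null probabilities of all successive choices then yields a non-null probability that \textproc{Switch} outputs exactly $\R$.
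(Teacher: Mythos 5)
Your argument is correct and is essentially the paper's own proof: you rearrange the lost subtrees by switches so that every non-free tree lies above any overlapping free tree, note this preserves weight, flows and hence ZF-ness, invoke Lemma \ref{lem:Completion_2} to realize the rearranged reconciliation $\R_1$ as an output of \textproc{ExtendLossesIntoNonFreeTrees}, and then run \textproc{Switch} performing the self-inverse switches in reverse order (flushing leftover non-free trees via the NULL choice). The only cosmetic difference is that you cite Lemma \ref{lem:arrangedLostSubtrees} for the forward rearrangement and make the ZF-preservation and probability bookkeeping explicit, whereas the paper redoes the rearrangement and its termination argument inline and leaves those details implicit.
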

\begin{proof}
	Let $v'$ and $v$ be non-free and free lost subtrees in $\R$. If they overlap and $v'<v$, apply switch operation. Previous procedure repeat as long as there are such trees. Let us prove that the procedure will stop.
	
	Let $d$ be the sum of the distances of the roots of the non-free subtrees to $root(S)$. With every switch operation $d$ decreases. Since $d\ge 0$, it cannot decrease indefinitely. Hence the procedure will stop.
	
	The reconciliation, obtained in this way, denote by $\R_1$. Now, $\R_1$ satisfies the conditions in Lemma \ref{lem:Completion_2}, hence it is a possible output of \textproc{ExtendLossesIntoNonFreeTrees}. 
	
	So, by \textproc{ExtendLossesIntoNonFreeTrees} we obtain $\R_1$, and by $\textproc{Switch}(\R_1)$, where switch operations are applied in the reversed order, we obtain $\R$. \qed           
 \end{proof}

\begin{theorem}\label{th:allGenerated}
	Any optimal solution can be generated by Algorithm \ref{randR}.
\end{theorem}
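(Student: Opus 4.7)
The plan is to run Algorithm \ref{randR} in reverse. Given an arbitrary optimal reconciliation $\R^\ast$, I will exhibit, stage by stage, a sequence of random choices inside the algorithm that produces exactly $\R^\ast$. Each stage corresponds to one of the subprocedures of \textproc{RandR} and is inverted by invoking a lemma already established.

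Stage 1 (undoing \textproc{RaiseConversions}). Apply Lemma \ref{lem:L>0_X=Y=0} to $\R^\ast$ to obtain a zero-flow reconciliation $\R_{zf}$ by lowering every conversion that sits above an edge of negative flow. By Lemma \ref{lem:RaiseConversions} the weight is preserved, so $\R_{zf}$ is still optimal; and $\R^\ast$ is obtained from $\R_{zf}$ by raising precisely the conversions just lowered, i.e.\ $\R^\ast$ is a possible output of $\textproc{RaiseConversions}(\R_{zf})$. Stage 2 (undoing \textproc{Switch}). By Lemma \ref{lem:SwitchOutput}, $\R_{zf}$ is a possible output of \textproc{Switch}; its proof explicitly constructs a reconciliation $\R_1$, obtained from $\R_{zf}$ by reorganizing overlapping free and non-free lost subtrees so that free subtrees lie below overlapping non-free ones, and then recovers $\R_{zf}$ from $\R_1$ by applying the same switch operations in the reverse order.

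Stage 3 (undoing \textproc{ExtendLossesIntoNonFreeTrees} and \textproc{OneCompletion}). The reconciliation $\R_1$ satisfies the hypothesis of Lemma \ref{lem:Completion_2}, hence is a possible output of the chain $\textproc{OneCompletion}$ followed by $\textproc{ExtendLossesIntoNonFreeTrees}$ applied to the unique minimization $\R_{min}$ of $\R_{zf}$ (Lemma \ref{lem:positions}). By Lemma \ref{lem:optimal->minimal}, $\R_{zf}$ is a completion of $\R_{min}$, so the weight is preserved throughout. Stage 4 (undoing \textproc{RaiseSeveralDuplications}). Since $c(\R_{min}) = \R_{zf}$ is a ZF reconciliation, Lemma \ref{lem:RaiseDuplication} asserts that $\R_{min}$ is a possible output of \textproc{RaiseDuplication} starting from $\R_{lca}$; iterating this over the duplications of $V(G)$ in the top-down order used by \textproc{RaiseSeveralDuplications} yields $\R_{min}$ from $\R_{lca}$.

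Stitching Stages 1--4 together produces a sequence of random choices inside \textproc{RandR} whose output is exactly $\R^\ast$, establishing the theorem. The delicate point, and the only place care is needed, is the hand-off between stages: the reconciliation produced by each reverse step must be the exact object that the lemma of the next step assumes as input. This matching is not automatic in general, but the explicit constructive proofs of Lemmas \ref{lem:L>0_X=Y=0}, \ref{lem:SwitchOutput}, \ref{lem:Completion_2}, and \ref{lem:RaiseDuplication} each produce the precise intermediate object needed by the previous stage, so the composition is consistent. No further lemma is required; the theorem then follows in one paragraph by chaining the four stages.
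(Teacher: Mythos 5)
Your proposal is correct and follows essentially the same route as the paper: the paper's proof also lowers conversions via Lemma \ref{lem:L>0_X=Y=0}, invokes Lemma \ref{lem:SwitchOutput} (whose proof already packages your Stages 2--4 through Lemmas \ref{lem:Completion_2} and \ref{lem:RaiseDuplication}), and then recovers the target by applying \textproc{RaiseConversions} with the lowering reversed. You merely unpack explicitly what the paper delegates to those lemmas, so the argument is the same decomposition stated at a finer granularity.
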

\begin{proof}
  Let $\R$ be an arbitrary optimal reconciliation.	By lowering some conversions, we can obtain a ZF reconciliation $\R_1$ such that $\omega(\R_1)=\omega(\R)$ (see Lemma \ref{lem:L>0_X=Y=0}).
  
  By Lemma \ref{lem:SwitchOutput},  $\R_1$ is obtainable by \textproc{Switch}. 
  
  So, $\R_1$ is a possible output of \textproc{Switch}, and $\R$ is a possible output of $\textproc{RaiseConversions}(\R_1)$, if conversion raising is applied in the reversed order.           \qed
\end{proof}

\begin{theorem}
 Algorithm \ref{randR} has time complexity $O(m^2+m\cdot n)$.
\end{theorem}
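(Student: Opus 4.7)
The plan is to bound the cost of each of the four procedures composing \textproc{RandR}, with $m = |V(S)|$ and $n = |V(G)|$. As a preliminary, observe that every reconciliation encountered satisfies $|V(G')| = O(mn)$: there are $O(n)$ non-loss nodes, and each ancestral gene lineage induces losses along a path of $S$ of length at most $O(m)$. I would first precompute $\R_{lca}$, the values $\rho_{lca}(x)$ for all $x\in V(G)$, and the counts $\D(e,\R_{lca}), \L(e,\R_{lca})$ for all $e\in E(S)$ in the standard $O(m+n)$ time. A single bottom-up sweep of $S$ then yields $F(e,\R_{lca})$ for every $e$ in $O(m)$ additional time using the recurrence of Lemma \ref{lem:F(e)_calculation}.

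Next I would analyse the four phases. (1) \textproc{RaiseSeveralDuplications} considers the $O(n)$ duplications of $\R_{lca}$ (Theorem \ref{th:duplic_const}); for each, \textproc{PossiblePositions} walks an ancestor path in $S$ of length at most $O(m)$, and each unit of raising modifies $\D, \L$ on $O(1)$ edges. Refreshing $F$ on the affected ancestor path costs $O(m)$ per duplication, for $O(mn)$ overall. (2) \textproc{AllCompletions}: \textproc{OneCompletion} and \textproc{ExtendLossesIntoNonFreeTrees} visit each loss at most once, and the total number of losses is $O(mn)$, so each contributes $O(mn)$. The \textproc{Switch} phase is controlled by the potential used in Lemma \ref{lem:arrangedLostSubtrees}: the sum of distances from the non-free roots to $root(S)$ is at most $O(m^2)$ and strictly decreases with each switch, giving an $O(m^2)$ bound on this phase and accounting for the $m^2$ term of the theorem. (3) \textproc{RaiseConversions} processes at most $O(n)$ conversions, each raising at most $O(m)$ levels along a path of $S$, for $O(mn)$ total. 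Summing, $O(m+n) + O(mn) + O(mn) + O(m^2) + O(mn) = O(m^2 + mn)$.

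The main obstacle is the amortised maintenance of the flows $F(e)$: a naive recomputation after every raising would yield $O(m \cdot (m+n))$ per step and break the bound. I would prove that raising (or lowering) a duplication or conversion by one level affects $\D,\L$, and hence $F$, only on an ancestor path in $S$, so that Lemma \ref{lem:F(e)_calculation} can be applied locally and the total refresh work telescopes along raisings to the claimed $O(mn)$. A secondary point is bounding the number of \textproc{Switch} calls by the potential rather than by a naive product of candidate pairs, which is where the $O(m^2)$ contribution is tight; one has to argue that each switch reduces the potential by at least one and that the potential is initially $O(m^2)$, which follows because non-free lost subtrees are supported on paths of $S$ of length $O(m)$.
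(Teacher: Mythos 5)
There is a genuine gap, and it sits exactly where your $m^2$ term comes from: the \textproc{Switch} phase. You bound the number of switches by a potential equal to the sum of distances from the roots of non-free lost subtrees to $root(S)$ and claim this potential is initially $O(m^2)$ ``because non-free lost subtrees are supported on paths of $S$ of length $O(m)$''. That only shows each individual distance is $O(m)$; to get $O(m^2)$ you would also need the \emph{number} of non-free lost subtrees to be $O(m)$, which you never establish and which does not follow from anything you use. Indeed, under your own accounting the number of losses may be $\Theta(mn)$, and since $\delta$ is injective into $\Delta$ only $O(n)$ of them can ever be free, so the number of non-free lost subtrees can be $\Theta(mn)$ and your potential is only $O(m^2n)$. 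The paper's proof charges this phase as $O(m)$ switches per non-free loss, i.e. $O(|\Lambda\backslash\Lambda'|\cdot m)$, and then reaches $O(m^2+mn)$ by bounding the number of losses by $O(m+n)$; your version, which explicitly allows $O(mn)$ losses, cannot reach the stated bound at this step without a genuinely new argument bounding the number of non-free losses (or of switches) far more tightly.

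Two smaller points. First, your $O(mn)$ charge for \textproc{OneCompletion} and \textproc{ExtendLossesIntoNonFreeTrees} silently assumes $O(1)$ amortized work per visited loss, whereas each call of \textproc{ExtendLossIntoFreeTree} also forms $\Delta''(e)$ and queries/updates flows; you describe maintaining $F$ under raisings but not under the pairings and extensions performed during completion (the paper instead charges $O(m)$ per loss, again relying on its $O(m+n)$ loss count). This is patchable (bucket duplications by edge, update $F$ along the affected path), but as written it is asserted, not proved. Second, the preliminary claim $|V(G')|=O(mn)$ is true but your justification only counts lost-subtree roots; to bound the sizes of the extended lost subtrees you need that free leaves number at most $|\Delta|=O(n)$ (injectivity of $\delta$) and that in an optimal reconciliation a lost subtree has at most one non-free leaf.
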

\begin{proof}
	  Let $n=|V(G)|$, $m=|V(S)|$, then $E(G)\in O(n), E(S)\in O(m)$. LCA reconciliation can be determined in linear time (see \cite{Chauve2009}), say $O(m+n)$. 
	  
	  Algorithm \ref{ExtendLoss} forms a set $\Delta''(e)$ and it takes $O(m)$ time. It extends a loss into free tree. The maximum size of a (non-)free tree is $O(m)$. Algorithm \ref{ExtendFreeLosses} applies Algorithm \ref{ExtendLoss} $|\Sigma\backslash \Sigma'|\le |\Sigma|$ times, hence it has time complexity $O(|\Sigma|\cdot m)$.
	  
	  Algorithm \ref{PossiblePositions}  determines possible new positions for a duplication $d$. Since the height of the tree $S$ is $O(m)$, we have that the number of possible positions is also $O(m)$ and this is the complexity of Algorithm \ref{PossiblePositions}. Algorithm \ref{dPosition} calls Algorithm \ref{PossiblePositions} and generates $k\in O(m)$ new losses. Hence the complexity of Algorithm \ref{dPosition} is $O(m)$. Algorithm \ref{RaiseSeveralDuplications} calls Algorithm \ref{dPosition} $|\Delta|$ times and its complexity is $O(|\Delta|\cdot m)$.
	  
	  Algorithm \ref{raise_one} raises one conversion. Maximal raise height is $O(m)$ and this is the complexity of the algorithm. Algorithm \ref{raise_some} calls Algorithm \ref{raise_one} $|C|$ times ($C$ - the set of all conversions). Therefore the complexity of Algorithm \ref{raise_some} is $O(|C|\cdot m).$   
	  
	  Algorithm \ref{ExtendOneNonFreeLoss} extends a loss into a non-free tree. The size of non-free tree is $O(m)$ an this is the complexity of the algorithm. Algorithm \ref{ExtendNonFreeLosses} uses Algorithm \ref{ExtendOneNonFreeLoss} $|\Sigma_1|$ times, and its complexity is $O(|\Sigma_1|\cdot m)$. 
	  
	  Algorithm \ref{switch} applies a switch operation on lost subtrees. With every switch, a root of a subtree with non-free loss is further away from $root(S)$. Longest distance from $root(S)$ is $O(m)$. Switch operation always include one non-free loss. Therefore the complexity of this algorithm if $O(|\Sigma\backslash\Sigma'|\cdot m)$.
	  
	  When we add corresponding complexities we get $O(m+n) + O(|\Sigma|\cdot m) + O(|\Delta|\cdot m) + O(|C|\cdot m) + O(|\Sigma_1|\cdot m) + O(|\Sigma\backslash\Sigma'|\cdot m)$. Since $|\Sigma|, |\Sigma_1|, |\Sigma\backslash\Sigma'| \in O(m+n)$, $|\Delta|\in O(n)$, we have that the complexity of the main algorithm is $O(m^2+m\cdot n)$.    \qed
\end{proof}

	\section{Conclusion}

In this paper we give a polynomial algorithm that returns an optimal reconciliation in duplication, loss, conversion model. The algorithm can return any optimal reconciliation with a non-zero probability, and can enumerate the whole space of solutions.

A natural extension would be a uniform sampling of all solutions in order to statistically assess properties of the solution space. Because of the switch operation, this could be achieved by an Markov chain Monte Carlo method. Future work is to define adequate transition probabilities to ascertain fast convergence.

An interesting problem that we leave open for further research is the weighted case. Unfortunately the approach, used in this paper, is not useful for this case. A completion of LCA reconciliation does not have to be an optimal reconciliation (see Figure \ref{fig:weighted_case}). It might be necessary to raise some speciations from $V(G)$ in order to obtain an optimal solution. 

Adding transfers and recombinations significantly increases the complexity of the problem.

\begin{figure}[!h]
    \centering 
    \includegraphics{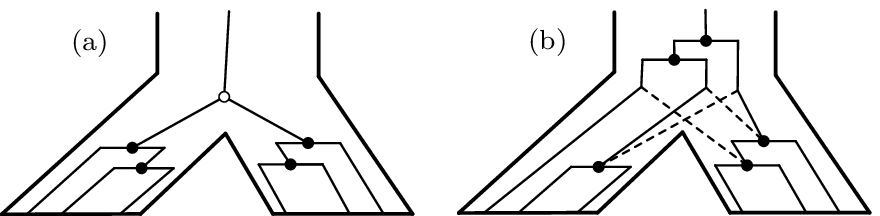}
	\caption{Weighted case and $(d,l,c)=(2,1,1)$. (a) LCA reconciliation is equal to its completion (because there are no losses), and the weight is $4d=8$. (b) The speciation and duplication are raised. Speciation is now duplication and three new losses are added. The weight is $2d+3c=7$}
    \label{fig:weighted_case}
\end{figure}



\bibliographystyle{spbasic}      


\bibliography{bibliography} 
\end{document}